\documentclass[11pt]{article}

\usepackage[pagebackref,colorlinks]{hyperref}
\usepackage{amsmath} 
\usepackage{amsthm} 
\usepackage{amssymb}	
\usepackage{graphicx} 
\usepackage{multicol} 
\usepackage{multirow}
\usepackage{color}
\usepackage[dvips,letterpaper,margin=1in,bottom=1in]{geometry}
\usepackage[capitalize,noabbrev]{cleveref}

\usepackage[utf8]{inputenc}
\usepackage[english]{babel}

\usepackage{mathtools}
\usepackage[shortlabels]{enumitem}

\newtheorem{theorem}{Theorem}[section]

\newtheorem{lemma}[theorem]{Lemma}
\newtheorem{corollary}[theorem]{Corollary}
\newtheorem{proposition}[theorem]{Proposition}
\newtheorem{fact}[theorem]{Fact}

\newtheorem{definition}[theorem]{Definition}

\newcommand{\braket}[2]{\left< #1 \vphantom{#2} \middle| #2 \vphantom{#1} \right>} 
\newcommand{\ketbra}[2]{\ensuremath{\ket{#1}\!\bra{#2}}}

\DeclarePairedDelimiter\rbra{\lparen}{\rparen}
\DeclarePairedDelimiter\sbra{\lbrack}{\rbrack}
\DeclarePairedDelimiter\cbra{\{}{\}}
\DeclarePairedDelimiter\abs{\lvert}{\rvert}
\DeclarePairedDelimiter\Abs{\lVert}{\rVert}
\DeclarePairedDelimiter\ceil{\lceil}{\rceil}
\DeclarePairedDelimiter\floor{\lfloor}{\rfloor}
\DeclarePairedDelimiter\ket{\lvert}{\rangle}
\DeclarePairedDelimiter\bra{\langle}{\rvert}
\DeclarePairedDelimiter\ave{\langle}{\rangle}

\DeclareMathOperator*{\E}{\mathbb{E}}
\DeclareMathOperator*{\Var}{\mathbf{Var}}

\newcommand{\tr} {\operatorname{tr}}
\newcommand{\poly} {\operatorname{poly}}
\newcommand{\diag} {\operatorname{diag}}

\newcommand{\spanspace} {\operatorname{span}}

\newcommand{\Real} {\operatorname{Re}}

\usepackage{algorithm}
\usepackage{algpseudocode}

\usepackage{tabularx}
\usepackage{booktabs}
\usepackage{threeparttable}

\usepackage{adjustbox}

\usepackage{tikz}
\usetikzlibrary{quantikz2}

\allowdisplaybreaks

\begin{document}

\title{Towards Minimax Estimation of High-Order Functionals \\ by Quantum Arguments}

\author{Qisheng Wang\thanks{Qisheng Wang is with the School of Computer Science, Shanghai Jiao Tong University, Shanghai 200240, China (e-mail: \href{mailto:QishengWang1994@gmail.com}{\nolinkurl{QishengWang1994@gmail.com}}).}}
\date{}

\maketitle
\pagenumbering{roman}
\thispagestyle{empty}

\begin{abstract}
    We propose a novel approach to the minimax estimation of high-order functionals from the perspective of quantum computing. 
    Specifically, for any real number $\alpha \gg 1$, we present two estimators, one for the classical functional $\mathrm{F}_\alpha(P) = \sum_{i=1}^S p_i^\alpha$ of a discrete distribution $P$ and the other for the quantum functional $\mathrm{F}_\alpha(\rho) = \operatorname{tr}(\rho^\alpha)$ of a mixed state $\rho$. 
    These functionals have close connections with the R\'enyi entropy and the Tsallis entropy. 
    We show that both estimators achieve the minimax optimal $L_2$ rate $\alpha \mathsf{n}^{-1}$ in the range $\alpha \lesssim \mathsf{n} \lesssim \alpha^{3-o(1)}$, where the support size $S$ of $P$ or the dimension of $\rho$ can be much larger than the number of samples $\mathsf{n}$. 
    As a result, both estimators achieve the \textit{optimal} sample complexity $\mathsf{n} \asymp \alpha$, improving upon the prior best upper bounds $O(\alpha^2)$ established by \hyperlink{cite.JVHW17}{Jiao, Venkat, Han, and Weissman (\textit{IEEE Trans.\ Inf.\ Theory} 2017)} for classical functionals and \hyperlink{cite.CW25}{Chen and Wang (COLT 2025)} for quantum functionals. 
    Our estimators are constructed under a unified framework using quantum primitives and run in linear time on a quantum computer. 
    This work reveals an unexpected path from quantum computing to statistics, suggesting a conceptually new methodology for functional estimation.
    It adds to the growing list of quantum proofs for classical theorems.
\end{abstract}

\newpage
\tableofcontents
\thispagestyle{empty}
\newpage
\pagenumbering{arabic}

\section{Introduction}

\paragraph{Classical functional estimation.}
Given $\mathsf{n}$ samples drawn from an unknown discrete probability distribution $P = \rbra{p_1, p_2, \dots, p_S}$ of alphabet size $S$, the estimation of the functionals of the distribution $P$ of the form
\begin{equation} \label{eq:def-FP}
    F\rbra{P} = \sum_{i=1}^S f\rbra{p_i}
\end{equation}
has been extensively investigated in the literature. 
This fundamental problem has strong applications in entropy estimation. 
For example, when $f\rbra{x} = -x\ln\rbra{x}$, the functional becomes the Shannon entropy $\mathrm{H}\rbra{P} = - \sum_{i=1}^S p_i \ln\rbra{p_i}$ \cite{Sha48a,Sha48b}, and its estimation has been thoroughly studied in a series of works \cite{Pan03,BDKR05,Pan04,VV11a,VV11b,VV17,JVHW15,JVHW17,WY16}.

In particular, $f\rbra{x} = x^\alpha$ with parameter $\alpha$ gives a family of information measures of the form
\begin{equation}
\mathrm{F}_\alpha\rbra{P} \coloneqq \sum_{i=1}^S p_i^\alpha.
\end{equation}
This type of information measure is involved in a wide range of research areas, e.g., the Gini impurity $\mathrm{I}_{\textup{Gini}}\rbra{P} = 1 - \mathrm{F}_2\rbra{P}$ \cite{Gin12} in machine learning \cite{BFOS84}, the R\'enyi entropy $\mathrm{H}_\alpha^{\textup{R\'en}}\rbra{P} = \frac{\ln\rbra{\mathrm{F}_\alpha\rbra{P}}}{1-\alpha}$ \cite{Ren61} and the Tsallis entropy $\mathrm{H}_\alpha^{\textup{Tsa}}\rbra{P} = \frac{\mathrm{F}_\alpha\rbra{P}-1}{1-\alpha}$ \cite{Tsa88} in information theory, the Hill number $\prescript{\alpha}{}{\mathrm{D}}\rbra{P} = \rbra{\mathrm{F}_\alpha\rbra{P}}^{1/\rbra{1-\alpha}}$ \cite{Hil73} in ecology, and the frequency moment $\mathrm{M}_{\alpha}\rbra{P} = S^\alpha \mathrm{F}_\alpha\rbra{P}$ \cite{AMS99} in computational complexity theory \cite{BYKS01,IW05,BGKS06,GC07}.
The estimation of the high-order functional $\mathrm{F}_\alpha\rbra{P}$ has been investigated in the literature \cite{AK01,CJ15,JVHW15,JVHW17,AOST17,OS17}.
The prior best estimator for $\mathrm{F}_\alpha\rbra{P}$ is due to \cite{JVHW15,JVHW17}, which can return an estimate to within any constant additive error with sample complexity $\mathsf{n} = O\rbra{\alpha^2}$ for any real number $\alpha \gg 1$.
Moreover, the estimator in \cite{JVHW17} achieves a minimax mean squared error (MSE) rate of $\alpha^2 \mathsf{n}^{-1}$ when $\mathsf{n} \gtrsim \alpha^2$. 
A direct question naturally arises:
\begin{equation} \label{eq:question}
    \textit{Can we estimate $\mathrm{F}_\alpha\rbra{P}$ using $\mathsf{n} \ll \alpha^2$ samples?} \tag{$*$}
\end{equation}
This fundamental question is at the core of the estimation of functionals of discrete distributions. 
A positive answer to this question will immediately lead to better estimators for a series of information-theoretic quantities such as the Tsallis entropy \cite{Tsa88} and the mutual Tsallis entropy \cite{Fur06}. 
In particular, the notably high-order functional $\mathrm{F}_\alpha\rbra{P}$, its quantum analog $\mathrm{F}_\alpha\rbra{\rho}$, and their variants have been employed in practice with $\alpha \sim 10^2$, e.g., investigating the properties of Rydberg hydrogenic systems \cite{TD16,TPCD16} and antiferromagnetic Heisenberg models \cite{WD20}. 

\paragraph{Quantum functional estimation.}
The estimation of functionals of quantum states, a fundamental task in quantum property testing \cite{MdW16}, has recently attracted a lot of attention in the literature.
In the quantum setting, $\mathsf{n}$ samples of an unknown quantum state $\rho$ (i.e., $\rho^{\otimes \mathsf{n}}$) are given and the goal is to estimate the functionals of $\rho$ of the form 
\begin{equation} \label{eq:def-Frho}
    F\rbra{\rho} = \tr\rbra{f\rbra{\rho}},
\end{equation}
which generalizes the classical functional. 
For example, when $f\rbra{x} = -x\ln\rbra{x}$, the functional becomes the von Neumann entropy $\mathrm{S}\rbra{\rho} = -\tr\rbra{\rho\ln\rbra{\rho}}$ \cite{vN27}, and its estimation has been studied in a series of works \cite{BMW16,AISW20,GL20,CLW20,GHS21,WZW23,WGL+24,WZYW23,WZ25}.

In particular, $f\rbra{x} = x^\alpha$ gives the high-order functional 
\begin{equation}
    \mathrm{F}_\alpha\rbra{\rho} = \tr\rbra{\rho^{\alpha}},
\end{equation}
which generalizes the classical functional $\mathrm{F}_\alpha\rbra{P}$. 
This information quantity has close connections to the quantum R\'enyi entropy $\mathrm{S}_\alpha^{\textup{R\'en}}\rbra{\rho} = \frac{\ln\rbra{\mathrm{F}_\alpha\rbra{\rho}}}{1-\alpha}$ and the quantum Tsallis entropy $\mathrm{S}_\alpha^{\textup{Tsa}}\rbra{\rho} = \frac{\mathrm{F}_\alpha\rbra{\rho} - 1}{1-\alpha}$. 
The estimation of $\mathrm{F}_\alpha\rbra{\rho}$ has been investigated in the literature \cite{AISW20,SH21,WZW23,WGL+24,WZYW23,WZ25,LW25,CW25,Wan25}.
In particular, in \cite{LW25}, an estimator for $\mathrm{F}_\alpha\rbra{\rho}$ was proposed, which can estimate $\mathrm{F}_\alpha\rbra{\rho}$ to within any constant additive error with sample complexity $\mathsf{n} = \exp\rbra{O\rbra{\alpha}}$ for any real number $\alpha \gg 1$; later, in \cite{CW25}, the sample complexity was improved to $\mathsf{n} = O\rbra{\alpha^2}$, which matches the prior best sample complexity for estimating the classical functional $\mathrm{F}_\alpha\rbra{P}$ given in \cite{JVHW17}. 
As a quantum analog of Question~\eqref{eq:question}, again we ask:
\begin{equation} \label{eq:question-q}
    \textit{Can we estimate $\mathrm{F}_\alpha\rbra{\rho}$ using $\mathsf{n} \ll \alpha^2$ samples?} \tag{$**$}
\end{equation}
This question is not only of independent interest in quantum computing, but also strictly generalizes the classical Question~\eqref{eq:question}. Any improvement in Question~\eqref{eq:question-q} also implies an improvement in Question~\eqref{eq:question}.

\paragraph{Our contributions.}

In this paper, we give a positive answer to both Questions~\eqref{eq:question} and \eqref{eq:question-q}.
\begin{theorem}[Optimal estimator for high-order functionals] \label{thm:main-intro}
    For any real number $\alpha \gg 1$, the sample complexity of estimating $\mathrm{F}_\alpha\rbra{P}$ and $\mathrm{F}_\alpha\rbra{\rho}$ to within any constant additive error is $\mathsf{n} \asymp \alpha$. 
\end{theorem}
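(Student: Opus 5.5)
The plan is to prove the two directions separately. The upper bound $\mathsf{n} = O(\alpha)$ will be proved for the quantum functional $\mathrm{F}_\alpha(\rho)$; the classical case then follows by taking $\rho = \diag(p_1,\dots,p_S)$. The lower bound $\mathsf{n} = \Omega(\alpha)$ will be proved for the classical functional; it transfers to the quantum case because a classical distribution can be encoded as a diagonal state.

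\textbf{Upper bound.} The estimator will be built from quantum primitives acting on $\rho^{\otimes \mathsf{n}}$.

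First, for integer $k \le \mathsf{n}$, the cyclic-shift (generalized swap) test on $\rho^{\otimes k}$ gives an unbiased $\pm1$-valued estimator of $\tr(\rho^k)$. Averaging over disjoint blocks, or better over all $k$-subsets of the $\mathsf{n}$ copies (a U-statistic, which in the permutation-invariant picture is implemented by a weak Schur sampling / symmetric-subspace measurement), yields an unbiased estimator $\hat T_k$ of $\tr(\rho^k)$ that uses the whole sample.

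Second, for real $\alpha$, I will write $x^\alpha$ on $[0,1]$ via a polynomial in $x$ of degree $\lesssim \mathsf{n}$. Because $x^{\alpha}$ is negligible for $x \ll 1/\alpha$ up to $e^{-\Omega(1)}$ factors, it suffices to approximate it on $[0,1]$ with error $\epsilon\, x$ (relative to $\tr\rho = 1$). I would use a Chebyshev/Jackson-type expansion $x^\alpha \approx \sum_{k\le K} c_k x^k$ with $K \asymp \alpha$. The key point is that $x^{\alpha-1}$ restricted to the relevant region $x \ge 1 - O(1)$ can be approximated to constant accuracy by degree $O(\alpha)$, not $O(\alpha^2)$; this is exactly where the previous $\alpha^2$ bound was lost.

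The estimator is then $\sum_k c_k \hat T_k$, and I must bound its variance by $O(\alpha/\mathsf{n})$. The variance of the U-statistic $\hat T_k$ is governed by Hoeffding's decomposition: the leading term is $\frac{k^2}{\mathsf{n}}\tr(\rho^{2k-1})$-like, with higher-order terms $(k^2/\mathsf{n})^j$. With $k \le \alpha \lesssim \mathsf{n}$ these terms stay controlled once coefficient cancellations are accounted for. Concretely, I would compute the variance of the combined estimator directly via the representation-theoretic (Schur–Weyl) description of the measurement outcome: the Young diagram $\lambda$ with empirical Keyl estimates $\lambda_i/\mathsf{n}$. I would then show that the estimator is a smooth function whose fluctuations scale as $\alpha\,\mathsf{n}^{-1/2}$ times the eigenvalue fluctuations, bounded in mean square by $\alpha/\mathsf{n}$ when $\mathsf{n} \gtrsim \alpha$.

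\textbf{Lower bound.} Two-point (Le Cam) method. Take $P_0 = (1,0,\dots)$ and $P_1 = (1-c/\alpha,\ c/\alpha,\ 0,\dots)$. Then $\mathrm{F}_\alpha(P_0) - \mathrm{F}_\alpha(P_1) \ge 1 - e^{-c} - o(1)$, a constant. The total variation between $P_0^{\otimes \mathsf{n}}$ and $P_1^{\otimes \mathsf{n}}$ is $1-(1-c/\alpha)^{\mathsf{n}}$, which stays bounded below $1$ unless $\mathsf{n} \gtrsim \alpha$. Hence $\Omega(\alpha)$ samples are needed; the same states used as diagonal density matrices give the quantum lower bound.

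\textbf{Main obstacle.} The hard step is the variance analysis of the upper bound: showing that combining unbiased power-trace estimators with degree $\asymp \alpha$ does not blow up the variance despite large coefficients $c_k$. My first attempt would avoid explicit polynomial coefficients altogether. I would analyze a plug-in-type statistic of the Young diagram, $\sum_i g(\lambda_i)$ with $g$ a falling-factorial-style unbiased correction of $x^\alpha$. I would then control its variance through the known concentration of $\lambda_i/\mathsf{n}$ around the eigenvalues, treating large eigenvalues (where $x^\alpha$ matters) by a Taylor/delta-method bound and small eigenvalues by the negligibility of $x^\alpha$.
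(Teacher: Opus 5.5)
Your lower bound is correct. It is a legitimate, simpler alternative to the paper's. The point mass versus $(1-c/\alpha,\,c/\alpha)$, with total variation $1-(1-c/\alpha)^{\mathsf{n}}$, gives $\Omega(\alpha)$ for every fixed $\varepsilon<1/2$, and it transfers to states via diagonal $\rho$. It gives only $\Omega(\alpha/\varepsilon)$, weaker than the $\Omega(\alpha/\varepsilon^2)$ from the paper's Hellinger argument with $p_1^{\pm}=1-1/\alpha\pm\gamma$, but that is irrelevant for this theorem.

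\textbf{The gap is in the upper bound.} The variance bound, which you yourself call the main obstacle, is never established, and your ingredients do not supply it. A single $\hat T_k$ is fine: a block average of $\pm1$ cyclic-shift outcomes has variance $O(k/\mathsf{n})$. The problem is the combination $\sum_k c_k\hat T_k$. Nothing in the proposal bounds $\sum_k\abs{c_k}$, and Chebyshev/Jackson-type approximations generally have monomial coefficients exponentially large in the degree. No mechanism for the cancellations you invoke is given; with disjoint blocks for different $k$ there is no cancellation at all.

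Your Hoeffding remark does not help either. For $k\asymp\alpha$ and $\mathsf{n}\asymp\alpha$ one has $k^2/\mathsf{n}\asymp\alpha$, so the terms $(k^2/\mathsf{n})^j$ you describe are not small.

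The Young-diagram fallback is only sketched, and it is delicate exactly when $\mathsf{n}\asymp\alpha$. The relevant eigenvalue sits within $O(1/\alpha)$ of $1$, where $x^\alpha$ has slope about $\alpha$. So the Taylor parameter $\alpha\abs{\lambda_1/\mathsf{n}-p_1}\asymp\sqrt{\alpha/\mathsf{n}}$ is of order one, and the delta method needs an explicit bias and higher-moment analysis. That analysis would have to cover the bias of the Keyl estimate $\lambda_1/\mathsf{n}$ and supply the correction for non-integer $\alpha$, which you leave unspecified; none of it is provided.

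The paper avoids all of this. It never expands $x^\alpha$ in monomials and never uses U-statistics. Each disjoint block of $T=\lfloor\alpha\rfloor+O_\varepsilon(1)$ copies yields one bounded outcome in $\{\pm2^{1+\{\alpha\}}\}$. This outcome comes from the cyclic-shift test on $\lfloor\alpha\rfloor$ copies with $B=p(\rho/2)$ inserted on the first register, where $p$ approximates only $\tfrac12x^{\{\alpha\}}$ and satisfies $\abs{p}\le1$ on $[-1,1]$ (QSVT). The block-encoding of $\rho/2$ is samplized at an $\alpha$-independent cost, because the approximation error on eigenvalues below $2\delta'$ is damped by $\lambda^{\lfloor\alpha\rfloor-1}$. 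Averaging $\lfloor\mathsf{n}/T\rfloor$ independent blocks then gives variance $O(T/\mathsf{n})$ with no coefficients to control.

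Your own first step would already close the gap for constant error if you used a single power. Since $\lambda^{\{\alpha\}}\ge\lambda$ on $[0,1]$,
$0\le\tr(\rho^{\lfloor\alpha\rfloor})-\tr(\rho^\alpha)\le\sum_j\lambda_j\cdot\lambda_j^{\lfloor\alpha\rfloor-1}(1-\lambda_j)\le1/\lfloor\alpha\rfloor$.
So for $\alpha\ge2/\varepsilon+1$, average the plain cyclic-shift outcomes over $O(1/\varepsilon^2)$ disjoint blocks of $\lfloor\alpha\rfloor$ copies. Classically this is the fraction of blocks whose samples all coincide. It estimates $\mathrm{F}_\alpha$ to within $\varepsilon$ using $O(\alpha/\varepsilon^2)$ samples.
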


\cref{thm:main-intro} improves both the prior best classical upper bound $O\rbra{\alpha^2}$ due to \cite{JVHW17} and the prior best quantum upper bound $O\rbra{\alpha^2}$ due to \cite{CW25}. 
Moreover, our estimator achieves the minimax \textit{optimal} MSE rate $\alpha \mathsf{n}^{-1}$ when $\alpha \lesssim \mathsf{n} \lesssim \alpha^{3-o(1)}$, in contrast to the minimax MSE rate $\alpha^2 \mathsf{n}^{-1}$ for $\mathsf{n} \gtrsim \alpha^2$ due to \cite{JVHW15,JVHW17}. 
Our main result with more details is formally stated in \cref{sec:main}. 

This is achieved through a unified framework that directly considers the estimation of the quantum functional $\mathrm{F}_\alpha\rbra{\rho}$, thus applicable to the classical functional $\mathrm{F}_\alpha\rbra{P}$. It is worth noting that our results are obtained through an unexpected path from \textit{quantum computing} to statistics. 
Specifically, our approach is built on quantum primitives by extending the functional estimation task to the quantum case. 
It is \textit{surprising} that from this new perspective of quantum computing, we can offer a brand new solution to this classical problem that is even better than previously known approaches \cite{JVHW15,JVHW17}.
In sharp contrast to the previous literature, our result therefore suggests a \textit{conceptually new methodology} for functional estimation, which adds to the growing list of quantum proofs for classical theorems \cite{DdW11}. 
Although our approach borrows ideas from quantum computing, our estimator for the classical functional $\mathrm{F}_\alpha\rbra{P}$ does not need to actually run on a quantum computer.\footnote{It is well-known that with an information-theoretic argument, extending a property testing problem for discrete distributions to the quantum case cannot make it any easier in sample complexity (cf.\ \cite[Fact 1.7]{OW21} and \cite[Lemma 6 in the full version]{AISW20}).}
On the other hand, our approach significantly improves the prior quantum estimators for $\mathrm{F}_\alpha\rbra{\rho}$ \cite{LW25,CW25}, which is of individual interest in the quantum computing literature (see \cref{sec:tech} for the techniques). 
Moreover, our approach can be made time-efficient on a quantum computer (see \cref{sec:time} for more details), while retaining the same sample complexity.

\subsection{Main results} \label{sec:main}

Throughout this paper, we use the asymptotic notation: $\lesssim$, $\gtrsim$, $\asymp$. 
For two non-negative quantities $A$ and $B$, we write $A \lesssim B$ if $A \leq CB$ for a universal constant $C > 0$. 
We write $A \gtrsim B$ if $B \lesssim A$, and we write $A \asymp B$ if $A \lesssim B$ and $A \gtrsim B$. 
The notation $A \gg 1$ means that $A$ is greater than a sufficiently large universal constant. 

Let $\mathcal{M}_S$ be the set of all discrete probability distributions of alphabet size $S \geq 2$ and let $P \in \mathcal{M}_S$ denote a distribution. 
We use $\hat{E}_\mathsf{n}$ to denote an estimator with sample complexity $\mathsf{n}$ and use $\hat{E}_\mathsf{n}\rbra{P}$ to denote its output (which is a random variable) for a specific unknown distribution $P$. 
For any functional of distributions, $F \colon \mathcal{M}_S \to \mathbb{R}$, let 
\begin{equation}
\textup{MSE}\rbra{\hat{E}_\mathsf{n}, F, P} \coloneqq \E\sbra*{\rbra*{\hat{E}_\mathsf{n}\rbra{P} - F\rbra{P}}^2}
\end{equation}
be the MSE risk of the estimator $\hat{E}_\mathsf{n}$ for the functional $F$ with respect to the distribution $P$. 
Similarly, we use following notations for the quantum case.
Let $\mathcal{D}\rbra{\mathcal{H}}$ be the set of (the density operators of) all quantum states in the Hilbert space $\mathcal{H}$ with $\dim\rbra{\mathcal{H}} \geq 2$. 
For a quantum state $\rho \in \mathcal{D}\rbra{\mathcal{H}}$, we use $\hat{E}_{\mathsf{n}}$ to denote an estimator with sample complexity $\mathsf{n}$ and use $\hat{E}_{\mathsf{n}}\rbra{\rho}$ to denote its output. 
For any functional of quantum states, $F \colon \mathcal{D}\rbra{\mathcal{H}} \to \mathbb{R}$, let 
\begin{equation}
    \textup{MSE}\rbra{\hat{E}_\mathsf{n}, F, \rho} \coloneqq \E\sbra*{\rbra*{\hat{E}_\mathsf{n}\rbra{\rho} - F\rbra{\rho}}^2}
\end{equation}
be the MSE risk of the estimator $\hat{E}_{\mathsf{n}}$ for the functional $F$ with respect to the quantum state $\rho$. 
Note that for any distribution $P = \rbra{p_1, p_2, \dots, p_n}$, the quantum state 
\begin{equation}
    \rho_P = \sum_{i=1}^n p_i \ketbra{i}{i}
\end{equation}
is statistically equivalent to $P$. 
In particular, if $F$ is defined in terms of a function $f$ by \cref{eq:def-FP,eq:def-Frho}, then $\textup{MSE}\rbra{\hat{E}_\mathsf{n}, F, P} = \textup{MSE}\rbra{\hat{E}_\mathsf{n}, F, \rho_P}$. 

Using the above notations, our main result can be stated as follows.

\begin{theorem}[Minimax estimation of high-order functionals, \cref{thm:minimax} simplified] \label{thm:main}
    For any real number $\alpha \gg 1$ and any  integer $\mathsf{n} \gtrsim \alpha$, we have
    \begin{equation}
    \alpha \mathsf{n}^{-1} 
    \lesssim \inf_{\hat{E}_\mathsf{n}} \sup_{P \in \mathcal{M}_S} \textup{MSE}\rbra{\hat{E}_\mathsf{n}, \mathrm{F}_\alpha,P} 
    \leq \inf_{\hat{E}_\mathsf{n}} \sup_{\rho \in \mathcal{D}\rbra{\mathcal{H}}} \textup{MSE}\rbra{\hat{E}_\mathsf{n}, \mathrm{F}_\alpha,\rho} 
    \lesssim \alpha\mathsf{n}^{-1} + \mathsf{n}^{-\frac{2}{3}+o\rbra{1}}.
    \end{equation}
    In particular, when $\alpha \lesssim \mathsf{n} \lesssim \alpha^{3-o(1)}$, we achieve the minimax optimal MSE rate
    \begin{equation}
    \inf_{\hat{E}_\mathsf{n}} \sup_{P \in \mathcal{M}_S} \textup{MSE}\rbra{\hat{E}_\mathsf{n}, \mathrm{F}_\alpha,P} \asymp \inf_{\hat{E}_\mathsf{n}} \sup_{\rho \in \mathcal{D}\rbra{\mathcal{H}}} \textup{MSE}\rbra{\hat{E}_\mathsf{n}, \mathrm{F}_\alpha,\rho} \asymp \alpha\mathsf{n}^{-1}.
    \end{equation}
\end{theorem}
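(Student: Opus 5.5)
We will prove the three inequalities separately. The middle inequality is immediate from the statistical equivalence noted above: any $P \in \mathcal{M}_S$ embeds as $\rho_P$, so $\textup{MSE}\rbra{\hat{E}_\mathsf{n}, \mathrm{F}_\alpha, P} = \textup{MSE}\rbra{\hat{E}_\mathsf{n}, \mathrm{F}_\alpha, \rho_P}$. Hence the quantum supremum dominates the classical one for every estimator, and therefore also after taking the infimum over estimators.

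For the \emph{lower bound} $\alpha\mathsf{n}^{-1}$ we use Le Cam's two-point method on classical distributions. Take $P_0 = (1-\delta, \delta)$ with $\delta = c/\alpha$, where $c$ is a small constant, and perturb it to $P_1 = (1-\delta-\epsilon, \delta+\epsilon)$.
\begin{itemize}
\item The derivative of $\mathrm{F}_\alpha$ along this direction is $\alpha\rbra{(\delta)^{\alpha-1} - (1-\delta)^{\alpha-1}} \approx -\alpha e^{-c}$. So $\abs{\mathrm{F}_\alpha(P_0) - \mathrm{F}_\alpha(P_1)} \asymp \alpha\epsilon$.
\item The chi-square divergence satisfies $\chi^2(P_1\|P_0) \asymp \epsilon^2/\delta \asymp \alpha\epsilon^2$.
\item Choosing $\epsilon \asymp (\alpha\mathsf{n})^{-1/2}$ keeps the $\mathsf{n}$-fold product distributions at constant total variation distance.
\item The functional gap is then $\asymp \alpha(\alpha\mathsf{n})^{-1/2} = (\alpha/\mathsf{n})^{1/2}$, which gives MSE $\gtrsim \alpha\mathsf{n}^{-1}$.
\end{itemize}
This requires $\mathsf{n} \gtrsim \alpha$ so that $\epsilon \le \delta$.

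For the \emph{upper bound} we construct a quantum estimator. Split $\mathrm{F}_\alpha(\rho) = \tr(\rho^\alpha)$ into contributions from large eigenvalues ($\lambda \gtrsim 1 - O(\log/\alpha)$, since smaller ones are damped by $\lambda^{\alpha}$) and the remainder. The intended primitive is block-encoding of $\rho$ from samples, for instance via density-matrix exponentiation or the "samplizer" framework, combined with quantum singular value transformation. We then use a polynomial $p$ of degree $d$ that approximates $x^\alpha$ on $[0,1]$.
\begin{itemize}
\item The key fact is that $x^\alpha$ can be approximated uniformly to error $\eta$ by a polynomial of degree $d \asymp \sqrt{\alpha\log(1/\eta)}$ (Newman/Sachdeva--Vishnoi).
\item Estimating $\tr(p(\rho))$ via Hadamard tests, or via the unbiased symmetric-group/swap-test estimator of $\tr(\rho^k)$ for integer $k \le d$, costs samples scaling with $d$ and with the coefficient magnitudes.
\end{itemize}
Rather than expand in monomials, which have exponentially large coefficients, we would write $x^\alpha = x^{m}\cdot x^{\alpha-m}$. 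The factor $x^{m}$ is estimated exactly, using the U-statistic $\tr(\rho^{m})$ from $m$ copies and its cyclic-shift estimator. The remaining smooth factor is handled by a low-degree polynomial in a rescaled variable.

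The variance analysis then yields:
\begin{itemize}
\item an $\alpha\mathsf{n}^{-1}$ term, arising from the derivative $\alpha\lambda^{\alpha-1}$ in the dominant large-eigenvalue contribution, in the style of a delta method;
\item a bias plus high-order-variance term, balanced by choosing $d$ and $\eta$ as powers of $\mathsf{n}$. This gives $\mathsf{n}^{-2/3+o(1)}$, where the $o(1)$ comes from logarithmic factors in the degree.
\end{itemize}
The final regime statement follows by comparing $\alpha\mathsf{n}^{-1}$ with $\mathsf{n}^{-2/3}$: the first term dominates exactly when $\mathsf{n} \lesssim \alpha^{3-o(1)}$.

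The \emph{main obstacle} is the variance control of the quantum polynomial estimator. A naive Hadamard-test estimator gives variance $O(1/\text{shots})$ per query and loses the $\alpha$-versus-$\alpha^2$ improvement. We would try to exploit that the random outcome's variance is itself bounded by $\tr(\rho^{\alpha})$-type quantities, namely $\tr(\rho^{2\alpha-1})$ weighted by $\alpha$. To get this, we would design the estimator so that its second moment is expressed through $\tr(q(\rho))$ for a controlled polynomial $q$.
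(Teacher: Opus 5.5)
Your middle inequality and your lower bound are fine and essentially follow the paper. Both use the same kind of two-point instance, $(1-c/\alpha,c/\alpha)$ perturbed by $\Theta(1/\sqrt{\alpha\mathsf{n}})$. The paper argues via the Hellinger distance, \cref{thm:hellinger} and Markov's inequality, and uses the non-asymptotic gap bound \cref{lemma:F-alpha-diff-lb} in place of your linearization.

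The upper bound, however, is not proved. You never pin down an estimator, and you explicitly leave its variance control open. You also do not say how the two factors $x^m$ and $x^{\alpha-m}$ are combined. A U-statistic for $\tr\rbra{\rho^m}$ plus a separate treatment of $x^{\alpha-m}$ does not yield $\tr\rbra{\rho^\alpha}$. What is needed is a single measurement whose acceptance probability is $\frac12\rbra{1+\tr\rbra{q\rbra{\rho}\rho^{m}}}$.

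The paper gets this from a generalized SWAP test (\cref{lemma:swap-block}). It applies a controlled cyclic shift on $m=\floor{\alpha}$ fresh copies of $\rho$, then a controlled block-encoding of $B=p\rbra{\rho/2}$ on the first register. Here $p\approx\frac12x^{\cbra{\alpha}}$ on $[\delta',1]$ is implemented by QSVT and then samplized. The choice $m=\floor{\alpha}$ matters for two reasons:
\begin{itemize}
\item The samplizer only block-encodes $\rho/2$, so the final rescaling factor is $2^{1+\cbra{\alpha}}\le 4$, rather than exponential in $\alpha-m$. This is also why your first bullet is a dead end. Pushing all of $x^\alpha$ through the block-encoding costs either a factor $2^{-\alpha}$ in the signal, or degree $\Omega(\alpha)$ by Bernstein's inequality. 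The latter means order $\alpha^2$ samples per shot under the samplizer's quadratic cost.
\item The non-smoothness of $x^{\cbra{\alpha}}$ at $0$ is harmless. Eigenvalues below $2\delta'$ contribute at most $\frac32\rbra{2\delta'}^{\floor{\alpha}-1}$ to the bias. Hence $\delta'=\mathsf{n}^{-o(1)}$ suffices, and the polynomial degree is only $\mathsf{n}^{o(1)}$ and does not grow with $\alpha$.
\end{itemize}

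Your diagnosis of the obstacle is also backwards. No delta-method expansion is needed, and no bound of the second moment through $\tr\rbra{q\rbra{\rho}}$. The ``naive'' shot-averaging estimator is exactly what achieves the rate:
\begin{itemize}
\item Each shot outputs $\pm2^{1+\cbra{\alpha}}$, so its variance is at most $16$.
\item Its mean is within $8\delta+4\varepsilon'+6\rbra{2\delta'}^{\floor{\alpha}-1}$ of $\tr\rbra{\rho^\alpha}$.
\item It consumes $T=\floor{\alpha}+\widetilde{O}\rbra{1/\rbra{\delta'^2\delta}}$ samples.
\end{itemize}
Averaging $\floor{\mathsf{n}/T}$ independent shots then gives variance at most $32T/\mathsf{n}$. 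So the $\alpha\mathsf{n}^{-1}$ term is simply the $\floor{\alpha}$ fresh copies consumed per shot. The $\mathsf{n}^{-2/3+o(1)}$ term comes from balancing the samplizer overhead $\widetilde{O}\rbra{1/\rbra{\delta'^2\delta\mathsf{n}}}$ against the squared bias. The gain over $\alpha^2$ comes from the per-shot sample cost being linear in $\alpha$, not from any variance reduction.
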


The previously known minimax MSE rate given in \cite{JVHW15,JVHW17} is 
\begin{equation}
\alpha^{-4}\mathsf{n}^{-1} \lesssim \inf_{\hat{E}_\mathsf{n}} \sup_{P \in \mathcal{M}_S} \textup{MSE}\rbra{\hat{E}_\mathsf{n}, \mathrm{F}_\alpha,P} \lesssim \alpha^{2}\mathsf{n}^{-1}  \textup{ for } \mathsf{n} \gtrsim \alpha^2. 
\end{equation}
In comparison, \cref{thm:main} significantly improves both the upper and lower bounds above. 
In particular, when $\alpha \lesssim \mathsf{n} \lesssim \alpha^{3-o(1)}$, \cref{thm:main} achieves the minimax optimal MSE rate $\alpha\mathsf{n}^{-1}$.
As an application, \cref{thm:main} directly produces estimators for both $\mathrm{F}_\alpha\rbra{P}$ and $\mathrm{F}_\alpha\rbra{\rho}$ with optimal sample complexity $\mathsf{n} \asymp \alpha$, which has already been stated in \cref{thm:main-intro}.

See \cref{sec:overview} for more details of \cref{thm:main}. 
It is worth noting that our estimators achieving \cref{thm:main} can run in \textit{linear} time on a quantum computer (but in exponential time on a classical computer, see \cref{sec:time}). 
In the remainder of the Introduction, we will mainly focus on the idea of how to achieve our results. 

\subsection{Techniques} \label{sec:tech}

Our approach uses tools from quantum computing as intermediate steps in our construction, but our final estimator for the classical functional $\mathrm{F}_\alpha\rbra{P}$ is \textit{completely quantum-independent}. 
For readability by readers not familiar with quantum computing, our main idea below is described in the language of linear algebra with sufficient detail and background.
For readers familiar with quantum computing and interested in the estimator for the quantum functional $\mathrm{F}_\alpha\rbra{\rho}$, you can skip directly to \cref{sec:estimator-intro}. 

\paragraph{Notations.}
Let $z_1, z_2, \dots, z_{\mathsf{n}}$ be $\mathsf{n}$ samples from the distribution $P \in \mathcal{M}_S$. 
Let $\mathcal{H}_{d}$ denote the $d$-dimensional Hilbert space with $\ket{i}$ the standard basis vector where the $i$-th entry is $1$ and all other entries are $0$.
A vector in $\mathcal{H}_d$ has the form $\ket{\psi} = \sum_{i=1}^d a_i \ket{i}$, with $\bra{\psi} = \sum_{i=1}^d a_i^* \bra{i}$ its Hermitian conjugate and $\Abs{\ket{\psi}}$ its Euclidean norm.
In particular, the state of a qubit is a vector in $\mathcal{H}_2 \coloneqq \spanspace\cbra{\ket{0}, \ket{1}}$. 
A matrix on $\mathcal{H}_d$ is a $d \times d$ complex-valued matrix. 
For a matrix $A$, we use $A^\dag$ to denote its Hermitian conjugate and use $\Abs{A}$ to denote its operator norm. 
For a Hilbert space $\mathcal{H} = \mathcal{H}_{\mathsf{A}} \otimes \mathcal{H}_{\mathsf{B}} \otimes \mathcal{H}_{\mathsf{C}} \otimes \cdots$ with subsystems $\mathcal{H}_{\mathsf{A}}, \mathcal{H}_{\mathsf{B}}, \mathcal{H}_{\mathsf{C}}, \dots$, if a matrix $A$ acts on $\mathcal{H}_{\mathsf{A}}$, then we use $A_{\mathsf{A}} \coloneqq A_{\mathsf{A}} \otimes I_{\mathsf{B}} \otimes I_{\mathsf{C}} \otimes \cdots$ to denote the matrix on $\mathcal{H}$ that acts as $A$ on $\mathcal{H}_{\mathsf{A}}$ and acts trivially on other subsystems ($\mathcal{H}_{\mathsf{B}}, \mathcal{H}_{\mathsf{C}}, \dots$). 
For more details, see \cref{sec:preliminaries}. 

\subsubsection{Relate classical functionals to quantum functionals} \label{sec:preparation-intro}

\paragraph{Problem reformulation.}
Instead of dealing with the classical data $z_1, z_2, \dots, z_{\mathsf{n}} \sim P$, we consider the vector $\ket{\mathbf{z}} \coloneqq \ket{z_1} \otimes \ket{z_2} \otimes \dots \otimes \ket{z_{\mathsf{n}}} \in \mathcal{H}_{S^{\mathsf{n}}}$ as a quantum state. 
Intuitively, this expands the scope of the toolkit, and we will show later how the quantum toolkit helps solve our problem. 
Our starting point is that the expectation of the output of any quantum algorithm on the input quantum state $\ket{\mathbf{z}}$ is of the form
\begin{equation}
    f_{\mathcal{O}}\rbra{\mathbf{z}} \coloneqq \tr\rbra{\mathcal{O}\ketbra{\mathbf{z}}{\mathbf{z}}},
\end{equation}
where $\mathcal{O}$ is an Hermitian matrix on $\mathcal{H}_{S^{\mathsf{n}}}$ (understood as an observable in quantum computing). 
In other words, $f_{\mathcal{O}}\rbra{\mathbf{z}}$ is determined by the Hermitian matrix $\mathcal{O}$. 
Our idea is to choose $\hat{E}_{\mathsf{n}}\rbra{P} \coloneqq f_{\mathcal{O}}\rbra{\mathbf{z}}$ as our estimator. 
Our goal is then to find an Hermitian matrix $\mathcal{O}$ that minimizes the maximum MSE risk:
\begin{equation}
    \sup_{P \in \mathcal{M}_S} \textup{MSE}\rbra{\hat{E}_\mathsf{n}, \mathrm{F}_\alpha,P} = \sup_{P \in \mathcal{M}_S} \E_{\mathbf{z} \sim P^{\mathsf{n}}} \sbra*{ \rbra[\big]{ f_{\mathcal{O}}\rbra{\mathbf{z}} - \mathrm{F}_\alpha\rbra{P} }^2 }.
\end{equation}
It can be seen that once the Hermitian matrix $\mathcal{O}$ is fixed, the value of $f_{\mathcal{O}}\rbra{\mathbf{z}}$ can be classically computed by direct matrix multiplication \textit{without} the need of quantum computing. 

\paragraph{Represent $\mathcal{O}$ with quantum primitives.}
In this part, we explain how to understand the Hermitian matrix $\mathcal{O}$ from the perspective of quantum computing. 
Generally, a quantum algorithm is described by a unitary matrix $U$ that acts on the space $\mathcal{H}_{S^{\mathsf{n}}} \otimes \mathcal{H}_{2^{\ell}}$, where $\mathcal{H}_{S^{\mathsf{n}}}$ is the subsystem of $\ket{\mathbf{z}}$ and $\mathcal{H}_{2^{\ell}}$ is an ancilla subsystem. 
If we apply the unitary matrix $U$ to the quantum state $\ket{\mathbf{z}} \otimes \ket{\bar 0}_{\mathsf{anc}}$ where $\ket{\bar 0}_{\mathsf{anc}} \coloneqq \ket{0}^{\otimes \ell} \in \mathcal{H}_{2^{\ell}}$, the quantum state will become $U \rbra{\ket{\mathbf{z}} \otimes \ket{\bar 0}_{\mathsf{anc}}}$. 
If we further perform a quantum measurement $\cbra{M_m}$ (with the completeness condition $\sum_m M_m^\dag M_m = I$) on the quantum state $U \rbra{\ket{\mathbf{z}} \otimes \ket{\bar 0}_{\mathsf{anc}}}$, we will obtain a random variable $X$ as the measurement outcome such that
\begin{equation} \label{eq:prob-X=m}
    \Pr\sbra*{ X = m } = \Abs*{M_m U \rbra[\big]{ \ket{\mathbf{z}} \otimes \ket{\bar 0}_{\mathsf{anc}} }}^2.
\end{equation}
Then, it can be seen that the expected value of $X$ has the form $\E\sbra{X} = f_{\mathcal{O}}\rbra{\mathbf{z}}$, where
\begin{equation} \label{def:O-intro}
    \mathcal{O} = \sum_{m} m \bra{\bar 0}_{\mathsf{anc}} U^\dag M_m^\dag M_m U \ket{\bar 0}_{\mathsf{anc}}.
\end{equation}
This means that the Hermitian matrix $\mathcal{O}$ is determined by specifying the unitary matrix $U$ and the set of matrices $\cbra{M_m}$ (with the completeness condition). 

\paragraph{Relate $f_{\mathcal{O}}\rbra{\mathbf{z}}$ to $\mathrm{F}_\alpha\rbra{\rho}$.}
Our goal is to explicitly specify 
the Hermitian matrix $\mathcal{O}$. 
Before the detail of our construction, note that the MSE of the estimator $f_{\mathcal{O}}\rbra{\mathbf{z}}$ is determined by the variance and the expectation of $f_{\mathcal{O}}\rbra{\mathbf{z}}$ over all $\mathbf{z}$:
\begin{equation} \label{eq:mse-intro}
    \textup{MSE}\rbra{\hat{E}_\mathsf{n}, \mathrm{F}_\alpha,P} 
    = \Var_{\mathbf{z} \sim P^{\mathsf{n}}} \sbra*{ f_{\mathcal{O}}\rbra{\mathbf{z}} } + \rbra*{ \E_{\mathbf{z} \sim P^{\mathsf{n}}} \sbra*{f_{\mathcal{O}}\rbra{\mathbf{z}}} - \mathrm{F}_\alpha\rbra{P} }^2.
\end{equation}
Note that the expectation of $f_{\mathcal{O}}\rbra{\mathbf{z}}$ can be written as:
\begin{align}
    \E_{\mathbf{z} \sim P^{\mathsf{n}}} \sbra*{f_{\mathcal{O}}\rbra{\mathbf{z}}}
    & = \sum_{z_1 = 1}^S \sum_{z_2 = 1}^S \cdots \sum_{z_{\mathsf{n}}=1}^S p_{z_1} p_{z_2} \dots p_{z_\mathsf{n}} \tr\rbra*{\mathcal{O} \rbra[\big]{\ketbra{z_1}{z_1} \otimes \ketbra{z_2}{z_2} \otimes \dots \otimes \ketbra{z_{\mathsf{n}}}{z_{\mathsf{n}}}}} \\
    & = \tr\rbra*{ \mathcal{O} \rbra*{ \sum_{z_1 = 1}^S p_{z_1} \ketbra{z_1}{z_1} \otimes \sum_{z_2 = 1}^S p_{z_2} \ketbra{z_2}{z_2} \otimes \dots \otimes \sum_{z_{\mathsf{n}} = 1}^S p_{z_{\mathsf{n}}} \ketbra{z_{\mathsf{n}}}{z_{\mathsf{n}}} } } \\
    & = \tr\rbra*{ \mathcal{O} \rbra*{\sum_{i=1}^S p_i \ketbra{i}{i}}^{\otimes \mathsf{n}} } = \tr\rbra*{\mathcal{O} \rho_P^{\otimes \mathsf{n}}}, \label{eq:fOz=Orhon}
\end{align}
where $\rho_P \coloneqq \sum_{i=1}^S p_i \ketbra{i}{i} = \diag\rbra{p_1, p_2, \dots, p_S}$ is a density matrix on $\mathcal{H}_{S}$ (which can be understood as a mixed quantum state). 
With regard to this, our goal is to find an Hermitian matrix $\mathcal{O}$ such that $\tr\rbra{\mathcal{O} \rho^{\otimes \mathsf{n}}}$ is close to $\mathrm{F}_\alpha\rbra{P} = \mathrm{F}_\alpha\rbra{\rho_P} = \tr\rbra{\rho_P^\alpha}$.

\subsubsection{The construction of our estimator} \label{sec:estimator-intro}

Now we consider the estimator $\hat{E}_\mathsf{n}$ for a general quantum state $\rho$ of dimension $S$, which can be also described with an observable $\mathcal{O}$ (an Hermitian matrix) defined by \cref{def:O-intro}, or, equivalently, by the unitary matrix $U$ and the quantum measurement $\cbra{M_m}$.
The difference is that the quantum estimator can be understood by replacing $\ket{\mathbf{z}}$ in \cref{eq:prob-X=m} with $\rho^{\otimes \mathsf{n}}$. 
Specifically, the estimator $\hat{E}_\mathsf{n}$ is defined by
\begin{equation} \label{eq:def-hatE-q}
    \Pr\sbra*{ \hat{E}_\mathsf{n}\rbra{\rho} = m } = \tr\rbra*{M_m U \rbra[\big]{ \rho^{\otimes \mathsf{n}} \otimes \ketbra{\bar 0}{\bar 0}_{\mathsf{anc}} } U^\dag M_m^\dag }.
\end{equation}
Similar to \cref{eq:mse-intro}, the MSE of the estimator $\hat{E}_\mathsf{n}\rbra{\rho}$ is given by: 
\begin{equation} \label{eq:mse-q-intro}
    \textup{MSE}\rbra{\hat{E}_\mathsf{n}, \mathrm{F}_\alpha,\rho} 
    = \Var \sbra*{\hat{E}_\mathsf{n}\rbra{\rho}} + \rbra*{ \E \sbra*{\hat{E}_\mathsf{n}\rbra{\rho}} - \mathrm{F}_\alpha\rbra{\rho} }^2.
\end{equation}
Let $\cbra{\alpha} \coloneqq \alpha - \floor{\alpha}$ denote the fractional part of $\alpha$. 
The construction of $\hat{E}_\mathsf{n}\rbra{\rho}$ is given below. 

\paragraph{Step 1: Generalized SWAP test.}
We notice a quantum circuit shown in \cref{fig:swap-intro} based on the SWAP test \cite{BCWdW01,EAO+02}, which allows us to estimate values of the form $\tr\rbra{B \rho^{k}}$ for any matrix $B$ given a unitary matrix $W$ whose upper left block is $B$. 

Specifically, suppose that the matrix $B$ acts on $\mathcal{H}_{S}$ and the unitary matrix $W$ acts on $\mathcal{H}_{2^aS}$ with $B$ its upper left block. 
Then, the quantum circuit in \cref{fig:swap-intro} can be described by a unitary matrix $U_{k,W}^{\rbra{1}}$ on the Hilbert space $\mathcal{H}_{\mathsf{C}} \otimes \mathcal{H}_{\mathsf{A}_1} \otimes \dots \otimes \mathcal{H}_{\mathsf{A}_{k}}$ with $\mathcal{H}_{\mathsf{C}} \simeq \mathcal{H}_2$ and $\mathcal{H}_{\mathsf{A}_i} \simeq \mathcal{H}_S \otimes \mathcal{H}_{2^a}$ such that 
\begin{equation} \label{eq:def-U1}
    U_{k,W}^{\rbra{1}} \coloneqq H_{\mathsf{C}} \cdot \textup{Ctrl}_{\mathsf{C}}\textup{-}W_{\mathsf{A}_1} \cdot \textup{Ctrl}_{\mathsf{C}}\textup{-}\rbra{\textup{Shift}_{k}}_{\mathsf{A}_1\dots\mathsf{A}_{k}} \cdot H_{\mathsf{C}},
\end{equation}
where $\textup{Ctrl}_{\mathsf{C}}\textup{-}V_{\mathsf{B}} \coloneqq \diag\rbra{I_{\mathsf{B}}, V_{\mathsf{B}}}$ acts on $\mathcal{H}_{\mathsf{C}} \otimes \mathcal{H}_{\mathsf{B}}$ for any unitary matrix $V_{\mathsf{B}}$ on any subsystem $\mathcal{H}_{\mathsf{B}}$. 
Here, $H$ is the Hadamard matrix of order $2$ and $\textup{Shift}_{k}$ is a unitary matrix such that
\begin{equation}
    \textup{Shift}_{k} \colon \ket{\psi_1}_{\mathsf{A}_1} \otimes \ket{\psi_2}_{\mathsf{A}_2} \otimes \dots \otimes \ket{\psi_{k-1}}_{\mathsf{A}_{k-1}} \otimes \ket{\psi_{k}}_{\mathsf{A}_{k}} \mapsto \ket{\psi_2}_{\mathsf{A}_1} \otimes \ket{\psi_3}_{\mathsf{A}_2} \otimes \dots \otimes \ket{\psi_{k}}_{\mathsf{A}_{k-1}} \otimes \ket{\psi_1}_{\mathsf{A}_{k}}
\end{equation}
for any $\ket{\psi_1}, \ket{\psi_2}, \dots, \ket{\psi_{k}}$.

\begin{figure} [!htp]
    \centering
    \begin{quantikz} [row sep = {25pt, between origins}]
        & \lstick{$\ket{0}_{\mathsf{C}}$} \wireoverride{n} & \gate{H} & \ctrl{2} & \ctrl{2} & \gate{H} & \meter{} & \setwiretype{c} \rstick{$x$} \\
        \lstick[2]{$\mathsf{A}_1$}~ & \lstick{$\rho$} \wireoverride{n} & & \gate[6,disable auto height]{\textup{Shift}_{k}} & \gate[2,disable auto height]{W = \begin{bmatrix}
            B & * \\
            * & * 
        \end{bmatrix}} & & \\
        & \lstick{$\ket{\bar 0}$} \wireoverride{n} & & & & &  \\
        & \lstick{$\vdots$} \wireoverride{n} & & & & & \\
        & \lstick{$\vdots$} \wireoverride{n} & & & & & \\
        \lstick[2]{$\mathsf{A}_{k}$}~ & \lstick{$\rho$} \wireoverride{n} & & & & & \\
        & \lstick{$\ket{\bar 0}$} \wireoverride{n} & & & & & 
    \end{quantikz}
    \caption{Quantum circuit for estimating $\tr\rbra{B\rho^{k}}$.}
    \label{fig:swap-intro}
\end{figure}

It can be shown (see \cref{lemma:swap-block}) that if we apply the unitary matrix $U_{k,W}^{\rbra{1}}$ to the mixed quantum state
\begin{equation}
    \sigma \coloneqq \ketbra{0}{0}_{\mathsf{C}} \otimes \rbra{\rho \otimes \ketbra{\bar 0}{\bar 0}}_{\mathsf{A}_1} \otimes \dots \otimes \rbra{\rho \otimes \ketbra{\bar 0}{\bar 0}}_{\mathsf{A}_{k}} \simeq \ketbra{0}{0} \otimes \rbra{\rho \otimes \ketbra{\bar 0}{\bar 0}}^{\otimes k},
\end{equation}
followed by a measurement in the computational basis of $\mathcal{H}_\mathsf{C}$, where $\ket{\bar 0} \coloneqq \ket{0}^{\otimes a}$, then the measurement outcome $x$ will be $0$ with probability
\begin{equation} \label{eq:prob-x=b}
    \Pr\sbra{x = 0} = \tr\rbra*{ \Pi_0 U_{k,W}^{\rbra{1}} \sigma \rbra{U_{k,W}^{\rbra{1}}}^\dag \Pi_0^\dag } = \frac{1 + \Real\rbra*{\tr\rbra{B\rho^{k}}}}{2},
\end{equation}
where $\Pi_0 \coloneqq \ketbra{0}{0}_{\mathsf{C}}$.
As a warm-up, taking $k = \floor{\alpha}$ and if we have a unitary matrix $W$ whose upper left block is $B = \rho^{\cbra{\alpha}}$ (which is Hermitian), then $\Pr\sbra{x = 0} = \rbra{1+\tr\rbra{\rho^\alpha}}/{2}$. 

\paragraph{Step 2: Quantum singular value transformation.}
Let $A$ be an Hermitian matrix on $\mathcal{H}_S$ and $p \in \mathbb{R}\sbra{x}$ be an even or odd polynomial with the condition that $\abs{p\rbra{x}} \leq 1$ for $x \in \sbra{-1, 1}$. 
If we have a unitary matrix $U_A$ on $\mathcal{H}_{2^aS}$ with upper left block $A$, then quantum singular value transformation \cite{GSLW19} (see \cref{lemma:qsvt}) allows us to construct another unitary matrix $U_{p\rbra{A}}$ with upper left block $p\rbra{A}$ such that $U_{p\rbra{A}}$ is of the form
\begin{equation}
    U_{p\rbra{A}} = V_{Q} \rbra{U'_Q \otimes I} V_{Q-1} \rbra{U'_{Q-1} \otimes I} \dots V_1 \rbra{U'_1 \otimes I} V_0, 
\end{equation}
where $Q = O\rbra{\deg\rbra{p}}$, $V_0, V_1, \dots, V_Q$ are unitary matrices independent of $U_A$ but dependent on $p$, and each $U'_1, U'_2, \dots, U'_Q$ is either $\textup{Ctrl}\textup{-}U_A \coloneqq \diag\rbra{I, U_A}$ or $\textup{Ctrl}\textup{-}U_A^\dag \coloneqq \diag\rbra{I, U_A^\dag}$. 
Here, we say that $U_{p\rbra{A}}$ uses $U_A$ for $Q$ times.\footnote{In quantum computing, we say that the unitary operator $U_{p\rbra{A}}$ can be implemented by a quantum circuit with $Q$ queries to the unitary operator $U_A$. Here, a query to $U_A$ means a query to \mbox{(controlled-)$U_A$} or \mbox{(controlled-)$U_A^\dag$}.} 

Let $U_{k, p, A}^{\rbra{2}} \coloneqq U_{k,U_{p\rbra{A}}}^{\rbra{1}}$ be the unitary matrix $U_{k,W}^{\rbra{1}}$ defined by \cref{eq:def-U1} with $W \coloneqq U_{p\rbra{A}}$, where, in other words, we take $B \coloneqq p\rbra{A}$ for our purpose.
Then, the same argument for \cref{eq:prob-x=b} gives that
\begin{equation} \label{eq:prob2}
    \tr\rbra*{ \Pi_0 U_{k, p, A}^{\rbra{2}} \sigma \rbra{U_{k, p, A}^{\rbra{2}}}^\dag \Pi_0^\dag } = \frac{1 + \Real\rbra*{\tr\rbra*{p\rbra{A} \rho^{k}}}}{2}.
\end{equation}
As a warm-up, if we take $k = \floor{\alpha}$ and $p\rbra{x} = x^{\cbra{\alpha}}$, then $\eqref{eq:prob2} = \rbra{1+\tr\rbra{\rho^\alpha}}/{2}$.

\paragraph{Step 3: Samplization.}
In Step 2, the unitary matrix $U_{k, p, A}^{\rbra{2}}$ is constructed using a given unitary matrix $U_A$ with upper left block $A$. 
At this stage, our goal is to construct a unitary matrix $U_{A}$ with upper left block $A \propto \rho$.
This can be done with the quantum samplizer \cite{WZ25b,WZ25} (see \cref{lemma:samplizer}), which allows us to approximate a quantum circuit $\mathcal{C}\sbra{U_{\rho/2}}$ with a quantum channel $\mathcal{E}\sbra{\rho}$ to precision $\delta$. 
Specifically, if $\mathcal{C}\sbra{U_A}$ uses $Q$ queries to the unitary oracle $U_A$ (whose upper left block is $A$), then $\mathcal{E}\sbra{\rho}$ can be implemented using $\widetilde{\Theta}\rbra{Q^2/\delta}$ samples of $\rho$ such that $\Abs{\mathcal{C}\sbra{U_{\rho/2}} - \mathcal{E}\sbra{\rho}}_{\diamond} \leq \delta$ for some unitary oracle $U_{\rho/2}$ with upper left block $\rho/2$, where $\Abs{\cdot}_{\diamond}$ is the diamond norm for superoperators. 

In our case, we recall that $U_{k, p, A}^{\rbra{2}} = U_{k, p, A}^{\rbra{2}}\sbra{U_A}$ is a unitary matrix that uses $U_A$ for $O\rbra{\deg\rbra{p}}$ times. 
That is, $U_{k, p, A}^{\rbra{2}}$ is a quantum circuit with $O\rbra{\deg\rbra{p}}$ queries to the unitary oracle $U_A$. 
By the quantum samplizer, we can implement a quantum channel $\mathcal{E}\sbra{\rho}$ using $\mathsf{S} = \widetilde{O}\rbra{\rbra{\deg\rbra{p}}^2/\delta}$ samples of $\rho$ such that there is a unitary matrix $U_{\rho/2}$ with upper left block $\rho/2$ satisfying
\begin{equation}
    \Abs*{\rbra{U_{k, p, \rho/2}^{\rbra{2}}\sbra{U_{\rho/2}}}\rbra{\cdot}\rbra{U_{k, p, \rho/2}^{\rbra{2}}\sbra{U_{\rho/2}}}^\dag - \mathcal{E}\sbra{\rho}}_{\diamond} \leq \delta.
\end{equation}
This means that there is a unitary matrix $U^{\rbra{3}}_{k,p}$ (which implements $\mathcal{E}\sbra{\rho}$) and a unitary matrix $U_{\rho/2}$ (which has upper left block $\rho/2$) such that 
\begin{equation} \label{eq:prob3}
    \Abs*{ \tr_{\mathsf{env}}\rbra*{ U^{\rbra{3}}_{k,p} \rbra*{ \underbrace{\rho^{\otimes \mathsf{S}} \otimes \ketbra{0}{0}^{\otimes l}}_{\mathsf{env}} \otimes \sigma} \rbra{U^{\rbra{3}}_{k,p}}^\dag } - \rbra{U_{k, p, \rho/2}^{\rbra{2}}\sbra{U_{\rho/2}}} \sigma \rbra{U_{k, p, \rho/2}^{\rbra{2}}\sbra{U_{\rho/2}}}^\dag }_{1} \leq \delta.
\end{equation}
Combining \cref{eq:prob2,eq:prob3}, we have (note that $\tr\rbra{p\rbra{\rho/2} \rho}$ is a real number)
\begin{equation} \label{eq:prob4}
    \abs*{ \tr\rbra*{\Pi_0  U^{\rbra{3}}_{k,p} \rbra*{ \rho^{\otimes \mathsf{S}} \otimes \ketbra{0}{0}^{\otimes l} \otimes \sigma} \rbra{U^{\rbra{3}}_{k,p}}^\dag  \Pi_0^\dag} - \frac{1 + \tr\rbra*{p\rbra{\rho/2} \rho^{k}}}{2} } \leq \delta.
\end{equation}

\paragraph{Step 4: A baby estimator.}
Now for our purpose, we need the polynomial $p_{c,\delta,\varepsilon} \in \mathbb{R}\sbra{x}$ in \cite{Gil19} (see \cref{lemma:poly-approx-pos}) such that (i) $\abs{p_{c,\delta,\varepsilon}\rbra{x} - \frac{1}{2} x^c} \leq \varepsilon$ for $x \in \sbra{\delta, 1}$, (ii) $\abs{p_{c,\delta,\varepsilon}\rbra{x}} \leq 1$ for $x \in \sbra{-1, 1}$, and (iii) $\deg\rbra{p_{c,\delta,\varepsilon}} = O\rbra{\frac{1}{\delta}\log\rbra{\frac{1}{\varepsilon}}}$, where $O\rbra{\cdot}$ hides a constant that is independent of $c$.
Taking $p \coloneqq p_{\cbra{\alpha},\delta',\varepsilon'}$ and $k \coloneqq \floor{\alpha}$ in $U^{\rbra{3}}_{k,p}$, \cref{eq:prob4} gives
\begin{equation} \label{eq:before-perm}
    \abs*{ \tr\rbra*{\rbra*{ 2 \rbra{U^{\rbra{3}}_{\floor{\alpha},p_{\cbra{\alpha},\delta',\varepsilon'}}}^\dag \Pi_0 U^{\rbra{3}}_{\floor{\alpha},p_{\cbra{\alpha},\delta',\varepsilon'}} - I } \rbra*{ \rho^{\otimes \mathsf{S}} \otimes \ketbra{0}{0}^{\otimes l} \otimes \sigma }} - \tr\rbra*{p_{\cbra{\alpha},\delta',\varepsilon'}\rbra*{\frac{\rho}{2}}\rho^{\floor{\alpha}}}} \leq 2\delta,
\end{equation}
where $\mathsf{S} = \widetilde{O}\rbra{\rbra{\deg\rbra{p_{\cbra{\alpha},\delta',\varepsilon'}}}^2/\delta} = \widetilde{O}\rbra{1/\rbra{\delta'^2\delta}}$. 

Note that 
\begin{equation}
    \rho^{\otimes \mathsf{S}} \otimes \ketbra{0}{0}^{\otimes l} \otimes \sigma \simeq \underbrace{\rho^{\otimes \rbra{\floor{\alpha} + \mathsf{S}}}}_{\mathsf{A}} \otimes \underbrace{\ketbra{0}{0}^{\otimes \ell}}_{\mathsf{B}}
\end{equation}
for some $\ell \geq 0$.
Then, \cref{eq:before-perm} gives 
\begin{equation}
    \abs*{ \tr\rbra*{\widetilde{\mathcal{O}}_{\textup{baby}} \rho^{\otimes \rbra{\floor{\alpha} + \mathsf{S}}}} - \tr\rbra*{p_{\cbra{\alpha},\delta',\varepsilon'}\rbra*{\frac{\rho}{2}}\rho^{\floor{\alpha}}} } \leq 2\delta,
\end{equation}
where
\begin{equation}
    \widetilde{\mathcal{O}}_{\textup{baby}} \coloneqq 2 \underbrace{\bra{0}^{\otimes \ell}}_{\mathsf{B}} \rbra{U^{\rbra{3}}_{\floor{\alpha},p_{\cbra{\alpha},\delta',\varepsilon'}}}^\dag \Pi_0 U^{\rbra{3}}_{\floor{\alpha},p_{\cbra{\alpha},\delta',\varepsilon'}} \underbrace{\ket{0}^{\otimes \ell}}_{\mathsf{B}} - I.
\end{equation}
Let
\begin{equation}
    \mathcal{O}_{\textup{baby}} \coloneqq 2^{1+\cbra{\alpha}}\widetilde{\mathcal{O}}_{\textup{baby}}.
\end{equation}
Then, it can be shown (see \cref{prop:baby-diff}) that
\begin{equation}
    \abs*{ \tr\rbra*{\mathcal{O}_{\textup{baby}} \rho^{\otimes \rbra{\floor{\alpha} + \mathsf{S}}}} - \tr\rbra{\rho^\alpha} } \leq 8\delta + 4\varepsilon' + 6 \rbra{2\delta'}^{\floor{\alpha}-1}.
\end{equation} 

The Hermitian matrix $\mathcal{O}_{\textup{baby}}$ defines an estimator $\hat{E}_{\textup{baby},\mathsf{n}}$ with sample complexity $\mathsf{n} = \floor{\alpha} + \mathsf{S}$, which is able to give an upper bound on the second term of \cref{eq:mse-q-intro}. 
However, the first term of \cref{eq:mse-q-intro} only has the trivial upper bound $\Var \sbra{\hat{E}_{\textup{baby},\mathsf{n}}\rbra{\rho}} \leq O\rbra{1}$.
In the next step, we construct the Hermitian matrix $\mathcal{O}$ based on $\mathcal{O}_{\textup{baby}}$. 

\paragraph{Step 5: Final estimator.}
For $\mathsf{n} \geq \floor{\alpha} + \mathsf{S}$, let $\mathsf{m} = \frac{\mathsf{n}}{\floor{\alpha} + \mathsf{S}}$ (without loss of generality, we assume that $\mathsf{m}$ is an integer) and we split $\rho^{\otimes \mathsf{n}}$ into $\mathsf{m}$ subsystems $\mathsf{Z}_1, \mathsf{Z}_{2}, \dots, \mathsf{Z}_{\mathsf{m}}$, each with the same dimension as $\mathcal{O}_{\textup{baby}}$. 
Then, we can construct the following Hermitian matrix
\begin{equation}
    \mathcal{O} \coloneqq \frac{1}{\mathsf{m}} \sum_{i=1}^{\mathsf{m}} \rbra{\mathcal{O}_{\textup{baby}}}_{\mathsf{Z}_i},
\end{equation}
which defines the estimator $\hat{E}_{\mathsf{n}}$. 
It can be verified (see \cref{prop:E-Var-O}) that 
\begin{equation}
    \Var \sbra*{\hat{E}_\mathsf{n}\rbra{\rho}} \leq O\rbra*{\frac{1}{\mathsf{m}}}, \text{ and } \E \sbra*{\hat{E}_\mathsf{n}\rbra{\rho}} = \tr\rbra*{\mathcal{O}_{\textup{baby}} \rho^{\otimes \rbra{\floor{\alpha}+\mathsf{S}}}}.
\end{equation}
By \cref{eq:mse-q-intro}, the MSE of the estimator $\hat{E}_{\mathsf{n}}$ is given by
\begin{equation}
    \textup{MSE}\rbra{\hat{E}_\mathsf{n}, \mathrm{F}_\alpha,\rho} \leq O\rbra*{\frac{1}{\mathsf{m}}} + \rbra*{8\delta + 4\varepsilon' + 6 \rbra{2\delta'}^{\floor{\alpha}-1}}^2.
\end{equation}
Taking $\delta = \varepsilon' = \mathsf{n}^{-\frac{1}{3}+o\rbra{1}}$ and $\delta' = \mathsf{n}^{-o\rbra{1}}$ (which gives $\mathsf{m} = \min\cbra{\alpha^{-1}\mathsf{n}, \mathsf{n}^{\frac{2}{3}-o\rbra{1}}}$), we finally have the maximum MSE risk that
\begin{equation}
    \sup_{\rho \in \mathcal{D}\rbra{\mathcal{H}}} \textup{MSE}\rbra{\hat{E}_\mathsf{n}, \mathrm{F}_\alpha,\rho} \lesssim \alpha\mathsf{n}^{-1} + \mathsf{n}^{-\frac{2}{3}+o\rbra{1}}.
\end{equation}
Therefore, we have obtained an estimator $\hat{E}_{\mathsf{n}}\rbra{\rho}$ and thus an estimator $\hat{E}_{\mathsf{n}}\rbra{P} \coloneqq f_{\mathcal{O}}\rbra{\mathbf{z}}$ with an explicitly specified Hermitian matrix $\mathcal{O}$, achieving the upper bound stated in \cref{thm:main}.

\paragraph{Comparison with prior approaches.}
The prior approach in \cite{JVHW15,JVHW17} only works for the classical functional $\mathrm{F}_\alpha\rbra{P}$ while our approach also works for the quantum functional $\mathrm{F}_\alpha\rbra{\rho}$ using clearly different techniques. 

The prior approach in \cite{LW25} that estimates the quantum functional $\mathrm{F}_\alpha\rbra{\rho}$ with sample complexity $\mathsf{n} = \exp\rbra{O\rbra{\alpha}}$ is achieved by the Hadamard test framework \cite{AJL09}, involving the polynomial approximation of $x^{\alpha - 1}$ for the use of quantum singular value transformation \cite{GSLW19} and samplizer \cite{WZ25}. 
Due to the use of samplizer, their approach introduces an unexpected factor of $\rbra{\frac{1}{2}}^{\alpha - 1} = \exp\rbra{O\rbra{\alpha}}$ into the estimated value, which causes their complexity exponential in $\alpha$ (see \cite[Theorem 3.2]{LW25}). 
By sharp contrast, our approach employs the generalized SWAP test framework \cite{EAO+02}, which appears to be even more powerful for our purposes, and involves the polynomial approximation of a different function $x^{\cbra{\alpha}-1}$. 
Although we still use the samplizer, our approach only introduces a very small factor of $\rbra{\frac{1}{2}}^{\cbra{\alpha} - 1} = O\rbra{1}$. 
This is why our approach can improve the results in \cite{LW25} exponentially.

The prior approach in \cite{CW25} that estimates the quantum functional $\mathrm{F}_\alpha\rbra{\rho}$ with sample complexity $\mathsf{n} = O\rbra{\alpha^2}$ is achieved by the weak Schur sampling framework \cite{CHW07} and with the advanced analysis in \cite{OW17}, which is very different from our approach from the very beginning. 
Their sample complexity $\mathsf{n} = O\rbra{\alpha^2}$ cannot be directly improved due to the choice of the precision parameters in their algorithm (see \cite[Theorem 14]{CW25}). 

In addition, the approaches in \cite{LW25,CW25} did not consider the minimax MSE rate of their estimators. 

\subsubsection{Lower bounds}

Our lower bound $\Omega\rbra{\alpha\mathsf{n}^{-1}}$ on the minimax MSE for estimating $\mathrm{F}_{\alpha}\rbra{P}$ (see \cref{thm:lower}) is obtained by a sample lower bound of $\Omega\rbra{\alpha}$ for estimating $\mathrm{F}_{\alpha}\rbra{P}$ for sufficiently large real number $\alpha$ (see \cref{thm:sample-lb}).
This is achieved by combining the sample lower bound for hypothesis testing \cite{BY02} with the hard instance recently found in \cite{CWYZ25}. 
Specifically, we consider the problem of distinguishing the two distributions $P^\pm$: 
\begin{equation}
    p^{\pm}_1 = 1 - \frac{1}{\alpha} \pm \frac{\varepsilon}{\alpha}, \qquad p^{\pm}_2 = \frac{1}{\alpha} \mp \frac{\varepsilon}{\alpha}.
\end{equation}
It can be verified that the Hellinger distance between $P^+$ and $P^-$ is $d_{\textup{H}}\rbra{P^+, P^-} \lesssim \varepsilon/\sqrt{\alpha}$, whereas $\mathrm{F}_\alpha\rbra{P^+} - \mathrm{F}_\alpha\rbra{P^-} \gtrsim \varepsilon$. 
Therefore, any estimator for $\mathrm{F}_\alpha\rbra{P}$ to within additive error $\Theta\rbra{\varepsilon}$ can be used to distinguish the two distributions $P^\pm$ with high probability, thus requiring sample complexity $\Omega\rbra{1/d_{\textup{H}}^2\rbra{P^+, P^-}} \geq \Omega\rbra{\alpha/\varepsilon^2}$. 
Finally, we complete it by relating the sample lower bound $\Omega\rbra{\alpha/\varepsilon^2}$ to the minimax MSE risk of estimating $\mathrm{F}_\alpha\rbra{P}$. 

\paragraph{Comparison with prior approaches.}
Our lower bound on the minimax MSE for estimating $\mathrm{F}_\alpha\rbra{P}$ is shown by an information-theoretic argument for distinguishing two discrete distributions. 
Our approach is different from the prior approach in \cite{JVHW17}, where they used the van Trees inequality (cf.\ \cite{GL95}) to relate the minimax risk to the Bayes risk (see \cite[Section VI-A]{JVHW17}). 

To establish the non-asymptotic lower bound on the minimax MSE for estimating $\mathrm{F}_\alpha\rbra{P}$, we need to provide a non-asymptotic lower bound on $\mathrm{F}_\alpha\rbra{P^+} - \mathrm{F}_\alpha\rbra{P^-}$ for any sufficiently large real number $\alpha \gg 1$ (see \cref{lemma:F-alpha-diff-lb}). 
In comparison, the prior work \cite{CWYZ25} only considered the asymptotic lower bound on $\mathrm{F}_\alpha\rbra{P^+} - \mathrm{F}_\alpha\rbra{P^-}$, which does not directly apply to our non-asymptotic regime. 
To prove \cref{lemma:F-alpha-diff-lb}, we need some specific inequalities for real $\alpha$, compared to the analysis of function limits in \cite{CWYZ25}.

\subsection{Related work}

The estimation of functionals of multiple unknown distributions such as the total variation distance and the Kullback--Leibler divergence has also been studied in \cite{VV17,JHW18,HJW20,BZLV18}.
The promised decision versions (i.e., the closeness testing of discrete distributions) were studied in \cite{BFR+13,CDVV14}.
When one of the distributions is known, the closeness testing was studied in \cite{BFF+01,Pan08,DK16}.
There are quantum approaches to the property testing of a classical distribution in the quantum query model, including closeness testing \cite{BHH11,CFMdW10,Mon15,GL20,LWL24,CKO25} and entropy estimation \cite{LW19,GL20,WZL24,LW25,Wan25,SJ25}.

The estimation of functionals of multiple unknown quantum states has also been studied, including multivariate traces \cite{QKW24}, the Uhlmann fidelity and trace distance \cite{WZC+23,GP22,WGL+24,WZ24,Wan24,WZ24b,UNWT25}, the Umegaki divergence (i.e., the von Neumann relative entropy) \cite{Hay25}, the $\ell_\alpha$ distance \cite{LW25b}, and the Tsallis relative entropy \cite{BGW25}.
The promised decision versions (i.e., the closeness testing of quantum states) were studied in \cite{OW21,BOW19,OW26}.

\subsection{Discussion}

In this paper, we present an estimator for the high-order functionals $\mathrm{F}_{\alpha}\rbra{P}$ of a discrete distribution $P$ and $\mathrm{F}_\alpha\rbra{\rho}$ of a quantum state $\rho$, which improves the prior results in \cite{JVHW15,JVHW17} and \cite{LW25,CW25}, and achieves the optimal sample complexity and the minimax MSE in a certain regime. 
This is done through a novel approach using techniques from quantum computing, which brings new insights into statistics. 
A conceptual message is that the theory of quantum computing can contribute to other fields, which adds to the growing list of quantum proofs for classical theorems \cite{DdW11}. 

We conclude with some open questions for future research. 
\begin{enumerate}
    \item Can we improve the minimax MSE bounds for estimating $\mathrm{F}_{\alpha}\rbra{P}$ and $\mathrm{F}_{\alpha}\rbra{\rho}$? 
    One may conjecture that the minimax MSE rate is $\alpha \mathsf{n}^{-1}$ for any large real number $\alpha \gg 1$. 
    \item As mentioned in \cref{sec:time}, the classical time complexity of our estimator is exponential in $\mathsf{n}$. 
    Nevertheless, our estimator can have a time complexity linear in $\mathsf{n}$ on a quantum computer. 
    Can we find an estimator for $\mathrm{F}_{\alpha}\rbra{P}$ that is time-efficient on a classical computer?
    \item As a similar estimation task, can we improve the prior bounds for estimating the $\alpha$-R\'enyi entropy $\mathrm{H}_\alpha^{\textup{R\'en}}\rbra{P} = \frac{\ln\rbra{\mathrm{F}_\alpha\rbra{P}}}{1-\alpha}$ in terms of $\alpha$? 
    The prior best approach due to \cite{AOST17} has sample complexity (doubly) exponential in $\alpha$. 
    Moreover, what about the estimation of the Hill number $\prescript{\alpha}{}{\mathrm{D}}\rbra{P} = \rbra{\mathrm{F}_\alpha\rbra{P}}^{1/\rbra{1-\alpha}}$ and frequency moment $\mathrm{M}_{\alpha}\rbra{P} = S^\alpha \mathrm{F}_\alpha\rbra{P}$?
    \item Can we find more classical applications to which quantum computing brings new insights? 
    Conversely, can we find any quantum applications to which novel classical techniques can be applied?
\end{enumerate}

\section{Overview} \label{sec:overview}

In this paper, we show tight upper and lower bounds on the MSE for estimating the high-order functionals $\mathrm{F}_{\alpha}\rbra{P}$ and $\mathrm{F}_\alpha\rbra{\rho}$, stated as follows. 

\begin{theorem}[Minimax estimation of $\mathrm{F}_\alpha\rbra{P}$ and $\mathrm{F}_\alpha\rbra{\rho}$, \cref{thm:upper,thm:lower} combined] \label{thm:minimax}
    For any real number $\alpha \gg 1$ and integer $\mathsf{n} \gtrsim \alpha$,\footnote{Note that the lower bound in \cref{eq:lb} holds for any $\alpha \geq 4$ according to \cref{thm:lower} while the upper bound in \cref{eq:ub} holds for any $\alpha \geq 2$ according to \cref{thm:upper}. 
    Actually, for any constant $2 \leq \alpha < 4$, the lower bound can be obtained by the lower bound $\Omega\rbra{\mathsf{n}^{-1}}$ for $\alpha \geq 1.5$ due to \cite[Theorem 2]{JVHW17}.}
    we have
    \begin{align}
        \frac{1}{4e^3} \cdot \min\cbra*{\frac{\alpha-1}{4\mathsf{n}}, 1} 
        & \leq \inf_{\hat{E}_\mathsf{n}} \sup_{P \in \mathcal{M}_S} \textup{MSE}\rbra{\hat{E}_\mathsf{n}, \mathrm{F}_\alpha,P} \label{eq:lb} \\
        & \leq \inf_{\hat{E}_\mathsf{n}} \sup_{\rho \in \mathcal{D}\rbra{\mathcal{H}}} \textup{MSE}\rbra{\hat{E}_\mathsf{n}, \mathrm{F}_\alpha,\rho} \\
        & \leq \frac{32\alpha}{\mathsf{n}} + \Theta\rbra*{ \rbra*{\frac{\log^4\rbra{\mathsf{n}}}{\mathsf{n}}}^{\frac{2\floor{\alpha}-2}{3\floor{\alpha}-1}} }. \label{eq:ub}
    \end{align}
    In particular, when 
    \begin{equation}
        \alpha \lesssim \mathsf{n} \lesssim \alpha^{3-\frac{4}{\floor{\alpha}+1}} \rbra*{\log\rbra{\alpha}}^{-8+\frac{16}{\floor{\alpha}+1}} = \alpha^{3 - o\rbra{1}},
    \end{equation}
    we achieve the minimax optimal MSE risk
    \begin{equation}
        \inf_{\hat{E}_\mathsf{n}} \sup_{P \in \mathcal{M}_S} \textup{MSE}\rbra{\hat{E}_\mathsf{n}, \mathrm{F}_\alpha,P} \asymp \inf_{\hat{E}_\mathsf{n}} \sup_{\rho \in \mathcal{D}\rbra{\mathcal{H}}} \textup{MSE}\rbra{\hat{E}_\mathsf{n}, \mathrm{F}_\alpha,\rho} \asymp \alpha\mathsf{n}^{-1}.
    \end{equation}
\end{theorem}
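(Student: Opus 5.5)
The statement splits into three inequalities, which I will prove separately before combining them.

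\emph{Middle inequality.} This is the reduction already noted in \cref{sec:main}. Any quantum estimator applied to $\rho_P^{\otimes\mathsf{n}}$ defines a classical estimator on $\mathbf{z}\sim P^{\mathsf{n}}$ through $f_{\mathcal{O}}(\mathbf{z})$, and it has the same MSE. Conversely, every classical estimator is a quantum estimator that measures in the computational basis, and $\mathcal{M}_S$ embeds into $\mathcal{D}(\mathcal{H})$ via $P\mapsto\rho_P$. So the supremum over quantum states is at least the supremum over diagonal states, which gives the middle inequality.

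\emph{Upper bound \eqref{eq:ub} (\cref{thm:upper}).} I follow Steps~1--5 of \cref{sec:estimator-intro}.
\begin{enumerate}
\item Build the baby observable $\mathcal{O}_{\textup{baby}}$ from the generalized SWAP test (\cref{lemma:swap-block}), QSVT (\cref{lemma:qsvt}) with Gilyén's polynomial $p_{\{\alpha\},\delta',\varepsilon'}$ (\cref{lemma:poly-approx-pos}), and the samplizer (\cref{lemma:samplizer}).
\item By \cref{prop:baby-diff}, the bias is at most $8\delta+4\varepsilon'+6(2\delta')^{\lfloor\alpha\rfloor-1}$. The baby estimator uses $\lfloor\alpha\rfloor+\mathsf{S}$ samples, where $\mathsf{S}=\widetilde O(1/(\delta'^2\delta))$.
\item Average over $\mathsf{m}=\mathsf{n}/(\lfloor\alpha\rfloor+\mathsf{S})$ disjoint blocks. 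By \cref{prop:E-Var-O}, the variance is $O(1/\mathsf{m})$ with an explicit constant, since $\|\mathcal{O}_{\textup{baby}}\|\le 2^{1+\{\alpha\}}\le 4$ and hence the variance is at most $16/\mathsf{m}$.
\item Handle $\mathsf{n}$ not divisible by $\lfloor\alpha\rfloor+\mathsf{S}$ by discarding leftover samples, which at most doubles the constant. This should give the explicit term $32\alpha/\mathsf{n}$ whenever $\lfloor\alpha\rfloor$ dominates $\mathsf{S}$, and the other term otherwise.
\item Optimize the parameters, which I expect to be the main obstacle. The bias term $(2\delta')^{\lfloor\alpha\rfloor-1}$ forces $\delta'$ only polynomially small: choose $(2\delta')^{\lfloor\alpha\rfloor-1}\approx\delta$. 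Take $\varepsilon'=\delta$. Then balance $\delta^2$ against $\mathsf{S}/\mathsf{n}\approx \log^{O(1)}(\mathsf{n})/(\delta'^2\delta\,\mathsf{n})$. With $\delta'\approx\delta^{1/(\lfloor\alpha\rfloor-1)}$, the balance $\delta^{3+2/(\lfloor\alpha\rfloor-1)}\approx\log^4(\mathsf{n})/\mathsf{n}$ gives $\delta^2\approx(\log^4\mathsf{n}/\mathsf{n})^{(2\lfloor\alpha\rfloor-2)/(3\lfloor\alpha\rfloor-1)}$, matching the stated exponent. The care needed is in tracking the polylog factors in the samplizer and in the polynomial degree, to confirm that exactly $\log^4$ appears.
\end{enumerate}

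\emph{Lower bound \eqref{eq:lb} (\cref{thm:lower}).}
\begin{enumerate}
\item Use the two-point family $p_1^\pm=1-\frac1\alpha\pm\frac\varepsilon\alpha$, $p_2^\pm=\frac1\alpha\mp\frac\varepsilon\alpha$.
\item Bound the squared Hellinger distance: $d_{\textup{H}}^2(P^+,P^-)\lesssim\varepsilon^2/\alpha$, by a direct computation dominated by the small coordinate.
\item Bound the gap: $\mathrm{F}_\alpha(P^+)-\mathrm{F}_\alpha(P^-)\gtrsim\varepsilon$ nonasymptotically (\cref{lemma:F-alpha-diff-lb}). The first coordinate contributes a derivative $\approx\alpha(1-1/\alpha)^{\alpha-1}\cdot\frac{2\varepsilon}{\alpha}\approx 2e^{-1}\varepsilon$, while the second coordinate's contribution is negligible. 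Making this explicit for real $\alpha\ge4$ requires inequalities such as $(1-1/\alpha)^{\alpha-1}\ge e^{-1}$; this is where the $e^{-3}$ constant arises.
\item Apply the standard Le Cam two-point bound. For any estimator, $\max_\pm \mathrm{MSE}\ge \frac{\Delta^2}{16}\bigl(1-\mathrm{TV}(P^{+\otimes\mathsf{n}},P^{-\otimes\mathsf{n}})\bigr)$, where $\Delta$ is the gap. Bound $1-\mathrm{TV}$ from below using the tensorization $1-d_{\textup{H}}^2(P^{\otimes \mathsf{n}})=(1-d_{\textup{H}}^2)^{\mathsf{n}}$, which gives a term of order $e^{-O(\mathsf{n}\varepsilon^2/\alpha)}$.
\item Choose $\varepsilon^2=\min\{(\alpha-1)/(4\mathsf{n}),1\}$ (scaled so that $p^\pm$ remain valid), which yields $\frac{1}{4e^3}\min\{\frac{\alpha-1}{4\mathsf{n}},1\}$.
\end{enumerate}

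\emph{Combining.} For $\alpha\lesssim\mathsf{n}\lesssim\alpha^{3-4/(\lfloor\alpha\rfloor+1)}(\log\alpha)^{-8+\dots}$, the second term of \eqref{eq:ub} is at most $\alpha/\mathsf{n}$, which is checked by comparing exponents. The lower bound is $\asymp\alpha/\mathsf{n}$ because $\mathsf{n}\gtrsim\alpha$. Together these give the stated minimax optimal rate $\alpha\mathsf{n}^{-1}$.
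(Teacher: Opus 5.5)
Your middle inequality and upper bound follow the paper step for step. In particular, your balancing $(2\delta')^{\lfloor\alpha\rfloor-1}\approx\delta=\varepsilon'$ is exactly the paper's choice $\delta'=(\log^4\mathsf{n}/\mathsf{n})^{1/(3\lfloor\alpha\rfloor-1)}$, $\delta=\varepsilon'=\delta'^{\lfloor\alpha\rfloor-1}$. One small correction: $f_{\mathcal{O}}(\mathbf{z})$ is the conditional expectation of the quantum outcome given $\mathbf{z}$. Its MSE is therefore \emph{at most}, not equal to, that of the quantum estimator on $\rho_P^{\otimes\mathsf{n}}$, which is the direction you need.

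For the lower bound you use the same hard instance and the same two estimates, $d_{\mathrm{H}}^2(P^+,P^-)\le\varepsilon^2/(\alpha-1)$ and \cref{lemma:F-alpha-diff-lb}, but a different reduction.
\begin{itemize}
\item The paper first proves the sample-complexity bound \cref{thm:sample-lb} by feeding the Hellinger bound into Bar-Yossef's testing lower bound (\cref{thm:hellinger}). It then converts ``fails with probability $>\delta$ at accuracy $\varepsilon$'' into $\textup{MSE}>\delta\varepsilon^2$ via Markov's inequality with $\delta=1/(4e)$. That choice supplies the third factor of $e$ in $e^3$; the other $e^2$ comes from the squared gap constant $2/e$.
\item Your Le Cam two-point argument with Hellinger tensorization goes directly to the MSE and is self-contained.
\item The paper's route additionally gives the high-probability sample bound with its $\ln(1/(4\delta))$ dependence, which it reuses in \cref{corollary:sample-complexity}.
\end{itemize}

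If you want exactly the constant $\frac{1}{4e^3}$, your $\frac{\Delta^2}{16}(1-\mathrm{TV})$ is slightly too lossy. With $\Delta\ge 2\varepsilon/e$ you would need $1-\mathrm{TV}\ge 1/e$. The generic bound $\mathrm{TV}\le\sqrt{1-(1-d_{\mathrm{H}}^2)^{2\mathsf{n}}}$, together with $(1-d_{\mathrm{H}}^2)^{\mathsf{n}}\ge 3/4$, only gives $1-\mathrm{TV}\ge 1-\sqrt{7}/4\approx 0.34$. (The bound $(1-d_{\mathrm{H}}^2)^{\mathsf{n}}\ge 3/4$ holds because $\mathsf{n}d_{\mathrm{H}}^2\le 1/4$ in both regimes of your choice of $\varepsilon$.) Use the standard sharper form $\frac{\Delta^2}{8}(1-\mathrm{TV})$, obtained by thresholding at $\Delta/2$. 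Then you only need $1-\mathrm{TV}\ge 1/(2e)$, and the stated bound follows.
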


In \cite[Equations (16) and (116)]{JVHW17}, it was shown that for $\alpha \geq 2$, 
\begin{equation}
    \frac{\rbra*{60\alpha\rbra*{\mathrm{B}\rbra{\alpha+3,3}-\mathrm{B}\rbra{\alpha+2, 4}}}^2}{5\mathsf{n}+45} \leq \inf_{\hat{E}_\mathsf{n}} \sup_{P \in \mathcal{M}_S} \textup{MSE}\rbra{\hat{E}_\mathsf{n}, \mathrm{F}_\alpha,P} \leq \rbra*{\frac{\alpha\rbra{\alpha-1}}{2\mathsf{n}}}^2 + \frac{\alpha^2}{4\mathsf{n}},
\end{equation}
where 
\begin{equation}
    \mathrm{B}\rbra{a, b} = \int_{0}^{1} x^{a-1} \rbra{1-x}^{b-1} \mathrm{d} x
\end{equation}
is the Beta function. 
When $\mathsf{n} \gtrsim \alpha^2$, their minimax MSE risk is asymptotically
\begin{equation}
    \alpha^{-4}\mathsf{n}^{-1} \lesssim \inf_{\hat{E}_\mathsf{n}} \sup_{P \in \mathcal{M}_S} \textup{MSE}\rbra{\hat{E}_\mathsf{n}, \mathrm{F}_\alpha,P} \lesssim \alpha^{2}\mathsf{n}^{-1}. 
\end{equation}
By comparison, \cref{thm:minimax} gives the minimax MSE risk
\begin{equation}
    \alpha \mathsf{n}^{-1} \lesssim \inf_{\hat{E}_\mathsf{n}} \sup_{P \in \mathcal{M}_S} \textup{MSE}\rbra{\hat{E}_\mathsf{n}, \mathrm{F}_\alpha,P} \lesssim \alpha\mathsf{n}^{-1} + \mathsf{n}^{-\frac{2}{3}+o\rbra{1}}
\end{equation}
for $\mathsf{n} \gtrsim \alpha$,
which improves both the upper and lower bounds given in \cite{JVHW17}:
\begin{itemize}
    \item Our lower bound $\alpha \mathsf{n}^{-1}$ is better than the lower bound $\alpha^{-4}\mathsf{n}^{-1}$ in \cite{JVHW17} for any $\mathsf{n} \gtrsim \alpha$.
    \item Our upper bound $\alpha\mathsf{n}^{-1} + \mathsf{n}^{-\frac{2}{3}+o\rbra{1}}$ is better than the upper bound $\alpha^{2}\mathsf{n}^{-1}$ in \cite{JVHW17} for $\alpha \lesssim \mathsf{n} \lesssim \alpha^{6-o\rbra{1}}$. 
\end{itemize}
Combining \cref{thm:minimax} with the results in \cite{JVHW17}, the MSE of estimating $\mathrm{F}_\alpha\rbra{P}$ is bounded by
\begin{align}
    \inf_{\hat{E}_\mathsf{n}} \sup_{P \in \mathcal{M}_S} \textup{MSE}\rbra{\hat{E}_\mathsf{n}, \mathrm{F}_\alpha,P} 
    & \leq \min\cbra*{ \frac{32\alpha}{\mathsf{n}} + \Theta\rbra*{ \rbra*{\frac{\log^4\rbra{\mathsf{n}}}{\mathsf{n}}}^{\frac{2\floor{\alpha}-2}{3\floor{\alpha}-1}} }, \rbra*{\frac{\alpha\rbra{\alpha-1}}{2\mathsf{n}}}^2 + \frac{\alpha^2}{4\mathsf{n}}} \\
    & \asymp \min\cbra*{ \alpha\mathsf{n}^{-1} + \mathsf{n}^{-\frac{2}{3}+o\rbra{1}}, \alpha^2 \mathsf{n}^{-1} } \\
    & \asymp \begin{cases}
        \alpha \mathsf{n}^{-1}, & \alpha \lesssim \mathsf{n} \lesssim \alpha^{3-o\rbra{1}}, \\
        \mathsf{n}^{-\frac{2}{3}+o\rbra{1}}, &  \alpha^{3-o\rbra{1}} \lesssim \mathsf{n} \lesssim \alpha^{6-o\rbra{1}}, \\
        \alpha^2 \mathsf{n}^{-1}, & \mathsf{n} \gtrsim \alpha^{6 - o\rbra{1}}. 
    \end{cases}
\end{align}

\subsection{Sample complexity}

The estimator that achieves the minimax MSE risk in \cref{thm:minimax} can also give a sample complexity upper bound for estimating $\mathrm{F}_{\alpha}\rbra{P}$ and $\mathrm{F}_\alpha\rbra{\rho}$ to within additive error $\varepsilon$. 

\begin{corollary}[Sample complexity of estimating $\mathrm{F}_\alpha\rbra{P}$ and $\mathrm{F}_\alpha\rbra{\rho}$] \label{corollary:sample-complexity}
    For any real number $\alpha \gg 1$, the sample complexity of estimating $\mathrm{F}_\alpha\rbra{P}$ and $\mathrm{F}_\alpha\rbra{\rho}$ to within additive error $\varepsilon$ is 
    \begin{equation}
        \frac{\alpha}{\varepsilon^2} \lesssim \mathsf{n} \lesssim \frac{\alpha}{\varepsilon^2} + \rbra*{\frac{1}{\varepsilon}}^{\frac{3\floor{\alpha}-1}{\floor{\alpha}-1}} \log^4\rbra*{\frac{1}{\varepsilon}}.
    \end{equation}
    In particular, when $\varepsilon$ is a constant, the sample complexity is $\mathsf{n} \asymp \alpha$. 
\end{corollary}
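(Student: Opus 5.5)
The corollary follows from \cref{thm:minimax} by converting MSE bounds into additive-error guarantees in both directions.

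\emph{Upper bound.} I would use Chebyshev's inequality. If an estimator $\hat{E}_\mathsf{n}$ satisfies $\sup_\rho \textup{MSE}\rbra{\hat{E}_\mathsf{n}, \mathrm{F}_\alpha, \rho} \leq \varepsilon^2/3$, then
\begin{equation}
    \Pr\sbra*{\abs{\hat{E}_\mathsf{n}\rbra{\rho} - \mathrm{F}_\alpha\rbra{\rho}} \geq \varepsilon} \leq \frac{\textup{MSE}}{\varepsilon^2} \leq \frac{1}{3}.
\end{equation}
By \eqref{eq:ub}, it suffices that each of the two terms is at most $\varepsilon^2/6$.

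The first term, $32\alpha/\mathsf{n} \leq \varepsilon^2/6$, holds once $\mathsf{n} \gtrsim \alpha/\varepsilon^2$.

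For the second term, write $\beta = \frac{2\floor{\alpha}-2}{3\floor{\alpha}-1}$. We need $\rbra{\log^4\rbra{\mathsf{n}}/\mathsf{n}}^{\beta} \lesssim \varepsilon^2$, i.e.,
\begin{equation}
    \frac{\mathsf{n}}{\log^4 \mathsf{n}} \gtrsim \varepsilon^{-2/\beta} = \varepsilon^{-\frac{3\floor{\alpha}-1}{\floor{\alpha}-1}}.
\end{equation}
This is the only mildly delicate step, because $\mathsf{n}$ appears on both sides through the logarithm. I would try $\mathsf{n} = C\varepsilon^{-2/\beta}\log^4\rbra{1/\varepsilon}$. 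Since $2/\beta \leq 5$ for $\floor{\alpha} \geq 2$, we have $\log \mathsf{n} \lesssim \log\rbra{1/\varepsilon}$ uniformly in $\alpha$, so this choice suffices. Note that the hidden constant in the $\Theta(\cdot)$ of \eqref{eq:ub} is universal, so raising it to the power $1/\beta \in [1/2, 1]$ keeps it a universal constant.

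The theorem requires $\mathsf{n} \gtrsim \alpha$, which is automatic from $\mathsf{n} \gtrsim \alpha/\varepsilon^2$ when $\varepsilon \leq 1$. If the MSE bound is stated only for exactly $\mathsf{n}$ samples, taking the maximum of the two requirements and simply discarding surplus samples gives the claimed sum.

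\emph{Lower bound.} Suppose an estimator achieves error $\varepsilon$ with probability at least $2/3$ using $\mathsf{n}$ samples. I would not go through the MSE bound \eqref{eq:lb} directly, since a high-probability guarantee does not bound the MSE without a median trick or clipping. Instead I would invoke the sample lower bound underlying \cref{thm:lower}, namely \cref{thm:sample-lb}, which is built on the two-point construction $P^\pm$. With $d_{\textup{H}}\rbra{P^+, P^-} \lesssim \varepsilon/\sqrt{\alpha}$ and $\mathrm{F}_\alpha\rbra{P^+} - \mathrm{F}_\alpha\rbra{P^-} \gtrsim \varepsilon$ (by \cref{lemma:F-alpha-diff-lb}, after rescaling $\varepsilon$ by a constant), the hypothesis-testing lower bound of \cite{BY02} gives $\mathsf{n} \gtrsim \alpha/\varepsilon^2$.

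The quantum case is no easier than the classical one, via $\rho_P$. So the lower bound holds for both $\mathrm{F}_\alpha\rbra{P}$ and $\mathrm{F}_\alpha\rbra{\rho}$.

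\emph{Constant $\varepsilon$.} For constant $\varepsilon$ the second upper-bound term is $O(1)$. Hence $\mathsf{n} \asymp \alpha$.
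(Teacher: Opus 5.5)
Your proposal is correct and matches the paper's proof. For the upper bound you apply Chebyshev to \eqref{eq:ub}; the paper targets MSE at most $0.01\varepsilon^2$ where you target $\varepsilon^2/3$. For the lower bound you appeal directly to \cref{thm:sample-lb} rather than to the MSE bound \eqref{eq:lb}, as the paper does.

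There are two harmless slips. First, $1/\beta = \frac{3\floor{\alpha}-1}{2\floor{\alpha}-2}$ lies in $(3/2, 5/2]$, not $[1/2,1]$; it is still bounded, so the constant stays universal. Second, success probability $2/3$ means $\delta = 1/3$, which lies outside the range $(0, 1/4]$ allowed in \cref{thm:sample-lb}, so you need a constant-factor median amplification before invoking it.
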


\begin{proof}
The sample complexity lower bound is given in \cref{thm:sample-lb}. 
To show the upper bound, we apply the upper bound in \cref{thm:minimax} as follows. 
To make the maximum MSE risk 
\begin{equation}
    \sup_{\rho \in \mathcal{D}\rbra{\mathcal{H}}} \textup{MSE}\rbra{\hat{E}_\mathsf{n}, \mathrm{F}_\alpha,\rho} \leq 0.01 \varepsilon^2, 
\end{equation}
it is sufficient to take
\begin{equation} \label{eq:mse-up}
    \mathsf{n} \asymp \frac{\alpha}{\varepsilon^2} + \rbra*{\frac{1}{\varepsilon}}^{\frac{3\floor{\alpha}-1}{\floor{\alpha}-1}} \log^4\rbra*{\frac{1}{\varepsilon}} \lesssim \frac{\alpha}{\varepsilon^2} + \frac{1}{\varepsilon^{3+o\rbra{1}}}.
\end{equation}
Under this choice of $\mathsf{n}$, the error probability of the estimator is bounded by
\begin{align}
    \Pr\sbra*{\abs*{\hat{E}_\mathsf{n}\rbra{\rho} - \mathrm{F}_\alpha\rbra{\rho}} \geq \varepsilon} 
    & \leq \frac{1}{\varepsilon^2} \rbra*{\sup_{\rho \in \mathcal{D}\rbra{\mathcal{H}}} \textup{MSE}\rbra{\hat{E}_\mathsf{n}, \mathrm{F}_\alpha,\rho}} \\
    & \leq \frac{1}{\varepsilon^2} \cdot 0.01 \varepsilon^2 \label{eq:by-eq-mse-up} \\
    & = 0.01,
\end{align}
where \cref{eq:by-eq-mse-up} is due to \cref{eq:mse-up}.
Therefore, we can use $\mathsf{n}$ samples to estimate $\mathrm{F}_{\alpha}\rbra{\rho}$ to within additive error $\varepsilon$ with success probability at least $0.99$. 
\end{proof}

It is worth noting that when $\varepsilon = \Theta\rbra{1}$ is an arbitrary constant, \cref{corollary:sample-complexity} implies the optimal sample complexity $\mathsf{n} \asymp \alpha$ for estimating $\mathrm{F}_\alpha\rbra{P}$ and $\mathrm{F}_\alpha\rbra{\rho}$ to within additive error $\varepsilon$ for any real number $\alpha \gg 1$, which gives \cref{thm:main-intro} stated at the very beginning of this paper. 

\subsection{Time efficiency} \label{sec:time}

Our estimator $\hat{E}_{\mathsf{n}}\rbra{P}$ is defined by an $S^{\mathsf{n}} \times S^{\mathsf{n}}$ Hermitian matrix $\mathcal{O}$, or, equivalently, a unitary matrix $U$ and a quantum measurement $\cbra{M_m}$, where $U$ as well as each $M_m$ is an $S^{O\rbra{\mathsf{n}}} \times S^{O\rbra{\mathsf{n}}}$ Hermitian matrix. 

\paragraph{Classical time efficiency.}
For $\mathsf{n}$ samples $z_1, z_2, \dots, z_{\mathsf{n}}$, the value of $\tr\rbra{\mathcal{O}\ketbra{\mathbf{z}}{\mathbf{z}}}$ can be computed on a classical computer by direct matrix multiplication with time complexity $\poly\rbra{S^{\mathsf{n}}}$. 
When the required additive error $\varepsilon = \Theta\rbra{1}$ is a constant, by \cref{corollary:sample-complexity}, it is sufficient to choose $\mathsf{n} \asymp \alpha$ for $\alpha \gg 1$ to estimate $\mathrm{F}_\alpha\rbra{P}$ to within additive error $\varepsilon$, which gives a time complexity of $S^{O\rbra{\alpha}}$ exponential in $\alpha$. 

\paragraph{Quantum time efficiency.}
The construction of the Hermitian matrix $\mathcal{O}$ provided in \cref{sec:upper} can be implemented by a quantum circuit of size $O\rbra{\mathsf{n}\log\rbra{S}}$. 
In other words, the quantum time complexity is $O\rbra{\mathsf{n}\log\rbra{S}}$.
When the required additive error $\varepsilon = \Theta\rbra{1}$ is a constant, it is sufficient to choose $\mathsf{n} \asymp \alpha$ for $\alpha \gg 1$ to estimate $\mathrm{F}_\alpha\rbra{P}$ and $\mathrm{F}_\alpha\rbra{\rho}$ to within additive error $\varepsilon$, which gives a quantum time complexity of $O\rbra{\alpha\log\rbra{S}}$. 
This quantum time complexity is \textit{optimal} only up to a logarithmic factor of $\log\rbra{S}$ (which is optimal when $S$ is a constant) due to the matching sample lower bound $\Omega\rbra{\alpha}$ given in \cref{corollary:sample-complexity}. 

\section{Preliminaries} \label{sec:preliminaries}

In this section, we introduce sufficient preliminaries for our results. 

\subsection{Basic notations}

We use $\mathcal{M}_S$ to denote the set of all discrete probability distributions of alphabet size $S$. 
For a distribution $P \in \mathcal{M}_S$, we use the lower-case letter $p_i$ to denote the probability of event $i$ for $1 \leq i \leq S$. 
The Hellinger distance \cite{Hel09} between two discrete probability distributions $P, Q \in \mathcal{M}_S$ is defined by
\begin{equation}
    d_\mathrm{H}\rbra{P, Q} \coloneqq \sqrt{\frac{1}{2}\sum_{i=1}^S\rbra*{\sqrt{p_i}-\sqrt{q_i}}^2}. 
\end{equation}

Let $\mathcal{H}_d$ be a $d$-dimensional Hilbert space with the standard orthonormal basis $\cbra{\ket{i}}_{i=1}^d$ (also known as the computational basis). 
A vector in $\mathcal{H}_d$ is described by $\ket{\psi} = \sum_{i=1}^d a_i \ket{i}$ with $a_i \in \mathbb{C}$. 
We use $\bra{\psi} \coloneqq \sum_{i=1}^{d} a_i^* \bra{i}$ to denote the Hermitian conjugate of $\ket{\psi}$. 
The inner product between two vectors $\ket{\varphi}, \ket{\psi} \in \mathcal{H}_d$ is defined by $\braket{\varphi}{\psi} \coloneqq \bra{\varphi} \cdot \ket{\psi}$. 
The Euclidean norm of $\ket{\psi}$ is defined by
\begin{equation}
    \Abs{\ket{\psi}} \coloneqq \sqrt{\braket{\psi}{\psi}} = \sqrt{\sum_{i=1}^d \abs*{a_i}^2}.
\end{equation}

A linear operator on $\mathcal{H}_{d}$ is described by a $d \times d$ complex-valued matrix $A$, with $A^\dag$ its Hermitian conjugate. 
A matrix $A$ is said to be normal if $AA^\dag = A^\dag A$. 
Note that a normal matrix $A$ has the form $A = U \Sigma U^\dag$, where $U$ is a unitary matrix and $\Sigma$ is a diagonal matrix. 
For a function $f \colon \mathbb{C} \to \mathbb{C}$ and a normal matrix $A = U\Sigma U^\dag$ with $\Sigma = \diag\rbra{\lambda_1, \lambda_2, \dots, \lambda_d}$, define $f\rbra{A} \coloneqq U f\rbra{\Sigma} U^\dag$, where $f\rbra{\Sigma} \coloneqq \diag\rbra{f\rbra{\lambda_1}, f\rbra{\lambda_2}, \dots, f\rbra{\lambda_d}}$.
A matrix $A$ is said to be Hermitian, if $A = A^\dag$. 
Note that an Hermitian matrix is a normal matrix. 

For a linear operator $A$ on $\mathcal{H}_{\mathsf{A}} \otimes \mathcal{H}_{\mathsf{B}}$ and a vector $\ket{\psi} \in \mathcal{H}_{\mathsf{A}}$ where $\mathcal{H}_{\mathsf{A}}$ and $\mathcal{H}_{\mathsf{B}}$ are two Hilbert spaces, $A\ket{\psi}$ is a linear operator from $\mathcal{H}_{\mathsf{B}}$ to $\mathcal{H}_{\mathsf{A}} \otimes \mathcal{H}_{\mathsf{B}}$, defined by
\begin{equation}
    A \ket{\psi} \colon \ket{\varphi} \mapsto A \rbra{\ket{\psi} \otimes \ket{\varphi}}.
\end{equation}
Similarly, $\bra{\psi}A$ is a linear operator from $\mathcal{H}_{\mathsf{A}} \otimes \mathcal{H}_{\mathsf{B}}$ to $\mathcal{H}_{\mathsf{B}}$, defined by
\begin{equation}
    \bra{\psi}A \colon \ket{\varphi} \mapsto \sum_{i} \sum_{j} \alpha_{i,j} \braket{\psi}{i} \ket{j}, \text{ if } A \ket{\varphi} = \sum_{i} \sum_j \alpha_{i,j} \rbra*{ \ket{i} \otimes \ket{j} }.
\end{equation}

The operator norm of a matrix $A$ is defined by
\begin{equation}
    \Abs{A} \coloneqq \sup_{\Abs{\ket{\psi}} \leq 1} \Abs{A \ket{\psi}}. 
\end{equation}
The trace norm of a matrix $A$ is defined by 
\begin{equation}
    \Abs{A}_1 \coloneqq \tr\rbra*{\sqrt{A^\dag A}},
\end{equation}
where the trace of a matrix is defined by
\begin{equation}
    \tr\rbra{A} \coloneqq \sum_{i=1}^d \bra{i} A \ket{i}. 
\end{equation}
The matrix H\"older inequality connects between the trace and different norms of matrices. 
\begin{fact}[Matrix H\"older inequality, {\cite[Theorem 2]{Bau11}}] \label{fact:holder}
    For any two $d \times d$ matrices $A$ and $B$, 
    \begin{equation}
        \abs{\tr\rbra{AB}} \leq \Abs{A} \cdot \Abs{B}_1.
    \end{equation}
\end{fact}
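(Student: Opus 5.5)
The plan is to reduce everything to the singular value decomposition of $B$, because the trace norm $\Abs{B}_1$ is exactly the sum of its singular values. Write
\begin{equation}
    B = \sum_{i=1}^d s_i \ketbra{u_i}{v_i},
\end{equation}
where $s_i \geq 0$ and $\cbra{\ket{u_i}}$, $\cbra{\ket{v_i}}$ are orthonormal bases of $\mathcal{H}_d$. This follows from the polar decomposition combined with the spectral theorem for the Hermitian positive semidefinite matrix $B^\dag B$. With this form,
\begin{equation}
    B^\dag B = \sum_{i} s_i^2 \ketbra{v_i}{v_i}.
\end{equation}
Applying the functional calculus defined in the preliminaries gives $\sqrt{B^\dag B} = \sum_i s_i \ketbra{v_i}{v_i}$. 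Hence $\Abs{B}_1 = \tr\rbra{\sqrt{B^\dag B}} = \sum_i s_i$.

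Next, I expand the trace using linearity and the cyclic identity $\tr\rbra{A\ketbra{u}{v}} = \bra{v}A\ket{u}$, which follows directly from the definition $\tr\rbra{X} = \sum_j \bra{j}X\ket{j}$. This gives
\begin{equation}
    \tr\rbra{AB} = \sum_i s_i \bra{v_i} A \ket{u_i}.
\end{equation}
Each term is bounded using Cauchy--Schwarz together with the definition of the operator norm:
\begin{equation}
    \abs{\bra{v_i}A\ket{u_i}} \leq \Abs{\ket{v_i}} \cdot \Abs{A\ket{u_i}} \leq \Abs{A}.
\end{equation}
This uses that $\ket{u_i}$ and $\ket{v_i}$ are unit vectors.

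Finally, the triangle inequality together with $s_i \geq 0$ gives
\begin{equation}
    \abs{\tr\rbra{AB}} \leq \sum_i s_i \Abs{A} = \Abs{A}\cdot\Abs{B}_1.
\end{equation}

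The only step requiring care is identifying $\tr\rbra{\sqrt{B^\dag B}}$ with $\sum_i s_i$, that is, justifying the existence of the SVD with orthonormal $\cbra{\ket{u_i}}$. I would handle it as follows. Diagonalize $B^\dag B = \sum_i s_i^2\ketbra{v_i}{v_i}$. For each $s_i > 0$, set $\ket{u_i} = B\ket{v_i}/s_i$. These vectors are orthonormal because
\begin{equation}
    \bra{v_j}B^\dag B\ket{v_i} = s_i^2\delta_{ij}.
\end{equation}
For $s_i = 0$ we have $\Abs{B\ket{v_i}}^2 = \bra{v_i}B^\dag B\ket{v_i} = 0$, so $B\ket{v_i} = 0$. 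Complete $\cbra{\ket{u_i}}$ to an orthonormal basis arbitrarily; the choice is irrelevant since those terms have $s_i = 0$. Then $B = \sum_i B\ketbra{v_i}{v_i} = \sum_i s_i\ketbra{u_i}{v_i}$, which is the decomposition used above.
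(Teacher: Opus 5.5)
Your proof is correct. The SVD construction (including the zero singular values), the identification $\Abs{B}_1=\tr\rbra{\sqrt{B^\dag B}}=\sum_i s_i$, the identity $\tr\rbra{A\ketbra{u}{v}}=\bra{v}A\ket{u}$, and the termwise Cauchy--Schwarz bound all check out.

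The paper gives no argument of its own here. It simply cites \cite[Theorem~2]{Bau11}, the general tracial H\"older inequality for Schatten norms, $\abs{\tr\rbra{AB}}\le\Abs{A}_p\Abs{B}_q$ with $1/p+1/q=1$, which Baumgartner derives from trace inequalities involving the absolute values $\sqrt{A^\dag A}$. The stated fact is the endpoint case $p=\infty$, $q=1$.

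You instead give a self-contained, elementary proof of exactly that endpoint, which is all the paper needs. The fact is used only in \cref{lemma:samplized-test}, with $A=\Pi_0$ a projector and $B$ a difference of two output states. The citation buys generality over all conjugate exponents at the cost of external machinery. Your argument buys transparency and shows the bound is sharp: it is one half of the duality $\Abs{B}_1=\max_{\Abs{A}\le 1}\abs{\tr\rbra{AB}}$. Equality is attained at the unitary $A=\sum_i\ketbra{v_i}{u_i}$, for which every $\bra{v_i}A\ket{u_i}=1$.
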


A superoperator $\mathcal{E}$ on $\mathcal{H}_d$ is a linear operator on $\mathcal{H}_d \to \mathcal{H}_d$. 
The diamond norm \cite{AKN98} of the superoperator $\mathcal{E}$ is defined by
\begin{equation}
    \Abs{\mathcal{E}}_{\diamond} \coloneqq \sup_{\Abs{X}_1 \leq 1} \Abs*{ \rbra{\mathcal{E} \otimes \mathcal{I}} X }_1,
\end{equation}
where $\mathcal{I}$ is the identity superoperator on $\mathcal{H}_d$. 

\subsection{Quantum computing}

In quantum computing, a $d$-dimensional pure quantum state is described by a vector $\ket{\psi} \in \mathcal{H}_d$. 
In particular, a qubit is described by a vector in $\mathcal{H}_2 \coloneqq \spanspace\cbra{\ket{0}, \ket{1}}$, where
\begin{equation}
    \ket{0} \coloneqq \begin{bmatrix}
        1 \\
        0
    \end{bmatrix}, \qquad \ket{1} \coloneqq \begin{bmatrix}
        0 \\
        1
    \end{bmatrix}.
\end{equation}
A mixed quantum state is described by a density matrix $\rho$ on $\mathcal{H}_d$ satisfying $\tr\rbra{\rho} = 1$ and $\rho \sqsupseteq 0$, where $\sqsupseteq$ is the L\"owner order. 
In particular, the density matrix for the pure quantum state $\ket{\psi}$ is given by $\ketbra{\psi}{\psi}$. 

A quantum gate is described by a unitary matrix $U$ on $\mathcal{H}_d$ with $U^\dag U = U U^\dag = I$, where $I$ is the identity matrix. 
For example, the Hadamard gate is described by the Hadamard matrix of order $2$:
\begin{equation}
    H = \frac{1}{\sqrt{2}} \begin{bmatrix}
        1 & 1 \\
        1 & -1
    \end{bmatrix}. 
\end{equation}
After applying the quantum gate $U$ to the pure quantum state $\ket{\psi}$, the quantum state will become $U \ket{\psi}$. 
After applying the quantum gate $U$ to the mixed quantum state $\rho$, the quantum state will become $U \rho U^\dag$. 
A quantum measurement is described by a set of matrices $\cbra{M_m}$ with the completeness condition 
\begin{equation}
    \sum_{m} M_m^\dag M_m = I.
\end{equation}
In particular, the \mbox{controlled-$U$} gate is denoted by a unitary matrix $\textup{Ctrl}\textup{-}U \coloneqq \diag\rbra{I, U}$ on $\mathcal{H}_{2d}$ such that
\begin{equation}
    \textup{Ctrl}\textup{-}U \colon \ket{b} \otimes \ket{\psi} \mapsto \ket{b} \otimes \rbra*{U^b \ket{\psi}}
\end{equation}
for any $b \in \cbra{0, 1}$ and $\ket{\psi} \in \mathcal{H}_d$.
If we measure a pure quantum state $\ket{\psi}$ with the quantum measurement $\cbra{M_m}$, the measurement outcome will be $X$ such that
\begin{equation}
    \Pr\sbra{X = m} = \Abs*{ M_m \ket{\psi} }^2.
\end{equation}
If we measure a mixed quantum state $\rho$ with the quantum measurement $\cbra{M_m}$, the measurement outcome will be $X$ such that
\begin{equation}
    \Pr\sbra{X = m} = \tr\rbra*{ M_m \rho M_m^\dag }.
\end{equation}
A quantum channel is described by a superoperator $\mathcal{E}$ on $\mathcal{H}_d$ that is completely positive and trace preserving. 
Here, a superoperator $\mathcal{E}$ is said to be completely positive, if $\rbra{\mathcal{E} \otimes \mathcal{I}}\rbra{A} \sqsupseteq 0$ for any identity superoperator $\mathcal{I}$ of any dimension and any matrix $A \sqsupseteq 0$; and, a superoperator $\mathcal{E}$ is said to be trace preserving, if $\tr\rbra{\mathcal{E}\rbra{A}} = \tr\rbra{A}$ for any matrix $A$ on $\mathcal{H}_d$. 
The diamond norm distance between two quantum channels $\mathcal{E}$ and $\mathcal{F}$ is defined by $\Abs{\mathcal{E} - \mathcal{F}}_\diamond$. 

We refer the readers to \cite{NC10} for more details on quantum computing.

In the remaining of this paper, we use ``matrix'' and ``operator'' interchangeably. 
For example, unitary/density operator means unitary/density matrix. 
In addition, we use ``$k$-qubit'' to mean ``$2^k$-dimensional''.
For example, a $k$-qubit quantum state is a $2^k$-dimensional vector in $\mathcal{H}_{2^k}$ and a $k$-qubit (linear) operator means a $2^k$-dimensional matrix on $\mathcal{H}_{2^k}$. 

\subsection{Hadamard test}

The Hadamard test \cite{AJL09} is a useful quantum algorithmic tool that estimates the property of the form $\tr\rbra{U\rho}$, where $U$ is a unitary matrix and $\rho$ is a density matrix. 

\begin{lemma} [Hadamard test, see {\cite[Figure 1]{EAO+02}} and {\cite[Section 2.3]{AJL09}}] \label{lemma:hadamard}
    Let $W_{\mathsf{AB}} = \rbra{H_{\mathsf{A}} \otimes I_{\mathsf{B}}} \cdot \textup{Ctrl}_{\mathsf{A}}\textup{-}U_{\mathsf{B}} \cdot \rbra{H_{\mathsf{A}} \otimes I_{\mathsf{B}}}$ be a unitary operator acting on subsystems $\mathsf{A}$ and $\mathsf{B}$, where $U$ is a unitary operator and $H$ is the Hadamard gate. 
    For every density operator $\rho$ on subsystem $\mathsf{B}$, we have
    \begin{equation}
        \tr\rbra*{ \rbra*{\ketbra{0}{0}_{\mathsf{A}} \otimes I_{\mathsf{B}}} \cdot W_{\mathsf{AB}} \cdot \rbra*{ \ketbra{0}{0}_{\mathsf{A}} \otimes \rho_{\mathsf{B}} } \cdot W_{\mathsf{AB}}^\dag } = \frac{1 + \Real\rbra[\big]{\tr\rbra*{U\rho}}}{2}.
    \end{equation}
\end{lemma}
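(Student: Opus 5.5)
The plan is a direct calculation that tracks the state through the three gates and then reads off the weight on $\ket{0}_{\mathsf{A}}$. By linearity in $\rho$, it suffices to treat a pure state $\rho = \ketbra{\psi}{\psi}$. Both sides of the claimed identity are affine in $\rho$: the left-hand side is linear in $\rho$, and since $\tr\rbra{\rho}=1$ the right-hand side equals $\rbra{\tr\rbra{\rho} + \Real\rbra{\tr\rbra{U\rho}}}/2$, which is real-linear. Writing $\rho=\sum_j \lambda_j \ketbra{\psi_j}{\psi_j}$ with $\sum_j \lambda_j = 1$ then extends the pure-state identity to every density operator.

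For $\ket{0}_{\mathsf{A}}\ket{\psi}_{\mathsf{B}}$, the first Hadamard gives $\frac{1}{\sqrt2}\rbra{\ket{0}\ket{\psi} + \ket{1}\ket{\psi}}$. The controlled-$U$ gate maps this to $\frac{1}{\sqrt2}\rbra{\ket{0}\ket{\psi} + \ket{1}U\ket{\psi}}$. The second Hadamard then gives
\begin{equation}
    \frac12 \ket{0}\rbra{\ket{\psi} + U\ket{\psi}} + \frac12 \ket{1}\rbra{\ket{\psi} - U\ket{\psi}}.
\end{equation}
The left-hand side of the statement is the squared norm of the $\ket{0}_{\mathsf{A}}$-component, namely $\frac14 \Abs{\ket{\psi} + U\ket{\psi}}^2$.

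Expanding this norm and using that $U$ is unitary, so $\Abs{U\ket{\psi}}=1$, gives
\begin{equation}
    \frac14\rbra{2 + \bra{\psi}U\ket{\psi} + \bra{\psi}U^\dag\ket{\psi}} = \frac{1+\Real\rbra{\bra{\psi}U\ket{\psi}}}{2}.
\end{equation}
Since $\bra{\psi}U\ket{\psi} = \tr\rbra{U\ketbra{\psi}{\psi}}$, this is the claim for pure states, and linearity gives the general case.

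The only step that needs care is the reduction to pure states, in particular checking that the $\frac12$ constant term stays correct under mixing; it does because $\tr\rbra{\rho}=1$. One could instead carry the density matrix through $W_{\mathsf{AB}}$ directly and read off the $\ketbra{0}{0}_{\mathsf{A}}$ block, $\frac14\rbra{\rho + U\rho + \rho U^\dag + U\rho U^\dag}$, whose trace equals $\frac14\rbra{2 + 2\Real\rbra{\tr\rbra{U\rho}}}$, the same answer. I expect no real obstacle.
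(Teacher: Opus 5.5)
Your proof is correct: the three-gate computation, the resulting $\tfrac{1}{2}\rbra{1+\Real\rbra{\tr\rbra{U\rho}}}$, and the extension from pure states to mixed states via $\tr\rbra{\rho}=1$ all check out. The paper gives no proof of this lemma and simply cites it from the Ekert et al.\ and Aharonov--Jones--Landau references. Your direct calculation, or equivalently noting that the $\ket{0}_{\mathsf{A}}$-block of $W_{\mathsf{AB}}$ is $\frac{1}{2}\rbra{I+U}$, is the standard argument behind that citation.
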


To apply the Hadamard test in our scenario, we also need the following lemma. 

\begin{lemma} [Adapted from {\cite[Equation (3)]{HMO+21}}] \label{lemma:tr-L-rho-k}
    Let $\textup{Shift}_{k}$ be the unitary operator that shifts $k$ subsystems $\mathsf{A}_1, \mathsf{A}_2, \dots, \mathsf{A}_k$ (of the same dimension) such that 
    \begin{equation}
        \textup{Shift}_k \ket{\psi_1}_{\mathsf{A}_1} \ket{\psi_2}_{\mathsf{A}_2} \dots \ket{\psi_{k-1}}_{\mathsf{A}_{k-1}} \ket{\psi_k}_{\mathsf{A}_k} = \ket{\psi_2}_{\mathsf{A}_1} \ket{\psi_3}_{\mathsf{A}_2} \dots \ket{\psi_k}_{\mathsf{A}_{k-1}} \ket{\psi_1}_{\mathsf{A}_k}
    \end{equation}
    for any $k$ pure states $\ket{\psi_1}, \ket{\psi_2}, \dots, \ket{\psi_k}$. 
    Let $L$ be a linear operator acting on subsystem $\mathsf{A}_1$.
    For every density operator $\rho$ on subsystem $\mathsf{A}_1$, we have
    \begin{equation} \label{eq:shift-L}
    \tr\rbra[\Big]{ \rbra*{L_{\mathsf{A}_1} \otimes I_{\mathsf{A}_2} \dots \otimes I_{\mathsf{A}_k}} \cdot \textup{Shift}_k \cdot \rbra*{\rho_{\mathsf{A}_1} \otimes \rho_{\mathsf{A}_2} \otimes \dots \otimes \rho_{\mathsf{A}_k}} } = \tr\rbra[\Big]{L \rho^k}.
    \end{equation}
\end{lemma}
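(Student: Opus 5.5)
The plan is to prove \cref{eq:shift-L} by linearity: first reduce it to pure product states, then evaluate both sides by a direct index computation in an orthonormal basis.

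\textbf{Reduction to pure products.} Both sides of \cref{eq:shift-L} are multilinear in the $k$ copies of $\rho$. So it suffices to prove the more general identity
\begin{equation*}
\tr\rbra*{ \rbra*{L_{\mathsf{A}_1} \otimes I \otimes \dots \otimes I} \cdot \textup{Shift}_k \cdot \rbra*{\rho_1 \otimes \rho_2 \otimes \dots \otimes \rho_k} } = \tr\rbra*{L \rho_{\pi(1)} \rho_{\pi(2)} \cdots \rho_{\pi(k)}}
\end{equation*}
for rank-one operators $\rho_j = \ketbra{\phi_j}{\chi_j}$, where $\pi$ is some fixed cyclic ordering to be determined. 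Setting all $\rho_j = \rho$ then gives the claim. Since rank-one operators span all operators, linearity extends the identity to arbitrary $\rho_1,\dots,\rho_k$.

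\textbf{Index computation.} Fix an orthonormal basis $\{\ket{e}\}$ and write $\textup{Shift}_k = \sum_{i_1,\dots,i_k} \ket{i_2,i_3,\dots,i_k,i_1}\bra{i_1,\dots,i_k}$. Then
\begin{equation*}
\textup{Shift}_k (\rho_1\otimes\cdots\otimes\rho_k)
= \sum_{i_1,\dots,i_k} \ket{i_2,\dots,i_k,i_1}\,\bra{i_1}\rho_1 \otimes \cdots \otimes \bra{i_k}\rho_k .
\end{equation*}
Taking the trace against $L\otimes I\otimes\cdots\otimes I$ contracts the indices as follows:
\begin{itemize}
\item slot $1$ gives $\bra{i_1}\rho_1 L\ket{i_2}$;
\item slot $j$, for $2\le j\le k-1$, gives $\bra{i_j}\rho_j\ket{i_{j+1}}$;
\item slot $k$ gives $\bra{i_k}\rho_k\ket{i_1}$.
\end{itemize}
Summing over $i_1,\dots,i_k$ produces the cyclic product $\tr\rbra{\rho_1 L \rho_2 \rho_3\cdots\rho_k}$. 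By cyclicity of the trace this equals $\tr\rbra{L\rho_2\cdots\rho_k\rho_1}$, so the ordering is $\pi = (2,3,\dots,k,1)$. With all $\rho_j=\rho$, this is $\tr\rbra{L\rho^k}$.

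\textbf{Main obstacle.} The only delicate point is bookkeeping. One must track which index the shift sends to which slot, so that the contraction really is the cyclic product and not its reverse, and place $L$ on the correct side of $\rho_1$. The reverse order would give $\tr\rbra{L\rho_k\cdots\rho_2\rho_1}$ in the general case, but when all factors equal $\rho$ the two orders coincide. The direction convention of $\textup{Shift}_k$ therefore does not affect the final statement, which makes the proof robust to a possible sign or orientation slip. The case $k=1$, where $\textup{Shift}_1 = I$, is immediate.
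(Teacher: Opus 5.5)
Your proof is correct and follows essentially the same route as the paper: write $\textup{Shift}_k = \sum \ket{i_2,\dots,i_k,i_1}\bra{i_1,\dots,i_k}$ in the computational basis and contract indices slot by slot, which yields the cyclic product and hence $\tr\rbra{L\rho^k}$. The differences are cosmetic. You track distinct $\rho_j$ and obtain the slightly more general identity $\tr\rbra{\rho_1 L \rho_2 \cdots \rho_k}$. Your reduction to rank-one operators is never actually used, since your index computation already holds for arbitrary operators.
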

\begin{proof}
    See \cref{{sec:proof-of-lemma-tr-L-rho-k}}.
\end{proof}

\subsection{Quantum query algorithms with block-encoding}

In this paper, we need the notion of quantum query algorithms with unitary query oracles as block-encodings. 
We first give a formal definition of block-encoding below. 

\begin{definition} [Block-encoding, {\cite[Definition 24]{GSLW19}}]
    An $\rbra{n+a}$-qubit unitary operator $U$ is said to be an $\rbra{\alpha, a, \varepsilon}$-block-encoding of an $n$-qubit operator $A$, if 
    \begin{equation}
        \Abs*{ \alpha \rbra*{\bra{0}^{\otimes a} \otimes I_n} U \rbra*{\ket{0}^{\otimes a} \otimes I_n} - A } \leq \varepsilon,
    \end{equation}
    where $I_n$ is the $n$-qubit identity operator and $\Abs{\cdot}$ is the operator norm. 
\end{definition}

The following is the formal definition of quantum algorithms considered in this paper. 

\begin{definition} [Quantum query algorithms with query access to block-encodings]
    Let $m, n, \ell, Q \geq 1$. 
    An $\rbra{m, n, \ell, Q}$-quantum query algorithm $C$ is an $\ell$-qubit quantum query algorithm $C$ with $m$-ancilla block-encoded access to an $n$-qubit operator with query complexity $Q$ described as a family of $\ell$-qubit quantum circuits of the form 
    \begin{equation}
    C\sbra{U} = G_Q U_Q \cdots G_1 U_1 G_0
    \end{equation}
    for any $\rbra{n+m}$-qubit unitary operator $U$, 
    where each $U_q$ ($1 \leq q \leq Q$) is either \mbox{(controlled-)$U$} or \mbox{(controlled-)$U^\dag$} acting on $\rbra{n+m}$ (out of $\ell$) qubits and $G_0, G_1, \dots, G_Q$ are $\ell$-qubit unitary operators independent of $U$.
    For simplicity, the output of $C\sbra{U}$ is the measurement outcome of the first qubit of $C\sbra{U} \ket{0}^{\otimes \ell}$ in the computational basis.
    In particular,
    \begin{equation}
    \Pr\sbra{C\sbra{U} \textup{ outputs } b} = \Abs*{ M_b \cdot C\sbra{U} \cdot \ket{0}^{\otimes \ell} }^2,
    \end{equation}
    where $M_b = \ketbra{b}{b} \otimes I_{\ell-1}$ for $b \in \cbra{0, 1}$ and $I_{k}$ is the $k$-qubit identity operator. 

    The $\rbra{m, n, \ell, Q}$-quantum query algorithm $C$ is said to be \emph{well-behaved}, if its output depends only on the matrix block-encoded in the query oracle. 
    That is, for any unitary operator $U_A$ that is a $\rbra{1, m, 0}$-block-encoding of an $n$-qubit operator $A$, $\Pr\sbra{C\sbra{U_A} \textup{ outputs } 0}$ depends only on $A$. 
\end{definition}

Throughout this paper, we only consider \textit{well-behaved} quantum query algorithms. 
For simplicity, a quantum query algorithm is said to use $Q$ queries to a unitary oracle $U$, if it uses $Q$ queries to $U$, $U^\dag$, \mbox{controlled-$U$}, and \mbox{controlled-$U^\dag$} in total. 

\subsection{Quantum singular value transformation}

Quantum singular value transformation \cite{GSLW19} is a useful tool for the design of quantum algorithms. 
Here, we need the specific version as follows. 

\begin{lemma}[Quantum singular value transformation, {\cite[Theorem 2]{GSLW19}}] \label{lemma:qsvt}
    Let $p\rbra{x} \in \mathbb{R}\sbra{x}$ be an even/odd polynomial of degree $d$ with $\abs{p\rbra{x}} \leq 1$ for $x \in \sbra{-1, 1}$, where $d \geq 1$. 
    For any $a \geq 1$ and $n \geq 1$, there is an $\rbra{a, n, n+a+1, O\rbra{d}}$-quantum query algorithm $C$ such that for any unitary operator $U_A$ that is a $\rbra{1, a, 0}$-block-encoding of an $n$-qubit Hermitian operator $A$, $C\sbra{U_A}$ is a $\rbra{1, a+1, 0}$-block-encoding of $p\rbra{A}$. 
    Moreover, the implementation of $C\sbra{U_A}$ uses $O\rbra{ad}$ two-qubit gates that are independent of $U_A$. 
\end{lemma}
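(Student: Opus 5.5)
This result is cited from \cite[Theorem 2]{GSLW19}, so the plan is to reconstruct the standard QSVT argument in the paper's query-algorithm language rather than derive anything new.

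\textbf{Step 1: single-qubit phase sequence.} By the polynomial characterization of quantum signal processing (cf.\ \cite{GSLW19}), for an even/odd real $p$ of degree $d$ with $\abs{p(x)} \leq 1$ on $[-1,1]$ there exist phases $\phi_0,\dots,\phi_d$ with the following property. The product of $e^{i\phi_j Z}$ and the signal rotation
\begin{equation}
  \begin{bmatrix} x & \sqrt{1-x^2} \\ \sqrt{1-x^2} & -x \end{bmatrix}
\end{equation}
has top-left entry some $P(x)$ with $\Real P(x) = p(x)$. The real requirement is a real $p$, not a complex $P$. The standard trick obtains it by linear combination of unitaries: average the sequences for $\phi$ and $-\phi$ using one extra control qubit in Hadamard basis. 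Alternatively, one invokes the real-polynomial version of the theorem directly. This extra qubit is the source of the $a+1$ ancillas.

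\textbf{Step 2: lift to matrices.} Take the Hermitian $A = \sum_\lambda \lambda \ketbra{v_\lambda}{v_\lambda}$ block-encoded in $U_A$ with $\Pi = \ketbra{0}{0}^{\otimes a}\otimes I$. By Jordan's lemma (the qubitization of \cite{GSLW19}), the space splits into invariant two-dimensional subspaces spanned by $\ket{0^a}\ket{v_\lambda}$ and an orthogonal partner. On each subspace $U_A$ acts as the reflection-type matrix above with $x=\lambda$, up to a basis choice. The projector-controlled phase $e^{i\phi(2\Pi - I)}$ acts as $e^{i\phi Z}$ there and is implementable with $O(a)$ gates using one ancilla, namely a multi-controlled NOT on the $a$ ancillas followed by a $Z$-rotation. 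Interleaving $U_A$ and $U_A^\dag$ with these phases gives a circuit whose $\Pi$-block is $\sum_\lambda P(\lambda)\ketbra{v_\lambda}{v_\lambda} = P(A)$.

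\textbf{Step 3: real part.} Apply the LCU from Step 1 to obtain the block $p(A)$. Parity of $p$ is what guarantees the block lands in $\Pi\cdot\Pi$ rather than between $\Pi$ and its complement.

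\textbf{Step 4: counting.} The circuit makes $O(d)$ queries, some controlled. It uses $O(ad)$ additional gates, independent of $U_A$, and $n+a+1$ qubits. The main obstacle is the existence of the phases in Step 1, a nontrivial polynomial-decomposition argument; I would cite it from \cite{GSLW19} rather than reprove it.
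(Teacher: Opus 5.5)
Your outline is the standard proof of \cite[Theorem 2]{GSLW19}. The paper cites that theorem without proof, so this is the intended argument: real quantum signal processing, lifting through the singular value decomposition of the block, and extracting $p = \Real P$ by averaging the phase sequences $\Phi$ and $-\Phi$. The approach is right, but as written it does not give the qubit count in the statement.

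\textbf{Qubit count.} In Step~2 you use one ancilla for the projector-controlled phase, and in Step~1 a separate control qubit for the LCU. That is an $(n+a+2)$-qubit circuit giving a $\rbra{1,a+2,0}$-block-encoding, not $\rbra{1,a+1,0}$, so your Step~4 count of $n+a+1$ is unjustified. The missing observation, which is how \cite{GSLW19} gets $a+1$, is that these are the same qubit.
\begin{itemize}
\item If the target of the $C_\Pi\textup{NOT}$ is in state $\ket{b}$, then $C_\Pi\textup{NOT}\cdot e^{-i\phi Z}\cdot C_\Pi\textup{NOT}$ acts on the remaining register as $e^{i(-1)^b\phi(2\Pi-I)}$.
\item This needs the second, uncomputing $C_\Pi\textup{NOT}$. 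Your description omits it, and without it the flag qubit stays entangled with the register.
\item So apply a Hadamard to this single qubit, run the whole sequence, and apply a Hadamard again. This implements $\frac{1}{2}\rbra{U_\Phi + U_{-\Phi}}$ in its $\ket{0}$-block.
\item The $\Pi$-block of that operator is $\frac{1}{2}\rbra{P(A) + P(A)^\dag} = p(A)$. The reason is that on each two-dimensional piece the signal matrix is real, so $\Phi\mapsto-\Phi$ conjugates the top-left entry.
\item The result uses exactly $n+a+1$ qubits and $d$ uncontrolled queries. The $2d$ multi-controlled NOTs cost $O(ad)$ gates.
\end{itemize}

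\textbf{Invariance in Step~2.} The two-dimensional subspaces are not invariant under a general block-encoding $U_A$. Write $\Pi = \ketbra{0}{0}^{\otimes a}\otimes I$. Let $\ket{v_\lambda^\perp}$ and $\ket{w_\lambda^\perp}$ be the normalized $(I-\Pi)U_A^\dag\ket{0^a}\ket{v_\lambda}$ and $(I-\Pi)U_A\ket{0^a}\ket{v_\lambda}$. Then $U_A$ maps $\mathrm{span}\cbra{\ket{0^a}\ket{v_\lambda}, \ket{v_\lambda^\perp}}$ onto the possibly different subspace $\mathrm{span}\cbra{\ket{0^a}\ket{v_\lambda}, \ket{w_\lambda^\perp}}$, acting between these bases as the real reflection matrix with $x=\lambda$. $U_A^\dag$ maps back, which is why the sequence must alternate $U_A$ and $U_A^\dag$.

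\textbf{Role of parity in Step~3.} Parity is not what keeps the block in $\Pi\cdot\Pi$. It is required because a length-$d$ phase sequence only produces polynomials of parity $d \bmod 2$. It also decides whether the sequence ends with $U_A$ or $U_A^\dag$, which is harmless here since the input and output projectors are both $\Pi$.
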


To apply \cref{lemma:qsvt} for our purpose, we need the following polynomials. 

\begin{lemma} [Polynomial approximation of positive power functions, {\cite[Corollary 3.4.14]{Gil19}}] \label{lemma:poly-approx-pos}
    For $\varepsilon, \delta \in (0, \frac{1}{2}]$ and $c \in \rbra{0, 1}$, there is an even/odd polynomial $p \in \mathbb{R}\sbra{x}$ of degree $O\rbra{\frac{1}{\delta}\log\rbra{\frac{1}{\varepsilon}}}$ such that (i) $\abs{p\rbra{x} - \frac{1}{2}x^c} \leq \varepsilon$ for $x \in \sbra{\delta, 1}$ and (ii) $\abs{p\rbra{x}} \leq 1$ for $x \in \sbra{-1, 1}$. 
    Here, it is noted that the $O\rbra{\cdot}$ hides a constant that is independent of $c$. 
\end{lemma}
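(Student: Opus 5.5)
The statement is the imported result \cite[Corollary 3.4.14]{Gil19}. I would reprove it in three steps: build a bounded polynomial approximation of $\frac{1}{2}x^c$ on $[\delta,1]$ by truncating a Taylor series, cut it off outside $[\delta,1]$ with a smoothed rectangle function, and then take the even or odd part. Throughout I would check that every constant is independent of $c \in (0,1)$.

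\textbf{Step 1 (Taylor series around $x_0=1$).} Write $x^c=(1-y)^c=\sum_{l\ge0}\binom{c}{l}(-y)^l$ with $y=1-x$. For $0<c<1$ the coefficients $\binom{c}{l}$ with $l\ge1$ all have the same sign, and $(1-1)^c=0$. Hence $\sum_{l}\abs{\binom{c}{l}}=2$, and the series converges absolutely on $\abs{y}\le1$ with $\sum_l\abs{\binom{c}{l}}\abs{y}^l\le 2$, uniformly in $c$.

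I would then invoke the standard result on polynomial approximation of functions with bounded Taylor series (GSLW, Corollary~66). It takes expansion point $x_0=1$, radius $r=1-\delta$ and margin $\delta/2$, so that $r+\delta/2\le1$ and the bound $B=2$ applies. It gives a real polynomial $q$ of degree $O(\frac{1}{\delta}\log\frac{1}{\varepsilon})$ such that $\abs{q(x)-\frac14 x^c}\le\varepsilon/4$ on $[\delta,1]$ and $\abs{q(x)}\le \frac12+\varepsilon/4$ on $[-1,1]$. Here I rescale the target by $\frac14$ instead of $\frac12$ to leave slack.

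\textbf{Step 2 (parity via a rectangle window).} The polynomial $q$ has no parity, and its odd or even part would mix in $q(-x)$, which we do not control for $x\in[\delta,1]$. So first I multiply by a rectangle-function polynomial $R$ of degree $O(\frac1\delta\log\frac1\varepsilon)$ (also from GSLW). It satisfies $\abs{R}\le1$ on $[-1,1]$, $\abs{R(x)-1}\le\varepsilon/4$ on $[\delta,1]$, and $\abs{R(x)}\le\varepsilon/4$ on $[-1,-\delta]$; on the window $(-\delta,\delta)$ only the bound $\abs{R}\le1$ is needed. Then $\tilde q=qR$ approximates $\frac14x^c$ on $[\delta,1]$ to within $O(\varepsilon)$ and is $O(\varepsilon)$-small on $[-1,-\delta]$.

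Now set $p(x)=\tilde q(x)\pm\tilde q(-x)$, choosing the sign to get an even or an odd polynomial. For $x\in[\delta,1]$ we get $\abs{p(x)-\frac12\cdot\frac12 x^c\cdot 2}=\abs{p(x)-\frac12x^c}\le O(\varepsilon)$. On $[-1,1]$ we get $\abs{p}\le 2(\frac12+\varepsilon/4)$. Replacing $\varepsilon$ by $\varepsilon/C$ for a universal constant $C$ gives error $\varepsilon$, and changes the degree only by a constant factor inside $\log(1/\varepsilon)$.

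\textbf{Main obstacle.} The delicate point is enforcing $\abs{p}\le1$ on all of $[-1,1]$ exactly, rather than $1+O(\varepsilon)$, while keeping the degree and constants independent of $c$. I would try to fix it by a final normalization $p\mapsto p/(1+\varepsilon/C)$. This changes values on $[\delta,1]$ by at most $\varepsilon/C$, which is absorbed into the error. On the window $(-\delta,\delta)$ the crude bound $\abs{p}\le 2\max\abs{q}\le 1+\varepsilon/2$ suffices, so the normalization yields $\abs{p}\le1$ there as well.

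Uniformity in $c$ holds because the only $c$-dependence enters through $B=\sum_l\abs{\binom{c}{l}}=2$.
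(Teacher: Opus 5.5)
The paper gives no proof of this lemma; it imports it from \cite[Corollary 3.4.14]{Gil19}. Your reconstruction is sound up to the symmetrization step, but that step contains an arithmetic error, and it hides exactly the difficulty you flagged as the main obstacle.

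For $x\in[\delta,1]$ you have $-x\in[-1,-\delta]$, where $\tilde q$ is $O(\varepsilon)$. Hence $p(x)=\tilde q(x)\pm\tilde q(-x)=\frac14x^c+O(\varepsilon)$, not $\frac12x^c+O(\varepsilon)$. Forming the even/odd combination does not double the value on $[\delta,1]$. What you have actually built is a polynomial of norm at most $1$ approximating $\frac14x^c$.

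Multiplying by $2$ to reach $\frac12x^c$ raises the sup norm to $2+\varepsilon$. The trouble sits in the window $(-\delta,\delta)$. There your only information is $|\tilde q(x)|,|\tilde q(-x)|\le B+\varepsilon/4$, where $B$ is the Taylor weight of the target, and nothing prevents both terms from being that large at once. The same happens if you target $\frac12x^c$ directly: then $B=1$ and the window bound is $2+O(\varepsilon)$. So the factor-$2$ slack you bought by rescaling is entirely consumed by the symmetrization. The normalization $p\mapsto p/(1+\varepsilon/C)$ cannot repair a factor of $2$.

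The missing ingredient is suppression on a neighbourhood of $0$, so that at each $x$ at most one of the two terms is non-negligible. The corollary you invoke already supplies it. Besides the approximation and the bound $\|q\|_{[-1,1]}\le\varepsilon+B$, it guarantees $|q|\le\varepsilon$ on $[-1,1]$ outside $[x_0-r-\delta/4,\,x_0+r+\delta/4]$, i.e., half your margin on each side. With $x_0=1$ and $r=1-\delta$ this is the whole interval $[-1,3\delta/4)$.

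So apply it to $f=\frac12x^c$ itself, with $B=\frac12\sum_l|\binom{c}{l}|=1$ uniformly in $c$. Take the real part of the resulting polynomial, and set $p=(q(x)\pm q(-x))/(1+2\varepsilon)$. Then $|q(x)\pm q(-x)|\le 1+2\varepsilon$ on $[-1,1]$, and the error on $[\delta,1]$ is at most $3\varepsilon$. No separate step-function window is needed; if you keep $R$, its transition must lie inside $(\delta/2,\delta)$ rather than straddle $0$.

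As written, your argument proves the lemma only with $\frac14x^c$ in place of $\frac12x^c$. That would change only constants in the paper's application ($2^{1+\{\alpha\}}$ becomes $2^{2+\{\alpha\}}$), but it is not the statement claimed.
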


\subsection{Quantum samplizer}

Quantum samplizer \cite{WZ25b,WZ25} is a useful tool that converts a quantum query algorithm to a quantum channel that is implemented using samples of quantum states. 
Formally, we consider the following type of quantum channels. 

\begin{definition}[Quantum channels implemented by quantum states] \label{def:qchannel}
    An $\rbra{n, \ell, S}$-quantum channel $\mathcal{E}$ is an $\ell$-qubit quantum channel that can be implemented using $S$ independent samples of the $n$-qubit input quantum state $\rho$. 
    Specifically, the $\rbra{n, \ell, S}$-quantum channel $\mathcal{E}$ is implemented by an $\rbra{Sn+n_{\mathsf{anc}}+\ell}$-qubit unitary operator $W$ for some $n_{\mathsf{anc}} \geq 0$ such that for any $n$-qubit quantum state $\rho$ and any $\ell$-qubit quantum state $\sigma$, 
    \begin{equation}
        \mathcal{E}\sbra{\rho}\rbra{\sigma} = \tr_{\mathsf{env}}\rbra*{ W \rbra*{ \underbrace{\rho^{\otimes S} \otimes \ketbra{0}{0}^{\otimes n_{\mathsf{anc}}}}_{\mathsf{env}} \otimes \sigma } W^\dag }.
    \end{equation}
    This is also illustrated in \cref{fig:qchannel}.
\end{definition}

\begin{figure} [!htp]
    \centering
    \begin{quantikz} [row sep = {25pt, between origins}]
        \lstick[3]{$S$}~ & \lstick{$\rho$} \wireoverride{n} & \qwbundle{n} & \gate[5,disable auto height]{~~W~~} & \meterD{} \\
        & \lstick{$\vdots$} \wireoverride{n} & \qwbundle{n} & & \meterD{} \\
        & \lstick{$\rho$} \wireoverride{n} & \qwbundle{n} & & \meterD{} \\
        & \lstick{$\ket{\bar 0}$} \wireoverride{n} & \qwbundle{n_{\mathsf{anc}}} & & \meterD{} \\
        & \lstick{$\sigma$} \wireoverride{n} & \qwbundle{\ell} & & \rstick{$\mathcal{E}\sbra{\rho}\rbra{\sigma}$}
    \end{quantikz}
    \caption{Quantum circuit for an $\rbra{n, \ell, S}$-quantum channel $\mathcal{E}$.}
    \label{fig:qchannel}
\end{figure}

The definition of samplizer is given as follows. 

\begin{definition}[Samplizer, {\cite[Definition I.1]{WZ25}}]
    Let $\delta \in \rbra{0, 1}$ be a precision parameter. 
    An $\rbra{m,n,\ell,\delta}$-samplizer, $\mathsf{Samplize}_{\delta}\ave{*}$, converts an $\rbra{m,n,\ell,Q}$-quantum query algorithm for some $Q \geq 1$ to an $\rbra{n, \ell, S}$-quantum channel for some $S \geq 1$ with the following property. 
    For any $\delta > 0$, any $\rbra{m,n,\ell,Q}$-quantum query algorithm $C$, and any $n$-qubit mixed quantum state $\rho$, there exists a unitary operator $U_\rho$ that is a $\rbra{2, m, 0}$-block-encoding of $\rho$ such that 
    \begin{equation}
        \Abs*{ \mathsf{Samplize}_{\delta}\ave{C}\sbra{\rho} - \rbra{C\sbra{U_\rho}} \rbra{\cdot} \rbra{C\sbra{U_\rho}}^\dag }_{\diamond} \leq \delta,
    \end{equation}
    where $\mathsf{Samplize}_{\delta}\ave{C}$ is an $\rbra{n, \ell, S}$-quantum channel. 
\end{definition}

Below is an implementation of the samplizer given in \cite{WZ25}, which is based on the density matrix exponentiation \cite{LMR14,KLL+17,GKP+25} and is inspired by \cite{GP22,WZ24}. 

\begin{lemma}[An implementation of quantum samplizer, {\cite[Theorem III.1]{WZ25}}] \label{lemma:samplizer}
    For $m \geq 4$, there is an $\rbra{m, n, \ell, \delta}$-samplizer that converts every $\rbra{m, n, \ell, Q}$-quantum query algorithm $C$ to an $\rbra{n, \ell, S}$-quantum channel $\mathsf{Samplize}_{\delta}\ave{C}$ with $S = O\rbra{\frac{Q^2}{\delta}\log^2\rbra{\frac{Q}{\delta}}}$. 
    Moreover, the implementation of $\mathsf{Samplize}_{\delta}\ave{C}$ uses $O\rbra{nS}$ additional two-qubit quantum gates. 
\end{lemma}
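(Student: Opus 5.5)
The plan is to build the samplizer by replacing each query to a block-encoding with a sample-based subroutine, and to charge the total diamond-norm error per query. Since \cref{lemma:samplizer} is quoted from \cite[Theorem III.1]{WZ25}, I would follow their route: density matrix exponentiation \cite{LMR14,KLL+17} turns samples of $\rho$ into approximate Hamiltonian evolution $e^{\pm i\rho t}$, and quantum singular value transformation (\cref{lemma:qsvt}) turns that evolution into a block-encoding of $\rho/2$.

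\textbf{Step 1 (density matrix exponentiation).} I would use the standard fact that $k$ samples of $\rho$ and $k$ partial-SWAP steps implement a channel $\varepsilon_0$-close in diamond norm to $X\mapsto e^{-i\rho t}Xe^{i\rho t}$. This needs $k=O(t^2/\varepsilon_0)$. Each partial SWAP on $n$-qubit registers costs $O(n)$ two-qubit gates. The same construction gives the controlled versions and the inverse (replace $t$ by $-t$).

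\textbf{Step 2 (a fixed ideal block-encoding).} I would fix $t=1$ and define, by a Hadamard-test/LCU gadget on one ancilla, a unitary $V_\rho$ that block-encodes $\sin(\rho)=\bigl(e^{i\rho}-e^{-i\rho}\bigr)/(2i)$, using one call each to controlled $e^{\pm i\rho}$. Since $\rho\in[0,1]$, its eigenvalues satisfy $\sin(\lambda)\in[0,\sin 1]$. I would then apply \cref{lemma:qsvt} with an odd polynomial $q$ of degree $d=O(\log(1/\varepsilon_1))$ that approximates $\tfrac12\arcsin(x)$ on $[-\sin1,\sin1]$ and is bounded by $1$ on $[-1,1]$. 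This gives a unitary $\widetilde U$ that block-encodes $q(\sin\rho)$, which is within $\varepsilon_1$ of $\rho/2$.

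I take $U_\rho$ to be the ideal circuit built from exact $e^{\pm i\rho}$. To make it an exact $(2,m,0)$-block-encoding of $\rho$, I would absorb the residual $\varepsilon_1$ into the definition of $U_\rho$, or choose $\varepsilon_1$ smaller than the final budget and then perturb. The ancilla count stays within $m\ge4$. The point to note is that $U_\rho$ is a single fixed unitary, independent of $C$'s use of it, as the definition of a samplizer requires.

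\textbf{Step 3 (samplize and count errors).} Given $C[U]=G_QU_Q\cdots U_1G_0$, I replace every (controlled) $U_\rho^{(\dagger)}$ by its Step~2 circuit, in which each of the $O(d)$ exponentials is realized by the Step~1 channel with error $\varepsilon_0$. Diamond-norm errors add under composition. Each query then costs $O(d\,\varepsilon_0)$ error, and the whole circuit costs $O(Qd\,\varepsilon_0)$ error against $C[U_\rho]$. I set $\varepsilon_0=\Theta(\delta/(Qd))$ and $\varepsilon_1=\Theta(\delta/Q)$, so $d=O(\log(Q/\delta))$. Each query then uses $d\cdot O(1/\varepsilon_0)=O\bigl(\tfrac{Q}{\delta}\log^2(Q/\delta)\bigr)$ samples, so the total is $S=O\bigl(\tfrac{Q^2}{\delta}\log^2(Q/\delta)\bigr)$. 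The additional gates come to $O(nS)$, from the partial SWAPs. Finally I would write the resulting operation as a unitary $W$ acting on $\rho^{\otimes S}\otimes\ketbra{0}{0}^{\otimes n_{\mathsf{anc}}}\otimes\sigma$ followed by tracing out the environment, as in \cref{def:qchannel}.

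\textbf{Main obstacle.} The delicate step is Step~2. The samplized channel must be compared with one fixed unitary $U_\rho$ that exactly block-encodes $\rho/2$, and that same $U_\rho$ must be valid for the controlled and inverse queries. A weaker claim, where each query is close to \emph{some} approximate block-encoding, would not suffice. I would first try defining $U_\rho$ directly as the exact-exponential circuit. Then I would show that the $\varepsilon_1$ polynomial error can be pushed into the $\delta$ budget by a unitary dilation argument, or removed by slightly rescaling the target polynomial. Both the $\log^2$ factor and the requirement $m\ge4$ come from this construction.
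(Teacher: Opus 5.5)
The paper does not prove this lemma; it imports it from \cite[Theorem III.1]{WZ25}, noting only that the construction rests on density matrix exponentiation and is inspired by \cite{GP22,WZ24}. Your reconstruction follows that route: DME for controlled $e^{\pm i\rho}$, a one-ancilla LCU block-encoding of $\sin\rho$, QSVT with a bounded odd approximation of $\tfrac12\arcsin$ on $[-\sin 1,\sin 1]$, and an error budget of $\delta/(Qd)$ per exponential. Your count $S=O(Q\cdot d\cdot Qd/\delta)=O(\frac{Q^2}{\delta}\log^2\frac{Q}{\delta})$ and the $O(nS)$ gate bound are right. The one open step is the one you flag, and two of your suggested fixes fail. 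Defining $U_\rho$ as the exact-exponential circuit $\widetilde U$ does not meet the definition, because $\widetilde U$ exactly block-encodes $q(\sin\rho)$, not $\rho/2$. Rescaling $q$ cannot repair this, since $\tfrac12\arcsin$ is not a polynomial.

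The perturbation option does work, and it is short because $\|\rho/2\|\le\frac12$ keeps you away from the boundary.
\begin{itemize}
\item Let $\widetilde V$ be the restriction of $\widetilde U$ to $\ket{0}^{\otimes m}\otimes\mathcal{H}$. It is an isometry with upper block $\widetilde A=q(\sin\rho)$ and lower block $\widetilde K$, so $\widetilde K^\dagger\widetilde K=I-\widetilde A^\dagger\widetilde A$.
\item Put $A=\rho/2$ with $\|A-\widetilde A\|\le\varepsilon_1$ small. Then both $I-A^\dagger A$ and $I-\widetilde A^\dagger\widetilde A$ are $\succeq I/2$.
\item Define $K:=\widetilde K(I-\widetilde A^\dagger\widetilde A)^{-1/2}(I-A^\dagger A)^{1/2}$. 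It satisfies $K^\dagger K=I-A^\dagger A$ and $\|K-\widetilde K\|=O(\varepsilon_1)$.
\item The isometry $V$ with blocks $A,K$ is therefore $O(\varepsilon_1)$-close to $\widetilde V$.
\item Let $W$ be the polar part of $X=V\widetilde V^\dagger+(I-VV^\dagger)(I-\widetilde V\widetilde V^\dagger)$. Since $X\widetilde V=V$ and $X^\dagger X$ acts as the identity on the range of $\widetilde V$, $W$ is a unitary with $W\widetilde V=V$ and $\|W-I\|=O(\varepsilon_1)$.
\item Set $U_\rho:=W\widetilde U$. This is an exact $(2,m,0)$-block-encoding of $\rho$ with $\|U_\rho-\widetilde U\|=O(\varepsilon_1)$.
\end{itemize}
The same bound holds for the adjoint and controlled versions. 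So one fixed $U_\rho$ serves all $Q$ queries, and $\|C[U_\rho]-C[\widetilde U]\|=O(Q\varepsilon_1)$. This is what justifies your choice $\varepsilon_1=\Theta(\delta/Q)$; with it, the argument is complete.
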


\section{Upper Bounds} \label{sec:upper}

In this section, we present an estimator for $\mathrm{F}_{\alpha}\rbra{\rho}$ and thus also for $\mathrm{F}_{\alpha}\rbra{P}$, as stated below. 

\begin{theorem}[Estimator for $\mathrm{F}_\alpha\rbra{P}$ and $\mathrm{F}_{\alpha}\rbra{\rho}$] \label{thm:upper}
    For any real number $\alpha \geq 2$, there is an estimator $\hat{E}_{\mathsf{n}}$ such that
    \begin{equation}
        \sup_{P \in \mathcal{M}_S} \textup{MSE}\rbra{\hat{E}_\mathsf{n}, \mathrm{F}_\alpha,P} 
        \leq \sup_{\rho \in \mathcal{D}\rbra{\mathcal{H}}} \textup{MSE}\rbra{\hat{E}_\mathsf{n}, \mathrm{F}_\alpha,\rho}
        \leq \frac{32\alpha}{\mathsf{n}} + \Theta\rbra*{ \rbra*{\frac{\log^4\rbra{\mathsf{n}}}{\mathsf{n}}}^{\frac{2\floor{\alpha}-2}{3\floor{\alpha}-1}} }.
    \end{equation}
\end{theorem}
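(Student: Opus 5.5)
The plan is to make the construction sketched in \cref{sec:estimator-intro} rigorous and then optimize its parameters. Write $k \coloneqq \floor{\alpha} \geq 2$ and $c \coloneqq \cbra{\alpha}$.

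\textbf{Baby estimator.} First build the baby estimator. The case $c = 0$ is handled directly, using $k = \alpha$ and $W = I$, with no samplizer needed. For $c \in (0,1)$:
\begin{enumerate}[(i)]
    \item Take the polynomial $p = p_{c,\delta',\varepsilon'}$ of \cref{lemma:poly-approx-pos}.
    \item Apply \cref{lemma:qsvt} to get a query algorithm $C[U_A]$ that block-encodes $p(A)$ using $O(\frac{1}{\delta'}\log\frac{1}{\varepsilon'})$ queries.
    \item Plug $C[U_A]$ in as $W$ in the generalized SWAP-test circuit $U^{(1)}_{k,W}$.
    \item Combine \cref{lemma:hadamard} (with unitary $(W\otimes I)\textup{Shift}_k$, acting on $\rho\otimes\ketbra{\bar 0}{\bar 0}$ registers) with \cref{lemma:tr-L-rho-k} applied to $L = \bra{\bar 0}W\ket{\bar 0} = p(A)$. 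This gives the acceptance probability $\frac{1+\Real\tr(p(A)\rho^k)}{2}$.
    \item Samplize the whole circuit, which is a well-behaved query algorithm in $U_A$, via \cref{lemma:samplizer} with precision $\delta$. This costs $\mathsf{S} = O(\frac{1}{\delta'^2\delta}\log^2\frac{1}{\varepsilon'}\log^2\frac{1}{\delta'\delta\varepsilon'})$ samples.
\end{enumerate}
The diamond-norm guarantee, together with \cref{fact:holder}, gives that the $\pm1$-valued outcome has mean within $2\delta$ of $\tr(p(\rho/2)\rho^k)$, where this quantity is real. Rescale the outcome by $2^{1+c}$, so that the estimator takes values in $[-4,4]$.

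\textbf{Bias.} Next bound the bias of the baby estimator by splitting the eigenvalues $\lambda$ of $\rho$ at $2\delta'$:
\begin{itemize}
    \item For $\lambda/2 \geq \delta'$, we have $|2^{1+c}p(\lambda/2) - \lambda^c| \leq 4\varepsilon'$. Weighting by $\lambda^k$ and summing with $\sum\lambda^k \leq 1$ gives an error of at most $4\varepsilon'$.
    \item For $\lambda < 2\delta'$, both $2^{1+c}|p|\lambda^k$ and $\lambda^\alpha$ are at most $4\lambda\cdot(2\delta')^{k-1}$. Summing with $\sum\lambda = 1$ gives $O((2\delta')^{k-1})$.
\end{itemize}
This yields the claimed bias bound $8\delta+4\varepsilon'+6(2\delta')^{k-1}$.

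\textbf{Final estimator.} Split the $\mathsf{n}$ copies into $\mathsf{m} = \floor{\mathsf{n}/(k+\mathsf{S})}$ disjoint blocks, run the baby estimator independently on each block, and average the results. The outcomes are i.i.d.\ and bounded by $4$, so the variance is at most $16/\mathsf{m}$ and the mean is unchanged. For distributions, the classical estimator is the same observable applied to $\rho_P$, which gives the first inequality, since $\mathcal{M}_S$ embeds into $\mathcal{D}(\mathcal{H})$. Hence
\[
\textup{MSE} \leq \frac{16}{\mathsf{m}} + \rbra*{8\delta+4\varepsilon'+6(2\delta')^{k-1}}^2 .
\]

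\textbf{Parameter choice (main obstacle).} The remaining step, and the main obstacle, is the parameter optimization that produces the exponent $\frac{2k-2}{3k-1}$. Write $16/\mathsf{m} \lesssim \frac{16(k+\mathsf{S})}{\mathsf{n}}$. The $k$ part is at most $16\alpha/\mathsf{n}$, and the constant $32$ absorbs the rounding of $\mathsf{m}$. The $\mathsf{S}$ part is $\widetilde O(\frac{1}{\mathsf{n}\delta'^2\delta})$. Set $\varepsilon' = \delta = \delta'^{k-1}$ and balance
\[
\frac{1}{\mathsf{n}\,\delta'^{k+1}} \quad\text{against}\quad \delta'^{2k-2} .
\]
This gives $\delta' = \widetilde\Theta(\mathsf{n}^{-1/(3k-1)})$, so the error is $\widetilde\Theta(\mathsf{n}^{-(2k-2)/(3k-1)})$. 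Tracking the $\log^2$ factors from the polynomial degree and from the samplizer, and substituting $\mathsf{n}/\log^4\mathsf{n}$ for $\mathsf{n}$, yields the stated $\Theta\rbra{(\log^4(\mathsf{n})/\mathsf{n})^{\frac{2k-2}{3k-1}}}$. One also needs $\mathsf{n} \geq k+\mathsf{S}$ for $\mathsf{m} \geq 1$. When this fails, the trivial bound $\textup{MSE}\leq O(1)$ is already dominated by the second term, or by $32\alpha/\mathsf{n}$ when $\mathsf{n}<2\alpha$.
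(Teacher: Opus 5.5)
Your proposal is correct and follows the paper's proof essentially step for step. The steps match: the generalized SWAP test on a QSVT block-encoding of $p_{\{\alpha\},\delta',\varepsilon'}(\rho/2)$, samplization with precision $\delta$, rescaling by $2^{1+\{\alpha\}}$, the eigenvalue split at $2\delta'$, averaging over $\lfloor \mathsf{n}/T\rfloor$ independent blocks, and the choice $\delta=\varepsilon'=\delta'^{k-1}$ with $\delta'\approx(\log^4(\mathsf{n})/\mathsf{n})^{1/(3k-1)}$.

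One small fix is needed, and it applies equally to the paper's own parameter choice. The tail term is $(2\delta')^{k-1}$, not $\delta'^{k-1}$. Moreover, for $k\gg\log\mathsf{n}$ the literal value $(\log^4(\mathsf{n})/\mathsf{n})^{1/(3k-1)}$ is close to $1$, which violates $\delta'\le\frac12$ in \cref{lemma:poly-approx-pos}. You should therefore take, e.g., $\delta'=\frac14(\log^4(\mathsf{n})/\mathsf{n})^{1/(3k-1)}$. This keeps the constant hidden in the $\Theta$ term independent of $\alpha$, and costs only a constant factor in $\mathsf{S}$.
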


In \cref{sec:gswap,sec:subq,sec:sampl}, useful unitary matrices are constructed using known techniques from quantum computing. 
Readers not familiar with quantum computing can skip directly to \cref{sec:simple-estimator,sec:final-estimator} for the main construction of the estimator, assuming the correctness of \cref{sec:gswap,sec:subq,sec:sampl}, where no further knowledge of quantum computing is required. 

\subsection{Generalized SWAP test for block-encodings} \label{sec:gswap}

For our purpose, we present a generalized SWAP test that is useful when we have unitary operators as block-encodings. 

\begin{lemma} [Generalized SWAP test for block-encodings] \label{lemma:swap-block}
    Let $W$ be an $\rbra{n+a}$-qubit unitary operator that is a $\rbra{1, a, 0}$-block-encoding of an $n$-qubit linear operator $B$. 
    Let 
    \begin{equation}
    U \coloneqq H_{\mathsf{C}} \cdot \textup{Ctrl}_{\mathsf{C}}\textup{-}W_{\mathsf{A}_1} \cdot \textup{Ctrl}_{\mathsf{C}}\textup{-}\rbra{\textup{Shift}_{k}}_{\mathsf{A}_1\dots\mathsf{A}_{k}} \cdot H_{\mathsf{C}}
    \end{equation}
    be the unitary operator implemented by the quantum circuit in \cref{fig:swap-intro}, where the subsystem $\mathsf{C}$ consists of $1$ qubit and each subsystem $\mathsf{A}_i$ for $1 \leq i \leq k$ consists of $\rbra{n+a}$ qubits. 
    Then, applying the unitary operator $U$ to the quantum state 
    \begin{equation}
        \sigma \coloneqq \ketbra{0}{0}_{\mathsf{C}} \otimes \rbra{\rho \otimes \ketbra{0}{0}^{\otimes a}}_{\mathsf{A}_1} \otimes \cdots \otimes \rbra{\rho \otimes \ketbra{0}{0}^{\otimes a}}_{\mathsf{A}_k},
    \end{equation}
    followed by a measurement in the computational basis of the subsystem $\mathsf{C}$, we obtain a measurement outcome $x \in \cbra{0, 1}$ such that
    \begin{equation}
        \Pr\sbra{x = 0} = \tr\rbra*{\Pi_0 U \sigma U^\dag \Pi_0^\dag} = \frac{1 + \Real\rbra{\tr\rbra{B \rho^k}}}{2},
    \end{equation}
    where $\Pi_0 = \ketbra{0}{0}_{\mathsf{C}}$. 
\end{lemma}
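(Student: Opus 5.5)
The plan is to reduce the claim to the Hadamard test (\cref{lemma:hadamard}) and then evaluate the resulting trace with \cref{lemma:tr-L-rho-k}. The controlled operations in the circuit are applied in the order $\textup{Ctrl}_{\mathsf{C}}\textup{-}\textup{Shift}_k$ first, then $\textup{Ctrl}_{\mathsf{C}}\textup{-}W_{\mathsf{A}_1}$. Since both are controlled on the same qubit, their product is a single controlled unitary $\textup{Ctrl}_{\mathsf{C}}\textup{-}V$ with
\begin{equation}
    V \coloneqq W_{\mathsf{A}_1} \cdot \rbra{\textup{Shift}_k}_{\mathsf{A}_1\dots\mathsf{A}_k}.
\end{equation}
Indeed $\diag\rbra{I, W}\diag\rbra{I, \textup{Shift}_k} = \diag\rbra{I, W\,\textup{Shift}_k}$. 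Hence $U = H_{\mathsf{C}} \cdot \textup{Ctrl}_{\mathsf{C}}\textup{-}V \cdot H_{\mathsf{C}}$, which is exactly the form in \cref{lemma:hadamard}, with the system register $\mathsf{B} = \mathsf{A}_1\cdots\mathsf{A}_k$ in state $\tau \coloneqq \rbra{\rho\otimes\ketbra{0}{0}^{\otimes a}}^{\otimes k}$. That lemma immediately gives
\begin{equation}
    \Pr\sbra{x=0} = \frac{1+\Real\rbra{\tr\rbra{V\tau}}}{2},
\end{equation}
so it remains to show $\tr\rbra{V\tau} = \tr\rbra{B\rho^k}$.

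For this I will apply \cref{lemma:tr-L-rho-k}, treating each $\mathsf{A}_i$ (of $n+a$ qubits) as one subsystem. I take $L \coloneqq W$ acting on $\mathsf{A}_1$, and the density operator $\tilde\rho \coloneqq \rho \otimes \ketbra{0}{0}^{\otimes a}$ on each $\mathsf{A}_i$. Then $\tr\rbra{V\tau} = \tr\rbra{W\tilde\rho^k}$. Since $\ketbra{0}{0}^{\otimes a}$ is a projector, $\tilde\rho^k = \rho^k \otimes \ketbra{0}{0}^{\otimes a}$. Therefore
\begin{equation}
    \tr\rbra{W\tilde\rho^k} = \tr\rbra*{\rbra*{\bra{0}^{\otimes a}\otimes I_n} W \rbra*{\ket{0}^{\otimes a}\otimes I_n}\rho^k} = \tr\rbra{B\rho^k},
\end{equation}
where the last equality uses the $\rbra{1,a,0}$-block-encoding property.

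The step needing the most care is the bookkeeping of tensor orderings. The ancilla qubits of each $\mathsf{A}_i$ are listed after $\rho$, while the block-encoding definition puts the ancilla first, so the partial-trace identity should be checked with consistent ordering. One must also confirm that \cref{lemma:tr-L-rho-k} applies with $L$ on the enlarged subsystem $\mathsf{A}_1$: Shift permutes whole $(n+a)$-qubit registers, so this holds. Everything else is a direct substitution.
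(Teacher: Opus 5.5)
Your proposal is correct and follows the paper's proof essentially step for step: apply the Hadamard test to the combined controlled unitary $W_{\mathsf{A}_1}\cdot\textup{Shift}_k$, then \cref{lemma:tr-L-rho-k} on the enlarged $(n+a)$-qubit registers with $\rho\otimes\ketbra{0}{0}^{\otimes a}$, and finally the block-encoding identity. The ordering issue you flag is purely notational; the paper handles it by splitting $\mathsf{A}_1$ into a system register $\mathsf{X}_1$ and an ancilla register $\mathsf{Y}_1$ and writing the block-encoding condition for $W_{\mathsf{A}_1}$ in that fixed order.
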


\begin{proof}
    We describe the subsystem $\mathsf{A}_i$ as two subsystems $\mathsf{X}_i$ and $\mathsf{Y}_i$.
    For example, $\rbra{\rho \otimes \ketbra{\bar 0}{\bar 0}}_{\mathsf{A}_i} = \rho_{\mathsf{X}_i} \otimes \ketbra{\bar 0}{\bar 0}_{\mathsf{Y}_i}$, where we use $\ket{\bar 0} \coloneqq \ket{0}^{\otimes a}$. 
    Then, 
    $W_{\mathsf{A}_1}$ is the unitary operator $W$ acting on the subsystem $\mathsf{A}_1$ such that 
    \begin{equation} \label{eq:def-W}
    \rbra{I_{\mathsf{X}_1} \otimes \bra{\bar 0}_{\mathsf{Y}_1}} \cdot W_{\mathsf{A}_1} \cdot \rbra{I_{\mathsf{X}_1} \otimes \ket{\bar 0}_{\mathsf{Y}_1}} = B_{\mathsf{X}_1}.
    \end{equation}
    By \cref{lemma:hadamard}, the measurement outcome $x$ satisfies
    \begin{align}
        \Pr\sbra{x = 0} 
        & = \tr\rbra*{\Pi_0 U \sigma U^\dag \Pi_0^\dag} \\
        & = \frac{1 + \Real\rbra*{\tr\rbra[\big]{W_{\mathsf{A}_1} \cdot \rbra{\textup{Shift}_{k}}_{\mathsf{A}_1\dots\mathsf{A}_{k}} \cdot \rbra{\rbra{\rho \otimes \ketbra{\bar 0}{\bar 0}}_{\mathsf{A}_1} \otimes \cdots \otimes \rbra{\rho \otimes \ketbra{\bar 0}{\bar 0}}_{\mathsf{A}_k}}}}}{2}. \label{eq:swap-prob}
    \end{align}
    By \cref{lemma:tr-L-rho-k}, we have
    \begin{align}
        \eqref{eq:swap-prob} 
        & = \frac{1 + \Real\rbra*{\tr\rbra*{W_{\mathsf{A}_1} \rbra{\rho \otimes \ketbra{\bar 0}{\bar 0}}_{\mathsf{A}_1}^k}}}{2} \\
        & = \frac{1 + \Real\rbra*{\tr\rbra*{W_{\mathsf{A}_1} \rbra{\rho_{\mathsf{X}_1}^k \otimes \ketbra{\bar 0}{\bar 0}_{\mathsf{Y}_1}}}}}{2}, \label{eq:swap-prob2}
    \end{align}
    where note that $\ketbra{\bar 0}{\bar 0}^{k} = \ketbra{\bar 0}{\bar 0}$. 
    By \cref{eq:def-W}, we have
    \begin{align}
        \eqref{eq:swap-prob2}
        & = \frac{1 + \Real\rbra*{\tr\rbra*{ \rbra*{\rbra{I_{\mathsf{X}_1} \otimes \bra{\bar 0}_{\mathsf{Y}_1}} \cdot W_{\mathsf{A}_1} \cdot \rbra{I_{\mathsf{X}_1} \otimes \ket{\bar 0}_{\mathsf{Y}_1}}} \cdot \rho_{\mathsf{X}_1}^k}}}{2} \\
        & = \frac{1 + \Real\rbra*{\tr\rbra*{B_{\mathsf{X}_1} \cdot \rho_{\mathsf{X}_1}^k}}}{2} = \frac{1 + \Real\rbra*{\tr\rbra*{B \rho^k}}}{2}. 
    \end{align}
\end{proof}

\subsection{Subroutines with queries} \label{sec:subq}

Building on the generalized SWAP test for block-encodings in \cref{lemma:swap-block}, we construct the following quantum subroutine further based on quantum singular value transformation (\cref{lemma:qsvt}). 

\begin{lemma} \label{lemma:subroutine-qsvt}
    Let $p \in \mathbb{R}\sbra{x}$ be an even/odd polynomial of degree $d$ with $\abs{p\rbra{x}} \leq 1$ for $x \in \sbra{-1, 1}$. 
    For any $k \geq 1$, $a \geq 1$ and $n \geq 1$, there is an $\rbra{a, n, k\rbra{n+a+1}+1, O\rbra{d}}$-quantum query algorithm $U_{k,p}$ such that for any unitary operator $U_A$ that is a $\rbra{1, a, 0}$-block-encoding of an $n$-qubit Hermitian operator $A$, applying the unitary operator $U_{k,p}\sbra{U_A}$ to the quantum state 
    \begin{equation} \label{def:sigma}
        \sigma \coloneqq \ketbra{0}{0}_{\mathsf{C}} \otimes \rbra{\rho \otimes \ketbra{0}{0}^{\otimes \rbra{a+1}}}_{\mathsf{A}_1} \otimes \cdots \otimes \rbra{\rho \otimes \ketbra{0}{0}^{\otimes \rbra{a+1}}}_{\mathsf{A}_k} \simeq \rho^{\otimes k} \otimes \ketbra{0}{0}^{\otimes \rbra{k\rbra{a+1}+1}}
    \end{equation}
    with $\rho$ an $n$-qubit quantum state, followed by a measurement in the computational basis of the one-qubit subsytem $C$, we obtain a measurement outcome $x \in \cbra{0, 1}$ such that 
    \begin{equation}
        \Pr\sbra{x = 0} = \tr\rbra*{\Pi_0 \rbra{U_{k,p}\sbra{U_A}} \sigma \rbra{U_{k,p}\sbra{U_A}}^\dag \Pi_0^\dag} = \frac{1 + \Real\rbra{\tr\rbra{p\rbra{A} \rho^k}}}{2},
    \end{equation}
    where $\Pi_0 = \ketbra{0}{0}_{\mathsf{C}}$. 
\end{lemma}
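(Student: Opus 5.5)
The plan is to compose \cref{lemma:qsvt} with \cref{lemma:swap-block} and then check that the result has the shape of an $\rbra{a, n, k\rbra{n+a+1}+1, O\rbra{d}}$-quantum query algorithm.

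\textbf{Step 1: apply QSVT.} Since $p$ is an even/odd polynomial with $\abs{p\rbra{x}} \leq 1$ on $\sbra{-1,1}$ and $A$ is Hermitian, \cref{lemma:qsvt} gives an $\rbra{a, n, n+a+1, O\rbra{d}}$-quantum query algorithm $C$. For every $\rbra{1,a,0}$-block-encoding $U_A$ of $A$, the unitary $C\sbra{U_A}$ on $n+a+1$ qubits is a $\rbra{1, a+1, 0}$-block-encoding of $p\rbra{A}$. Its form is $G_Q U_Q \cdots G_1 U_1 G_0$ with $Q = O\rbra{d}$, where the $G_j$ do not depend on $U_A$.

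\textbf{Step 2: plug into the generalized SWAP test.} Define
\begin{equation}
U_{k,p}\sbra{U_A} \coloneqq H_{\mathsf{C}} \cdot \textup{Ctrl}_{\mathsf{C}}\textup{-}\rbra{C\sbra{U_A}}_{\mathsf{A}_1} \cdot \textup{Ctrl}_{\mathsf{C}}\textup{-}\rbra{\textup{Shift}_k}_{\mathsf{A}_1\dots\mathsf{A}_k} \cdot H_{\mathsf{C}}.
\end{equation}
Here each $\mathsf{A}_i$ has $n+a+1$ qubits and $\mathsf{C}$ has one qubit, so the total is $k\rbra{n+a+1}+1$ qubits. Apply \cref{lemma:swap-block} with $W = C\sbra{U_A}$, ancilla size $a+1$, and $B = p\rbra{A}$. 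On the state $\sigma$ of \cref{def:sigma}, this gives $\Pr\sbra{x=0} = \rbra{1 + \Real\rbra{\tr\rbra{p\rbra{A}\rho^k}}}/2$, which is the claimed identity.

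\textbf{Step 3: verify the query-algorithm form.} This is the only non-routine point. The gate $\textup{Ctrl}_{\mathsf{C}}\textup{-}C\sbra{U_A}$ must be written as an interleaving of $U_A$-independent unitaries with at most $O\rbra{d}$ (controlled-)$U_A^{\pm1}$ queries. Controlling a product is the product of the controlled factors, so
\begin{equation}
\textup{Ctrl}\textup{-}\rbra{G_Q U_Q \cdots U_1 G_0} = \rbra{\textup{Ctrl}\textup{-}G_Q}\rbra{\textup{Ctrl}\textup{-}U_Q}\cdots\rbra{\textup{Ctrl}\textup{-}U_1}\rbra{\textup{Ctrl}\textup{-}G_0}.
\end{equation}
The factors $\textup{Ctrl}\textup{-}G_j$ are independent of $U_A$. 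If some $U_j$ is itself a controlled query, then $\textup{Ctrl}\textup{-}U_j$ is doubly controlled. It can be realized by one singly controlled query together with a Toffoli gate and one extra ancilla qubit, which only adds $U_A$-independent gates and ancilla. Alternatively, one can note that the QSVT circuit already controls via a single extra qubit. The remaining gates, $H_{\mathsf{C}}$ and $\textup{Ctrl}\textup{-}\textup{Shift}_k$, are independent of $U_A$ and are absorbed into the $G$'s. Hence the query count stays $O\rbra{d}$.

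\textbf{Step 4: well-behavedness.} The algorithm is well-behaved because the output probability depends on $U_A$ only through $p\rbra{A}$. That completes the proof.
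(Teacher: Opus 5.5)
Your proposal is correct and follows the paper's own proof. You obtain the $(1,a+1,0)$-block-encoding $C[U_A]$ of $p(A)$ from \cref{lemma:qsvt}, take $W = C[U_A]$ in \cref{lemma:swap-block} with ancilla size $a+1$, and read off $\Pr[x=0]$. Your Step 3 is more careful than the paper, which asserts the query-algorithm form without comment; the only quibble is that your Toffoli-based realization of doubly controlled queries uses one extra ancilla qubit beyond the stated $k(n+a+1)+1$, which is a harmless bookkeeping discrepancy.
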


\begin{proof}
    By \cref{lemma:qsvt}, there is an $\rbra{a, n, n+a+1, O\rbra{\deg\rbra{p}}}$-quantum query algorithm $C$ such that for any unitary operator $U_A$ that is a $\rbra{1, a, 0}$-block-encoding of an $n$-qubit Hermitian operator $A$, $C\sbra{U_A}$ is a $\rbra{1, a+1, 0}$-block-encoding of $p\rbra{A}$. 

    By taking $W \coloneqq C\sbra{U_A}$ in \cref{lemma:swap-block} (with $a \coloneqq a + 1$), let 
    \begin{equation}
        U\sbra{U_A} \coloneqq H_{\mathsf{C}} \cdot \textup{Ctrl}_{\mathsf{C}}\textup{-}\rbra{C\sbra{U_A}}_{\mathsf{A}_1} \cdot \textup{Ctrl}_{\mathsf{C}}\textup{-}\rbra{\textup{Shift}_{k}}_{\mathsf{A}_1\dots\mathsf{A}_{k}} \cdot H_{\mathsf{C}},
    \end{equation}
    where $\mathsf{C}$ is a one-qubit subsystem and each $\mathsf{A}_i$ is an $\rbra{n+a+1}$-qubit subsystem.
    Note that $U\sbra{U_A}$ is an $\rbra{a, n, k\rbra{n+a+1}+1, O\rbra{\deg\rbra{p}}}$-quantum query algorithm. 
    By \cref{lemma:swap-block}, applying $U\sbra{U_A}$ to the quantum state $\sigma$,
    followed by a measurement in the computational basis of the subsystem $\mathsf{C}$, we obtain a measurement outcome $x \in \cbra{0, 1}$ such that
    \begin{equation}
        \Pr\sbra{x = 0} = \tr\rbra*{\Pi_0 \rbra{U\sbra{U_A}} \sigma \rbra{U\sbra{U_A}}^\dag \Pi_0^\dag} = \frac{1 + \Real\rbra{\tr\rbra{p\rbra{A} \rho^k}}}{2},
    \end{equation}
    where $\Pi_0 = \ketbra{0}{0}_{\mathsf{C}}$. 
\end{proof}

\subsection{Samplization} \label{sec:sampl}

Building on \cref{lemma:subroutine-qsvt}, we present its samplized version where $A$ is set to $\propto \rho$, based on the quantum samplizer (\cref{lemma:samplizer}).

\begin{lemma} \label{lemma:samplized-test}
    Let $p \in \mathbb{R}\sbra{x}$ be an even/odd polynomial of degree $d$ with $\abs{p\rbra{x}} \leq 1$ for $x \in \sbra{-1, 1}$. 
    For $k \geq 1$, $n \geq 1$ and $\delta \in \rbra{0, 1}$, there is a $\rbra{Tn+\ell}$-qubit unitary matrix $V_{k,p}$ for some $\ell \geq 0$ and
    \begin{equation}
        T = k + O\rbra*{\frac{d^2}{\delta}\log^2\rbra*{\frac{d}{\delta}}},
    \end{equation}
    such that for any $n$-qubit quantum state $\rho$, measuring $V_{k,p} \rbra{\rho^{\otimes T} \otimes \ketbra{0}{0}^{\otimes \ell}} V_{k,p}^\dag$
    in the computational basis of a one-qubit subsystem $\mathsf{C}$ will produce a measurement outcome $x \in \cbra{0, 1}$ with 
    \begin{equation}
        \Pr\sbra{x = 0} = \tr\rbra*{ \Pi_0 V_{k,p} \rbra*{\rho^{\otimes T} \otimes \ketbra{0}{0}^{\otimes \ell}} V_{k,p}^\dag \Pi_0^\dag }
    \end{equation}
    where $\Pi_0 = \ketbra{0}{0}_{\mathsf{C}}$, satisfying
    \begin{equation}
        \abs*{ \Pr\sbra{x = 0} - \frac{1 + \Real\rbra{\tr\rbra{p\rbra{\rho/2} \rho^k}}}{2} } \leq \delta.
    \end{equation}
\end{lemma}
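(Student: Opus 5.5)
The plan is to feed the query algorithm of \cref{lemma:subroutine-qsvt} into the samplizer of \cref{lemma:samplizer}, and then use the diamond-norm guarantee to control the measurement probability.

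First, fix $a \coloneqq 4$, so that the samplizer's requirement $m \geq 4$ holds. \Cref{lemma:subroutine-qsvt} then gives an $\rbra{4, n, k\rbra{n+5}+1, Q}$-quantum query algorithm $U_{k,p}$ with $Q = O\rbra{d}$. Applying \cref{lemma:samplizer} with precision $\delta$ yields an $\rbra{n, k\rbra{n+5}+1, S}$-quantum channel $\mathcal{E} \coloneqq \mathsf{Samplize}_{\delta}\ave{U_{k,p}}$, where
\begin{equation}
S = O\rbra*{\frac{d^2}{\delta}\log^2\rbra*{\frac{d}{\delta}}}.
\end{equation}
By the samplizer definition, there is a unitary $U_\rho$ that is a $\rbra{2,4,0}$-block-encoding of $\rho$, i.e., a $\rbra{1,4,0}$-block-encoding of the Hermitian operator $A \coloneqq \rho/2$, such that
\begin{equation}
\Abs*{\mathcal{E}\sbra{\rho} - \rbra{U_{k,p}\sbra{U_\rho}}\rbra{\cdot}\rbra{U_{k,p}\sbra{U_\rho}}^\dag}_\diamond \leq \delta.
\end{equation}

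Next, apply both channels to the input $\sigma$ of \cref{def:sigma}, which is $\rho^{\otimes k}$ tensored with zero ancillas. The diamond-norm bound implies the two output states differ by at most $\delta$ in trace norm. Measuring with $\Pi_0$ is a contraction, so by \cref{fact:holder} (with $\Abs{\Pi_0} \leq 1$) the probabilities of outcome $0$ differ by at most $\delta$. Under the exact channel, \cref{lemma:subroutine-qsvt} gives this probability as $\rbra{1+\Real\rbra{\tr\rbra{p\rbra{\rho/2}\rho^k}}}/2$. Strictly, the trace-norm bound gives $2\delta$ for the difference of probabilities unless one uses the halved convention; to be safe, I would run the samplizer at precision $\delta/2$, which changes $S$ only by constants.

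Finally, the samplized channel must be unfolded into a single unitary acting on $\rho^{\otimes T}$ and zeros. By \cref{def:qchannel}, $\mathcal{E}\sbra{\rho}\rbra{\sigma} = \tr_{\mathsf{env}}\rbra{W\rbra{\rho^{\otimes S}\otimes\ketbra{0}{0}^{\otimes n_{\mathsf{anc}}}\otimes\sigma}W^\dag}$. Since $\sigma \simeq \rho^{\otimes k}\otimes\ketbra{0}{0}^{\otimes\rbra{5k+1}}$, the full input is a permutation of $\rho^{\otimes T}\otimes\ketbra{0}{0}^{\otimes\ell}$ with $T = k+S$ and $\ell = n_{\mathsf{anc}}+5k+1$. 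Define $V_{k,p} \coloneqq W\cdot P$, where $P$ is the subsystem-permutation unitary. The partial trace over $\mathsf{env}$ does not affect the $\Pi_0$ measurement on $\mathsf{C}$, which lies outside $\mathsf{env}$, so the probability equals $\tr\rbra{\Pi_0 V_{k,p}\rbra{\rho^{\otimes T}\otimes\ketbra{0}{0}^{\otimes\ell}}V_{k,p}^\dag\Pi_0^\dag}$.

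This bookkeeping is routine. The only delicate point is the constant in the diamond-to-probability step, which the rescaling of $\delta$ absorbs.
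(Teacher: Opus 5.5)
Your proposal is correct and follows the paper's proof essentially step for step: set $a=4$, samplize $U_{k,p}$, pass from the diamond norm to the trace norm and apply H\"older, then unfold the channel as $V_{k,p}=WU_\pi$ with $T=S+k$ and $\ell=5k+1+n_{\mathsf{anc}}$. The $\delta/2$ rescaling is harmless but unnecessary, and your stated reason has it backwards: with the paper's unhalved diamond norm, H\"older with $\Abs{\Pi_0}=1$ already gives a gap of at most $\delta$ (in fact $\delta/2$, since the difference of two density operators is traceless), so no factor of $2$ ever arises.
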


\begin{proof}
    Let $a  = 4$. 
    Let $U_{k,p}\sbra{U_A}$ be the $\rbra{a, n, k\rbra{n+a+1}+1, O\rbra{d}}$-quantum query algorithm given in \cref{lemma:subroutine-qsvt}. 
    Let $\delta \in \rbra{0, 1}$ to be determined.
    By \cref{lemma:samplizer}, there is an $\rbra{n, k\rbra{n+a+1}+1, S}$-quantum channel $\mathcal{E} \coloneqq \mathsf{Samplize}_{\delta}\ave{U_{k, p}}$ with
    \begin{equation}
        S = O\rbra*{\frac{d^2}{\delta}\log^2\rbra*{\frac{d}{\delta}}},
    \end{equation}
    such that there is a unitary operator $U_{\rho/2}$ that is a $\rbra{2, a, 0}$-block-encoding of $\rho$, 
    \begin{equation} \label{eq:diff-diamond-norm}
        \Abs*{\mathcal{E}\sbra{\rho} - \rbra{U_{k, p}\sbra{U_{\rho/2}}} \rbra{\cdot} \rbra{U_{k, p}\sbra{U_{\rho/2}}}^\dag}_\diamond \leq \delta.
    \end{equation}
    
    Let $\sigma$ be defined by \cref{def:sigma}. 
    Then, by \cref{eq:diff-diamond-norm}, it holds that 
    \begin{equation} \label{eq:diff-tr-norm}
        \Abs*{ \mathcal{E}\sbra{\rho}\rbra{\sigma} - \rbra{U_{k, p}\sbra{U_{\rho/2}}} \sigma \rbra{U_{k, p}\sbra{U_{\rho/2}}}^\dag }_1 \leq \delta. 
    \end{equation}
    Measuring $\mathcal{E}\sbra{\rho}\rbra{\sigma}$ in the computational basis of the one-qubit subsystem $\mathsf{C}$ will produce a measurement outcome $x \in \cbra{0, 1}$ such that 
    \begin{equation} \label{eq:prob-samplize}
        \Pr\sbra{x = 0} = \tr\rbra*{\Pi_0 \cdot \mathcal{E}\sbra{\rho}\rbra{\sigma} \cdot \Pi_0^\dag},
    \end{equation}
    where $\Pi_0 = \ketbra{0}{0}_\mathsf{C}$. 
    By \cref{fact:holder}, we have
    \begin{align}
        & \abs*{ \tr\rbra*{\Pi_0 \cdot \mathcal{E}\sbra{\rho}\rbra{\sigma} \cdot \Pi_0^\dag} - \tr\rbra*{\Pi_0 \rbra{U_{k, p}\sbra{U_{\rho/2}}} \sigma \rbra{U_{k, p}\sbra{U_{\rho/2}}}^\dag \Pi_0^\dag} } \\
        = {} & \abs*{ \tr\rbra*{ \Pi_0 \cdot \rbra*{\mathcal{E}\sbra{\rho}\rbra{\sigma} - \rbra{U_{k, p}\sbra{U_{\rho/2}}} \sigma \rbra{U_{k, p}\sbra{U_{\rho/2}}}^\dag} } } \\
        \leq {} & \Abs*{\Pi_0} \cdot \Abs*{\mathcal{E}\sbra{\rho}\rbra{\sigma} - \rbra{U_{k, p}\sbra{U_{\rho/2}}} \sigma \rbra{U_{k, p}\sbra{U_{\rho/2}}}^\dag}_1 \\
        \leq {} & 1 \cdot \delta = \delta, \label{eq:neq-prob-diff}
    \end{align}
    where \cref{eq:neq-prob-diff} is due to \cref{eq:diff-tr-norm} and the fact that $\Abs{\Pi_0} = 1$. 
    On the other hand, by \cref{lemma:subroutine-qsvt}, we have
    \begin{equation} \label{eq:prob-subroutine}
        \tr\rbra*{\Pi_0 \rbra{U_{k,p}\sbra{U_{\rho/2}}} \sigma \rbra{U_{k,p}\sbra{U_{\rho/2}}}^\dag \Pi_0^\dag} = \frac{1 + \Real\rbra{\tr\rbra{p\rbra{\rho/2} \rho^k}}}{2}.
    \end{equation}
    Combining \cref{eq:prob-samplize,eq:neq-prob-diff,eq:prob-subroutine}, we finally have
    \begin{equation} \label{eq:prob-diff}
        \abs*{ \Pr\sbra{x = 0} - \frac{1 + \Real\rbra{\tr\rbra{p\rbra{\rho/2} \rho^k}}}{2} } \leq \delta.
    \end{equation}
    
    Now we recall that $\mathcal{E}$ is an $\rbra{n, k\rbra{n+5}+1, S}$-quantum channel. 
    By \cref{def:qchannel}, there is an $\rbra{Sn+k\rbra{n+5}+1+n_{\mathsf{anc}}}$-qubit unitary matrix $W$ for some $n_{\mathsf{anc}} \geq 0$ such that for any $\rbra{k\rbra{n+5}+1}$-qubit quantum state $\sigma$, 
    \begin{equation}
        \mathcal{E}\sbra{\rho}\rbra{\sigma} \coloneqq \tr_{\mathsf{env}}\rbra*{ W \rbra*{ \underbrace{\rho^{\otimes S} \otimes \ketbra{0}{0}^{\otimes n_{\mathsf{anc}}}}_{\mathsf{env}} \otimes \sigma } W^\dag }.
    \end{equation}
    Note that $\rho^{\otimes S} \otimes \ketbra{0}{0}^{\otimes n_{\mathsf{anc}}} \otimes \sigma \simeq \rho^{\otimes \rbra{S+k}} \otimes \ketbra{0}{0}^{\otimes \rbra{5k+1+n_{\mathsf{anc}}}}$.
    Then, there is a permutation matrix $U_{\pi}$ such that 
    \begin{equation}
        \rho^{\otimes S} \otimes \ketbra{0}{0}^{\otimes n_{\mathsf{anc}}} \otimes \sigma = U_\pi \rbra*{\rho^{\otimes \rbra{S+k}} \otimes \ketbra{0}{0}^{\otimes \rbra{5k+1+n_{\mathsf{anc}}}}} U_\pi^\dag.
    \end{equation}
    Therefore, \cref{eq:prob-samplize} can be written as
    \begin{align}
        \Pr\sbra{x = 0}
        & = \tr\rbra*{\Pi_0 \cdot \tr_{\mathsf{env}}\rbra*{ W \rbra*{ \underbrace{\rho^{\otimes S} \otimes \ketbra{0}{0}^{\otimes n_{\mathsf{anc}}}}_{\mathsf{env}} \otimes \sigma } W^\dag } \cdot \Pi_0^\dag} \\
        & = \tr\rbra*{\Pi_0 \cdot W \rbra*{ \rho^{\otimes S} \otimes \ketbra{0}{0}^{\otimes n_{\mathsf{anc}}} \otimes \sigma } W^\dag \cdot \Pi_0^\dag} \\
        & = \tr\rbra*{\Pi_0 \cdot W U_\pi \rbra*{\rho^{\otimes \rbra{S+k}} \otimes \ketbra{0}{0}^{\otimes \rbra{5k+1+n_{\mathsf{anc}}}}} U_\pi^\dag W^\dag \cdot \Pi_0^\dag}.
    \end{align}
    To complete the proof, together with \cref{eq:prob-diff}, it suffices to take $V_{k,p} \coloneqq W U_\pi$, which gives
    \begin{equation}
        \Pr\sbra{x = 0} = \tr\rbra*{\Pi_0 V_{k,p} \rbra*{\rho^{\otimes T} \otimes \ketbra{0}{0}^{\otimes \ell}} V_{k,p}^\dag \Pi_0^\dag}
    \end{equation}
    where $T = S+k$ and $\ell = 5k+1+n_{\mathsf{anc}}$. 
\end{proof}

\subsection{A baby estimator} \label{sec:simple-estimator}

In this section, we construct an estimator for $\mathrm{F}_\alpha\rbra{\rho}$ (and thus also for $\mathrm{F}_\alpha\rbra{P}$) with its expectation close to $\mathrm{F}_\alpha\rbra{\rho}$, based on the results in \cref{lemma:samplized-test}. 
This ``baby'' estimator can be described by an Hermitian matrix $\mathcal{O}_{\textup{baby}}$, which will be explicitly constructed as follows. 

\paragraph{Step 1: Approximation polynomial of $x^{\cbra{\alpha}}$.}

Let $\varepsilon', \delta' \in (0, \frac{1}{2}]$ be some parameters to be determined. 
By \cref{lemma:poly-approx-pos} with $\varepsilon \coloneq \varepsilon'$, $\delta \coloneq \delta'$, and $c \coloneq \cbra{\alpha}$, there is an even/odd polynomial $p_{\cbra{\alpha},\delta',\varepsilon'}$ of degree $O\rbra{\frac{1}{\delta'}\log\rbra{\frac{1}{\varepsilon'}}}$, where $O\rbra{\cdot}$ hides a constant factor that is independent of $\cbra{\alpha}$, such that (i) $\abs{p_{\cbra{\alpha},\delta',\varepsilon'}\rbra{x} - \frac{1}{2}x^{\cbra{\alpha}}} \leq \varepsilon'$ for $x \in \sbra{\delta', 1}$ and (ii) $\abs{p_{\cbra{\alpha},\delta',\varepsilon'}\rbra{x}} \leq 1$ for $x \in \sbra{-1, 1}$.\footnote{If $\alpha$ is an integer, i.e., $c = 0$, then one can simply set $p_{0, \delta', \varepsilon'}\rbra{x} \equiv \frac{1}{2}$.} 

\paragraph{Step 2: Relate $\tr\rbra{\rho^\alpha}$ to unitary matrices.}

Let $\delta \in \rbra{0, 1}$ to be determined. 
By \cref{lemma:samplized-test} with $n \coloneqq \ceil{\log_2\rbra{S}}$, $k \coloneqq \floor{\alpha}$, $\delta \coloneqq \delta$, and $p \coloneqq p_{\cbra{\alpha},\delta',\varepsilon'}$, there is a $\rbra{T\ceil{\log_2\rbra{S}}+\ell}$-qubit unitary matrix $V_{\floor{\alpha}, p_{\cbra{\alpha},\delta',\varepsilon'}}$ for some $\ell \geq 0$ such that (note that $\tr\rbra{p_{\cbra{\alpha},\delta',\varepsilon'}\rbra{\rho/2} \rho^{\floor{\alpha}}} \in \mathbb{R}$)
\begin{equation} \label{eq:unitary-diff}
    \abs*{ \tr\rbra*{ \Pi_0 V_{\floor{\alpha}, p_{\cbra{\alpha},\delta',\varepsilon'}} \rbra*{\underbrace{\rho^{\otimes T}}_{\mathsf{A}} \otimes \underbrace{\ketbra{0}{0}^{\otimes \ell}}_{\mathsf{B}}} V_{\floor{\alpha}, p_{\cbra{\alpha},\delta',\varepsilon'}}^\dag \Pi_0^\dag } - \frac{1 + \tr\rbra{p_{\cbra{\alpha},\delta',\varepsilon'}\rbra{\rho/2} \rho^{\floor{\alpha}}}}{2} } \leq \delta,
\end{equation}
where 
\begin{align}
    T 
    & = \floor{\alpha} + O\rbra*{\frac{\rbra{\deg\rbra{p_{\cbra{\alpha},\delta',\varepsilon'}}}^2}{\delta}\log^2\rbra*{\frac{\deg\rbra{p_{\cbra{\alpha},\delta',\varepsilon'}}}{\delta}}} \\
    & = \floor{\alpha} + O\rbra*{\frac{1}{\delta'^2 \delta} \log^2\rbra*{\frac{1}{\varepsilon'}} \log^2\rbra*{\frac{\log\rbra{1/\varepsilon'}}{\delta'\delta}}}, \label{def:T}
\end{align}
$\Pi_0 = \ketbra{0}{0} \otimes I_{Tn+\ell-1}$ and $I_{m}$ denotes the $m$-qubit identity operator. 

\paragraph{Step 3: A subnormalized estimator.}

Let
\begin{equation} \label{def:tildeObaby}
    \widetilde{\mathcal{O}}_{\textup{baby}} \coloneqq 2 \underbrace{\bra{0}^{\otimes \ell}}_{\mathsf{B}} V_{\floor{\alpha}, p_{\cbra{\alpha},\delta',\varepsilon'}}^\dag \Pi_0 V_{\floor{\alpha}, p_{\cbra{\alpha},\delta',\varepsilon'}} \underbrace{\ket{0}^{\otimes \ell}}_{\mathsf{B}} - I.
\end{equation}
Then, it can be shown the following property. 
\begin{proposition} \label{prop:baby-diff}
Let $\delta', \varepsilon', \delta, T$ and $\widetilde{\mathcal{O}}_{\textup{baby}}$ defined above. Then, for $\alpha > 1$, 
\begin{equation}
    \abs*{ \tr\rbra*{\widetilde{\mathcal{O}}_{\textup{baby}} \rho^{\otimes T}} - \frac{1}{2^{1+\cbra{\alpha}}} \tr\rbra*{\rho^\alpha} } \leq 2\delta + \varepsilon' + \frac{3}{2} \rbra{2\delta'}^{\floor{\alpha}-1}. 
\end{equation}
\end{proposition}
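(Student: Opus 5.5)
The plan is to split the error into two pieces by the triangle inequality, with the intermediate quantity $\tfrac12\tr\rbra{p_{\cbra{\alpha},\delta',\varepsilon'}\rbra{\rho/2}\rho^{\floor{\alpha}}}$:
\begin{equation*}
\abs*{ \tr\rbra*{\widetilde{\mathcal{O}}_{\textup{baby}} \rho^{\otimes T}} - \tfrac{1}{2^{1+\cbra{\alpha}}} \tr\rbra*{\rho^\alpha} }
\leq \abs*{ \tr\rbra*{\widetilde{\mathcal{O}}_{\textup{baby}} \rho^{\otimes T}} - \tr\rbra*{p\rbra{\rho/2}\rho^{\floor{\alpha}}} } + \abs*{ \tr\rbra*{p\rbra{\rho/2}\rho^{\floor{\alpha}}} - \tfrac{1}{2^{1+\cbra{\alpha}}}\tr\rbra{\rho^\alpha} },
\end{equation*}
where $p = p_{\cbra{\alpha},\delta',\varepsilon'}$.

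For the first term, I will unfold the definition \eqref{def:tildeObaby}. Using $\tr\rbra{\bra{0}^{\otimes\ell} X \ket{0}^{\otimes\ell}\,\rho^{\otimes T}} = \tr\rbra{X\rbra{\rho^{\otimes T}\otimes\ketbra{0}{0}^{\otimes\ell}}}$ together with cyclicity of the trace and $\Pi_0^\dag\Pi_0=\Pi_0$, I get
\begin{equation*}
\tr\rbra{\widetilde{\mathcal{O}}_{\textup{baby}}\rho^{\otimes T}} = 2\tr\rbra{\Pi_0 V(\rho^{\otimes T}\otimes\ketbra{0}{0}^{\otimes\ell})V^\dag\Pi_0^\dag} - 1.
\end{equation*}
Equation \eqref{eq:unitary-diff} then bounds the first term by $2\delta$, after multiplying by $2$. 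This uses that $\tr\rbra{p(\rho/2)\rho^{\floor{\alpha}}}$ is real, which holds because $p$ has real coefficients and the two factors commute and are Hermitian.

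For the second term, I will diagonalize $\rho=\sum_i\lambda_i\ketbra{i}{i}$, so the term equals $\abs{\sum_i \lambda_i^{\floor{\alpha}}\rbra{p(\lambda_i/2)-\tfrac12(\lambda_i/2)^{\cbra{\alpha}}}}$, since $\tfrac12(\lambda/2)^{\cbra{\alpha}}\lambda^{\floor{\alpha}} = 2^{-1-\cbra{\alpha}}\lambda^\alpha$. I split the eigenvalues according to whether $\lambda_i/2\geq\delta'$ or $\lambda_i/2<\delta'$.
\begin{itemize}
\item For $\lambda_i\geq 2\delta'$, property (i) of \cref{lemma:poly-approx-pos} gives a pointwise error of at most $\varepsilon'$. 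Summing with weights $\lambda_i^{\floor{\alpha}}\leq\lambda_i$ (using $\floor{\alpha}\ge1$) contributes at most $\varepsilon'$.
\item For $\lambda_i<2\delta'$, I bound $\abs{p}\leq1$ and $\tfrac12(\lambda/2)^{\cbra{\alpha}}\leq\tfrac12$, so the pointwise error is at most $\tfrac32$. The weight satisfies $\lambda_i^{\floor{\alpha}}\leq\lambda_i(2\delta')^{\floor{\alpha}-1}$, so summing with $\sum_i\lambda_i\leq1$ gives at most $\tfrac32(2\delta')^{\floor{\alpha}-1}$.
\end{itemize}
Adding the three contributions gives the claimed bound $2\delta+\varepsilon'+\tfrac32(2\delta')^{\floor{\alpha}-1}$.

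No step is really hard. The only delicate points are the bookkeeping in the first part, namely identifying $\bra{0}^{\otimes \ell}$ with the correct subsystem $\mathsf{B}$ so the partial-inner-product identity applies, and the use of $\floor{\alpha}\geq1$ (from $\alpha>1$) to extract one factor of $\lambda_i$ for normalization. I would also check the integer case separately: there $\cbra{\alpha}=0$ and $p\equiv\tfrac12$, so the second term vanishes.
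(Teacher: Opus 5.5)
Your proposal is correct and follows the paper's proof essentially step for step. It uses the same triangle inequality through $\tr\rbra{p\rbra{\rho/2}\rho^{\floor{\alpha}}}$ (your prose says $\tfrac12\tr\rbra{p\rbra{\rho/2}\rho^{\floor{\alpha}}}$, but your display and the rest of your argument correctly omit the $\tfrac12$), the $2\delta$ bound from unfolding $\widetilde{\mathcal{O}}_{\textup{baby}}$ and invoking \cref{eq:unitary-diff}, and the same eigenvalue split at $2\delta'$, where your small-eigenvalue bound (pointwise error $\tfrac32$ times weight $\lambda_i\rbra{2\delta'}^{\floor{\alpha}-1}$) is only a cosmetic variant of the paper's separate bounds $\rbra{2\delta'}^{\floor{\alpha}-1}$ and $\tfrac12\rbra{2\delta'}^{\alpha-1}$, giving the same constant.
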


\paragraph{Step 4: The baby estimator.} 
Let 
\begin{equation} \label{def:Obaby}
    \mathcal{O}_{\textup{baby}} \coloneqq 2^{1+\cbra{\alpha}} \widetilde{\mathcal{O}}_{\textup{baby}}.
\end{equation}
Then, 
\begin{align}
    \abs*{ \tr\rbra*{\mathcal{O}_{\textup{baby}}\rho^{\otimes T}} - \tr\rbra{\rho^\alpha} }
    & = 2^{1+\cbra{\alpha}} \abs*{ \tr\rbra*{\widetilde{\mathcal{O}}_{\textup{baby}} \rho^{\otimes T}} - \frac{1}{2^{1+\cbra{\alpha}}} \tr\rbra*{\rho^\alpha} } \\
    & \leq 4 \rbra*{2\delta + \varepsilon' + \frac{3}{2} \rbra{2\delta'}^{\floor{\alpha}-1}} \\
    & = 8\delta + 4\varepsilon' + 6 \rbra{2\delta'}^{\floor{\alpha}-1}. \label{eq:error-Obaby}
\end{align}

To rigorously define our baby estimator $\hat{E}_{\textup{baby},T}$, it can be seen that $\mathcal{O}_{\textup{baby}}$ defined by \cref{def:Obaby} has the form as of \cref{def:O-intro}, i.e.,
\begin{equation} \label{def:O-baby}
    \mathcal{O}_{\textup{baby}} = 
    \sum_{b \in \cbra{0, 1}} 2^{1+\cbra{\alpha}} \rbra{-1}^{b} 
    \underbrace{\bra{0}^{\otimes \ell}}_{\mathsf{B}} V_{\floor{\alpha}, p_{\cbra{\alpha},\delta',\varepsilon'}}^\dag \Pi_b V_{\floor{\alpha}, p_{\cbra{\alpha},\delta',\varepsilon'}} \underbrace{\ket{0}^{\otimes \ell}}_{\mathsf{B}},
\end{equation}
where $\Pi_1 = I - \Pi_0$. 
For convenience, we rewrite \cref{def:O-baby} using simpler notations as follows:
\begin{equation} \label{eq:m-range}
    \mathcal{O}_{\textup{baby}} = \sum_{m} m \bra{\bar 0}_{\mathsf{B}} U^\dag \Pi_m U \ket{\bar 0}_{\mathsf{B}}.
\end{equation}
where $m$ ranges over $\cbra{\pm 2^{1+\cbra{\alpha}}}$, $U$ is a unitary matrix, and $\cbra{\Pi_m}$ forms a quantum measurement with each $\Pi_m$ a projection matrix. 
The baby estimator $\hat{E}_{\textup{baby},T}$ on input $T$ samples of $\rho$ is defined according to the formula in \cref{eq:def-hatE-q} and it satisfies
\begin{equation} \label{eq:def-baby-estimator}
    \Pr\sbra*{ \hat{E}_{\textup{baby}, T}\rbra{\rho} = m } = \tr\rbra*{ \Pi_m U \rbra[\big]{ \rho^{\otimes T} \otimes \ketbra{\bar 0}{\bar 0}_{\mathsf{anc}} } U^\dag }. 
\end{equation}
It can be seen that $\hat{E}_{\textup{baby}, T}$ is actually an estimator for $\mathrm{F}_{\alpha}\rbra{P}$ and $\mathrm{F}_{\alpha}\rbra{\rho}$ with sample complexity $\mathsf{n} = T$, with the expectation:
\begin{align}
    \E\sbra*{ \hat{E}_{\textup{baby}, T}\rbra{\rho} }
    & = \sum_{m} m \Pr\sbra*{ \hat{E}_{\textup{baby}, T}\rbra{\rho} = m } \\
    & = \sum_{m} m \tr\rbra*{\Pi_m U \rbra[\big]{ \rho^{\otimes T} \otimes \ketbra{\bar 0}{\bar 0}_{\mathsf{anc}} } U^\dag } \\
    & = \tr\rbra*{\mathcal{O}_{\textup{baby}}\rho^{\otimes T}}. \label{eq:E-Obaby}
\end{align}
For completeness, we also explicitly define the estimator for the classical functional $\mathrm{F}_{\alpha}\rbra{P}$ and explain intuition. 
Let $z_1, z_2, \dots, z_{\mathsf{n}}$ be $\mathsf{n}$ independent and identical samples from the distribution $P$. 
Let $\mathbf{z} \coloneqq \rbra{z_1, z_2, \dots, z_{\mathsf{n}}}$ be the shorthand for all the $\mathsf{n}$ samples and let $\ket{\mathbf{z}} \coloneqq \ket{z_1} \otimes \ket{z_2} \otimes \dots \otimes \ket{z_{\mathsf{n}}}$. 
We now construct the baby estimator $\hat{E}_{\textup{baby},T}$ for $\mathrm{F}_\alpha\rbra{P}$ of the form 
\begin{equation}
    \hat{E}_{\textup{baby}, T}\rbra{P} = f_{\mathcal{O}_{\textup{baby}}}\rbra{\mathbf{z}} \coloneqq \tr\rbra*{\mathcal{O}_{\textup{baby}}\ketbra{\mathbf{z}}{\mathbf{z}}}.
\end{equation}
Note that according to the derivation of \cref{eq:fOz=Orhon}, we have 
\begin{equation}
    \E\sbra*{\hat{E}_{\textup{baby}, T}\rbra{P}} = \E_{\mathbf{z} \sim P^{T}} \sbra{ f_{\mathcal{O}_{\textup{baby}}}\rbra{\mathbf{z}} } = \tr\rbra*{\mathcal{O}_{\textup{baby}} \rho_P^{\otimes T}} = \E\sbra*{ \hat{E}_{\textup{baby}, T}\rbra{\rho_P} },
\end{equation}
where
\begin{equation} \label{eq:def-rhoP}
    \rho_P = \sum_{i=1}^S p_i \ketbra{i}{i}.
\end{equation}

To conclude this section, we provide a proof of \cref{prop:baby-diff} below. 

\begin{proof} [Proof of \cref{prop:baby-diff}]
    Note that
    \begin{align}
        \tr\rbra*{\widetilde{\mathcal{O}}_{\textup{baby}} \rho^{\otimes T}}
        & = 2 \tr\rbra*{ \rbra*{\underbrace{\bra{0}^{\otimes \ell}}_{\mathsf{B}} V_{\floor{\alpha}, p_{\cbra{\alpha},\delta',\varepsilon'}}^\dag \Pi_0 V_{\floor{\alpha}, p_{\cbra{\alpha},\delta',\varepsilon'}} \underbrace{\ket{0}^{\otimes \ell}}_{\mathsf{B}}} \cdot \rho^{\otimes T} } - 1 \\
        & = 2 \tr\rbra*{ \Pi_0 V_{\floor{\alpha}, p_{\cbra{\alpha},\delta',\varepsilon'}} \rbra*{\underbrace{\rho^{\otimes T}}_{\mathsf{A}} \otimes \underbrace{\ketbra{0}{0}^{\otimes \ell}}_{\mathsf{B}}} V_{\floor{\alpha}, p_{\cbra{\alpha},\delta',\varepsilon'}}^\dag \Pi_0^\dag } - 1. 
    \end{align}
    By \cref{eq:unitary-diff}, we have 
    \begin{equation} \label{eq:tildeObaby-diff}
        \abs*{ \tr\rbra*{\widetilde{\mathcal{O}}_{\textup{baby}} \rho^{\otimes T}} - \tr\rbra*{p_{\cbra{\alpha},\delta',\varepsilon'}\rbra*{\frac{\rho}{2}} \rho^{\floor{\alpha}}} } \leq 2\delta. 
    \end{equation}

    On the other hand, if the $\ceil{\log_2\rbra{S}}$-qubit density matrix $\rho$ has eigenvalues $\lambda_1, \lambda_2, \dots, \lambda_{2^{\ceil{\log_2\rbra{S}}}}$, then
    \begin{align}
    \abs*{\tr\rbra*{p_{\cbra{\alpha},\delta',\varepsilon'}\rbra*{\frac{\rho}{2}} \rho^{\floor{\alpha}}} - \frac{1}{2^{1+\cbra{\alpha}}} \tr\rbra*{\rho^\alpha}} 
    & = \abs*{\tr\rbra*{p_{\cbra{\alpha},\delta',\varepsilon'}\rbra*{\frac{\rho}{2}}\cdot\rho^{\floor{\alpha}}} - \tr\rbra*{\frac{1}{2}\rbra*{\frac{\rho}{2}}^{\cbra{\alpha}}\cdot\rho^{\floor{\alpha}}}} \\
    & \leq \sum_{j} \abs*{ \lambda_j^{\floor{\alpha}} \cdot p_{\cbra{\alpha},\delta',\varepsilon'}\rbra*{\frac{\lambda_j}{2}} - \lambda_j^{\floor{\alpha}} \cdot \frac{1}{2}\rbra*{\frac{\lambda_j}{2}}^{\cbra{\alpha}} } \\
    & = \sum_{j} \lambda_j \abs*{ \lambda_j^{\floor{\alpha}-1} \cdot p_{\cbra{\alpha},\delta',\varepsilon'}\rbra*{\frac{\lambda_j}{2}} - \lambda_j^{\floor{\alpha}-1} \cdot \frac{1}{2}\rbra*{\frac{\lambda_j}{2}}^{\cbra{\alpha}} }. \label{eq:error-by-eigen}
    \end{align}
    To bound \cref{eq:error-by-eigen}, we split the summation into two cases:
    \begin{equation}
        \sum_{j} = \sum_{j \colon \lambda_j \in \sbra{2\delta', 1}} + \sum_{j \colon \lambda_j \in [0, 2\delta')}.
    \end{equation}

    \paragraph{Case 1: $\lambda_j \in \sbra{2\delta', 1}$.}
    In this case, 
    \begin{equation} \label{eq:case-geq-2delta-diff}
        \abs*{p_{\cbra{\alpha},\delta',\varepsilon'}\rbra*{\frac{\lambda_j}{2}} - \frac{1}{2} \rbra*{\frac{\lambda_j}{2}}^{\cbra{\alpha}}} \leq \varepsilon'.
    \end{equation}
    Therefore, 
    \begin{align}
        \abs*{ \lambda_j^{\floor{\alpha}-1} \cdot p_{\cbra{\alpha},\delta',\varepsilon'}\rbra*{\frac{\lambda_j}{2}} - \lambda_j^{\floor{\alpha}-1} \cdot \frac{1}{2}\rbra*{\frac{\lambda_j}{2}}^{\cbra{\alpha}} } 
        & = \lambda_j^{\floor{\alpha}-1} \abs*{p_{\cbra{\alpha},\delta',\varepsilon'}\rbra*{\frac{\lambda_j}{2}} - \frac{1}{2}\rbra*{\frac{\lambda_j}{2}}^{\cbra{\alpha}}} \\
        & \leq \lambda_j^{\floor{\alpha}-1} \varepsilon' \label{eq:case-geq-2delta} \\
        & \leq \varepsilon', \label{eq:large-lambda}
    \end{align}
    where \cref{eq:case-geq-2delta} uses \cref{eq:case-geq-2delta-diff}. 

    \paragraph{Case 2: $\lambda_j \in [0, 2\delta')$.}
    In this case, 
    \begin{equation}
        \abs*{ p_{\cbra{\alpha},\delta',\varepsilon'}\rbra*{\frac{\lambda_j}{2}} } \leq 1.
    \end{equation}
    Therefore
    \begin{align}
    \abs*{ \lambda_j^{\floor{\alpha}-1} \cdot p_{\cbra{\alpha},\delta',\varepsilon'}\rbra*{\frac{\lambda_j}{2}} - \lambda_j^{\floor{\alpha}-1} \cdot \frac{1}{2}\rbra*{\frac{\lambda_j}{2}}^{\cbra{\alpha}} } 
    & \leq \abs*{ \lambda_j^{\floor{\alpha}-1} \cdot p_{\cbra{\alpha},\delta',\varepsilon'}\rbra*{\frac{\lambda_j}{2}} } + \abs*{ \lambda_j^{\floor{\alpha}-1} \cdot \frac{1}{2}\rbra*{\frac{\lambda_j}{2}}^{\cbra{\alpha}} } \\
    & \leq \lambda_j^{\floor{\alpha}-1} + \frac{1}{2^{\cbra{\alpha}+1}} \lambda_j^{\alpha-1} \\
    & \leq \rbra{2\delta'}^{\floor{\alpha}-1} + \frac{1}{2} \rbra{2\delta'}^{\alpha-1} \\
    & \leq \frac{3}{2} \rbra{2\delta'}^{\floor{\alpha}-1}. \label{eq:small-lambda}
    \end{align}

    Combining \cref{eq:large-lambda,eq:small-lambda}, 
    \begin{align}
        \eqref{eq:error-by-eigen}
        & \leq \sum_{j \colon \lambda_j \in \sbra{2\delta', 1}} \varepsilon' \lambda_j + \sum_{j \colon \lambda_j \in [0, 2\delta')} \frac{3}{2} \rbra{2\delta'}^{\floor{\alpha}-1} \lambda_j \\
        & \leq \sum_{j} \rbra*{\varepsilon' + \frac{3}{2} \rbra{2\delta'}^{\floor{\alpha}-1}} \lambda_j \\
        & = \varepsilon' + \frac{3}{2} \rbra{2\delta'}^{\floor{\alpha}-1}, \label{eq:total-error}
    \end{align}
    where \cref{eq:total-error} uses the fact that $\sum_j \lambda_j = 1$. 

    Combining \cref{eq:tildeObaby-diff,eq:total-error}, by the triangle inequality, we have
    \begin{equation}
        \abs*{ \tr\rbra*{\widetilde{\mathcal{O}}_{\textup{baby}} \rho^{\otimes T}} - \frac{1}{2^{1+\cbra{\alpha}}} \tr\rbra*{\rho^\alpha} } \leq 2\delta + \varepsilon' + \frac{3}{2} \rbra{2\delta'}^{\floor{\alpha}-1}.
    \end{equation}
\end{proof}

\subsection{Final estimator} \label{sec:final-estimator}

Now we are going to construct the final estimator, based on the baby estimator $\hat{E}_{\textup{baby},T}$ defined in \cref{sec:simple-estimator}. 
For $\mathsf{n} \geq T$, we introduce $\mathsf{m} = \floor{\frac{\mathsf{n}}{T}}$ subsystems $\mathsf{Z}_1, \mathsf{Z}_2, \dots, \mathsf{Z}_{\mathsf{m}}$, each with the same dimension of $\mathcal{O}_{\textup{baby}}$. 
The final estimator $\hat{E}_{\mathsf{n}}$ (for $\mathsf{n} \geq T$) is defined as the average value of the output of the baby estimator on each individual subsystem, i.e., 
\begin{equation} \label{eq:def-final-estimator}
    \hat{E}_{\mathsf{n}}\rbra{\rho} \coloneqq \frac{1}{\mathsf{m}} \sum_{i=1}^{\mathsf{m}} \hat{E}_{\textup{baby},T}^{\rbra{i}}\rbra{\rho},
\end{equation}
where $\hat{E}_{\textup{baby},T}^{\rbra{i}}\rbra{\rho}$ means the output (a random variable) of the baby estimator using the $T$ samples of $\rho$ in the subsystem $\mathsf{Z}_{i}$. 
In particular, if $\mathsf{n} < T$, we define $\hat{E}_{\mathsf{n}}$ to be the trivial estimator that always outputs $0$. 
For readability, we also explicitly define the final estimator for $\mathrm{F}_\alpha\rbra{P}$ as follows:
\begin{equation} \label{def:fOz}
    \hat{E}_{\mathsf{n}}\rbra{P} \coloneqq f_{\mathcal{O}}\rbra{\mathbf{z}} \coloneqq \tr\rbra{\mathcal{O} \ketbra{\mathbf{z}}{\mathbf{z}}},
\end{equation}
where
\begin{equation} \label{def:O}
    \mathcal{O} \coloneqq \frac{1}{\mathsf{m}} \sum_{i=1}^{\mathsf{m}} \rbra{\mathcal{O}_{\textup{baby}}}_{\mathsf{Z}_i}
\end{equation}
is an operator acting on $\rho^{\otimes \mathsf{n}}$. 
The expectation and variance of the estimator are given as follows. 
\begin{proposition} \label{prop:E-Var-O}
    Let $\hat{E}_{\mathsf{n}}$ be the estimator defined by \cref{eq:def-final-estimator}.
    Then,
    \begin{align}
        \E \sbra*{\hat{E}_{\mathsf{n}}\rbra{\rho}} 
        & = \tr\rbra*{\mathcal{O}_{\textup{baby}} \rho^{\otimes T}}, \\
        \Var \sbra*{\hat{E}_{\mathsf{n}}\rbra{\rho} } 
        & \leq \frac{16}{\mathsf{m}}.
    \end{align}
    In addition, the classical estimator $\hat{E}_{\mathsf{n}}$ defined by \cref{def:fOz} also holds that 
    \begin{align}
        \E \sbra*{\hat{E}_{\mathsf{n}}\rbra{P}} 
        & = \tr\rbra*{\mathcal{O}_{\textup{baby}} \rho_P^{\otimes T}}, \\
        \Var \sbra*{\hat{E}_{\mathsf{n}}\rbra{P}} 
        & \leq \frac{16}{\mathsf{m}},
    \end{align}
    where $\rho_P$ is defined by \cref{eq:def-rhoP}. 
\end{proposition}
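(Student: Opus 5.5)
The plan is to treat $\hat{E}_{\mathsf{n}}\rbra{\rho}$ as an average of $\mathsf{m}$ independent, identically distributed random variables. Each takes values in $\cbra{\pm 2^{1+\cbra{\alpha}}}$ and has mean $\tr\rbra{\mathcal{O}_{\textup{baby}} \rho^{\otimes T}}$.

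\textbf{Step 1: independence.} The input $\rho^{\otimes \mathsf{n}}$ contains $\rho^{\otimes T\mathsf{m}}$, which factorizes as a tensor product over the subsystems $\mathsf{Z}_1, \dots, \mathsf{Z}_{\mathsf{m}}$; any leftover $\mathsf{n} - T\mathsf{m}$ copies are discarded. On each $\mathsf{Z}_i$ the baby estimator applies its own unitary $U$ with fresh ancillas and measures $\cbra{\Pi_m}$. Since the global operation is $U^{\otimes \mathsf{m}}$ followed by the product measurement on a product state, the joint outcome probability factorizes:
\[
\Pr\sbra*{\hat{E}^{\rbra{i}}_{\textup{baby},T}\rbra{\rho} = m_i \ \forall i} = \prod_{i=1}^{\mathsf{m}} \tr\rbra*{\Pi_{m_i} U \rbra{\rho^{\otimes T} \otimes \ketbra{\bar 0}{\bar 0}} U^\dag}.
\]
So the $\hat{E}^{\rbra{i}}_{\textup{baby},T}\rbra{\rho}$ are i.i.d. copies of $\hat{E}_{\textup{baby},T}\rbra{\rho}$, with law given by \cref{eq:def-baby-estimator}.

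\textbf{Step 2: expectation.} By linearity and \cref{eq:E-Obaby}, $\E\sbra{\hat{E}_{\mathsf{n}}\rbra{\rho}} = \tr\rbra{\mathcal{O}_{\textup{baby}} \rho^{\otimes T}}$.

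\textbf{Step 3: variance.} Independence gives $\Var\sbra{\hat{E}_{\mathsf{n}}\rbra{\rho}} = \mathsf{m}^{-1}\Var\sbra{\hat{E}_{\textup{baby},T}\rbra{\rho}}$. The baby estimator takes values in $\cbra{\pm 2^{1+\cbra{\alpha}}}$ and $\cbra{\alpha} < 1$, so
\[
\Var\sbra{\hat{E}_{\textup{baby},T}\rbra{\rho}} \leq \E\sbra{\hat{E}_{\textup{baby},T}\rbra{\rho}^2} = 4^{1+\cbra{\alpha}} \leq 16.
\]
This yields $\Var\sbra{\hat{E}_{\mathsf{n}}\rbra{\rho}} \leq 16/\mathsf{m}$.

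\textbf{Step 4: classical case.} Here $\hat{E}_{\mathsf{n}}\rbra{P} = \tr\rbra{\mathcal{O}\ketbra{\mathbf{z}}{\mathbf{z}}}$ is a deterministic function of $\mathbf{z}$. Its expectation is $\tr\rbra{\mathcal{O}\rho_P^{\otimes \mathsf{n}}}$ by the computation in \cref{eq:fOz=Orhon}. Since $\tr\rbra{\rbra{\mathcal{O}_{\textup{baby}}}_{\mathsf{Z}_i}\rho_P^{\otimes \mathsf{n}}} = \tr\rbra{\mathcal{O}_{\textup{baby}}\rho_P^{\otimes T}}$, this equals $\tr\rbra{\mathcal{O}_{\textup{baby}}\rho_P^{\otimes T}}$.

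For the variance, write $f_{\mathcal{O}}\rbra{\mathbf{z}} = \mathsf{m}^{-1}\sum_i g\rbra{\mathbf{z}_{\mathsf{Z}_i}}$, where $g\rbra{\mathbf{y}} = \bra{\mathbf{y}}\mathcal{O}_{\textup{baby}}\ket{\mathbf{y}}$. The blocks $\mathbf{z}_{\mathsf{Z}_i}$ are disjoint and the samples are i.i.d., so the terms $g\rbra{\mathbf{z}_{\mathsf{Z}_i}}$ are independent. Moreover $\abs{g} \leq \Abs{\mathcal{O}_{\textup{baby}}}$, and by \cref{eq:m-range} we have $\Abs{\mathcal{O}_{\textup{baby}}} \leq 2^{1+\cbra{\alpha}} \leq 4$: the matrices $\bra{\bar 0}U^\dag\Pi_m U\ket{\bar 0}$ are positive semidefinite and sum to $I$, so $-2^{1+\cbra{\alpha}} I \sqsubseteq \mathcal{O}_{\textup{baby}} \sqsubseteq 2^{1+\cbra{\alpha}} I$. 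Hence $\Var\sbra{g} \leq 16$, and $\Var\sbra{\hat{E}_{\mathsf{n}}\rbra{P}} \leq 16/\mathsf{m}$.

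\textbf{Main obstacle.} The only delicate point is justifying independence in the quantum setting, i.e., that the measurement on a product state with product unitaries yields a product distribution. The rest is routine. If $\mathsf{n} < T$, the estimator is the constant $0$, the proposition is vacuous in spirit, and we restrict to $\mathsf{n} \geq T$.
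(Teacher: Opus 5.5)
Your proposal is correct and follows the paper's proof for the quantum estimator. Both use linearity of expectation, independence of the $\mathsf{m}$ block outputs to get the $1/\mathsf{m}$ factor, and the bound $\Var \leq \E\sbra{X^2} = 4^{1+\cbra{\alpha}} \leq 16$; you additionally justify the independence, which the paper simply asserts.

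Your Step 4 goes beyond the paper, whose proof does not treat the classical estimator separately, and the addition is worthwhile. The reason is that $f_{\mathcal{O}}\rbra{\mathbf{z}}$ is not the same random variable as $\hat{E}_{\mathsf{n}}\rbra{\rho_P}$, since it lacks the measurement randomness. Your bound $\Abs{\mathcal{O}_{\textup{baby}}} \leq 2^{1+\cbra{\alpha}}$, together with independence of the disjoint sample blocks, settles the classical case cleanly. Alternatively, $f_{\mathcal{O}}\rbra{\mathbf{z}}$ is the conditional expectation of $\hat{E}_{\mathsf{n}}\rbra{\rho_P}$ given $\mathbf{z}$, so the law of total variance also gives it.
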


By \cref{prop:E-Var-O}, we can show the MSE of the estimator $\hat{E}_{\mathsf{n}}$ below. 

\begin{theorem}
    Let $\hat{E}_{\mathsf{n}}$ be the estimator defined by \cref{eq:def-final-estimator,def:fOz}.
    For $\alpha \geq 2$, the MSE is bounded by
    \begin{align}
        \textup{MSE}\rbra{\hat{E}_\mathsf{n}, F, \rho} 
        & \leq \frac{32\alpha}{\mathsf{n}} + \Theta\rbra*{ \rbra*{\frac{\log^4\rbra{\mathsf{n}}}{\mathsf{n}}}^{\frac{2\floor{\alpha}-2}{3\floor{\alpha}-1}} }, \\
        \textup{MSE}\rbra{\hat{E}_\mathsf{n}, F, P} 
        & \leq \frac{32\alpha}{\mathsf{n}} + \Theta\rbra*{ \rbra*{\frac{\log^4\rbra{\mathsf{n}}}{\mathsf{n}}}^{\frac{2\floor{\alpha}-2}{3\floor{\alpha}-1}} }. \label{eq:mse-upper-FP}
    \end{align}
\end{theorem}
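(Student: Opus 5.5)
We start from the bias--variance decomposition \cref{eq:mse-q-intro}. By \cref{prop:E-Var-O}, the variance is at most $16/\mathsf{m}$ with $\mathsf{m}=\floor{\mathsf{n}/T}$. The mean equals $\tr\rbra{\mathcal{O}_{\textup{baby}}\rho^{\otimes T}}$, so by \cref{eq:error-Obaby} the squared bias is at most $\rbra{8\delta+4\varepsilon'+6\rbra{2\delta'}^{\floor{\alpha}-1}}^2$. The classical bound \cref{eq:mse-upper-FP} then follows verbatim, because both expectation and variance coincide with the quantum ones for $\rho_P$. The whole task therefore reduces to choosing $\delta,\delta',\varepsilon'$ as functions of $\mathsf{n}$ and $\alpha$ so that $16/\mathsf{m}$ plus the squared bias gives the claimed rate.

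Writing $T=\floor{\alpha}+R$ with $R = O\rbra{\frac{1}{\delta'^2\delta}\log^2\rbra{1/\varepsilon'}\log^2\rbra{\frac{\log(1/\varepsilon')}{\delta'\delta}}}$ from \cref{def:T}, we treat the two regimes separately.

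\textbf{Main regime.} If $\mathsf{n}\ge 2T$, then $\mathsf{m}\ge \mathsf{n}/(2T)$, so
\[
\frac{16}{\mathsf{m}}\le \frac{32T}{\mathsf{n}}\le \frac{32\alpha}{\mathsf{n}}+\frac{32R}{\mathsf{n}}.
\]

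\textbf{Degenerate regime.} If $\mathsf{n}<2T$, we only need the MSE to be $O(1)$ whenever the right-hand side is $\Omega(1)$. If $\mathsf{n}<T$, the estimator outputs $0$ and the MSE is at most $1$. Otherwise, since $|\mathcal{O}_{\textup{baby}}|$ has outcomes bounded by $4$, the MSE is $O(1)$. When $2\alpha>\mathsf{n}$ this is absorbed by $32\alpha/\mathsf{n}\ge 16$; when instead $R$ dominates, it is absorbed by the $\Theta$ term through the choice of constants, which we must check.

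Next we balance $R/\mathsf{n}$ against $\delta^2$, $\varepsilon'^2$ and $\rbra{2\delta'}^{2\floor{\alpha}-2}$. We set $\varepsilon'=\delta$, so that $\log(1/\varepsilon')=O(\log \mathsf{n})$ and all logs collapse to $\log^4\mathsf{n}$. Then $R/\mathsf{n}\approx \frac{\log^4 \mathsf{n}}{\mathsf{n}\,\delta'^2\delta}$. Put $L=\log^4(\mathsf{n})/\mathsf{n}$ and $k=\floor{\alpha}-1\ge 1$, and take $\delta = x$, $2\delta' = x^{1/k}$ for some $x$, so that $\delta^2$ and $(2\delta')^{2k}$ are both of order $x^2$. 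The variance term becomes $L\,x^{-1-2/k}$. Equating it with $x^2$ gives $x^{3+2/k}=L$, i.e., $x^2 = L^{2k/(3k+2)} = L^{\frac{2\floor{\alpha}-2}{3\floor{\alpha}-1}}$, which matches the exponent in the statement.

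Finally, we must confirm the parameter constraints $\delta',\varepsilon'\le 1/2$ and $\delta<1$, which hold for $\mathsf{n}$ larger than a constant (small $\mathsf{n}$ is absorbed into $\Theta$). We also check that the implicit constant in $R$ is independent of $\{\alpha\}$, which \cref{lemma:poly-approx-pos} guarantees, so that the $\Theta$ term is uniform in $\alpha$.

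The main obstacle is this bookkeeping of logarithmic factors. The inner $\log^2\rbra{\frac{\log(1/\varepsilon')}{\delta'\delta}}$ must be shown to be $O(\log^2\mathsf{n})$ under the chosen polynomial-in-$\mathsf{n}$ parameters. Together with $\log^2(1/\varepsilon')=O(\log^2\mathsf{n})$, this produces exactly the $\log^4(\mathsf{n})$. We must also make sure that the boundary cases with $\mathsf{n}<2T$ are honestly covered by the stated $\Theta$ term rather than silently assumed away.
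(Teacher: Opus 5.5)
Your proposal is correct and follows the paper's proof. Both use the bias--variance decomposition with \cref{prop:E-Var-O} and \cref{eq:error-Obaby}, the bound $16/\mathsf{m}\le 32T/\mathsf{n}$, substitution of \cref{def:T}, and the balancing $\delta=\varepsilon'=x=L^{k/(3k+2)}$. Your explicit handling of $\mathsf{n}<T$ (trivial estimator, MSE at most $1$) covers a case the paper leaves implicit. The split at $2T$ is unnecessary, since $\floor{\mathsf{n}/T}\ge \mathsf{n}/(2T)$ already holds for every $\mathsf{n}\ge T$.

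One detail of yours is better than the paper's. You set $2\delta'=x^{1/k}$, whereas the paper sets $\delta'=L^{1/(3\floor{\alpha}-1)}=x^{1/k}$. The paper's choice gives $(2\delta')^{2\floor{\alpha}-2}=4^{\floor{\alpha}-1}x^{2}$, and it needs $L\le 2^{-(3\floor{\alpha}-1)}$ for $\delta'\le 1/2$. Your normalization needs only $L\lesssim 1$ and leaves no $4^{\floor{\alpha}-1}$ factor. So with your choice both the $\Theta$-constant and the threshold on $\mathsf{n}$ are independent of $\alpha$, which is what the $\alpha\mathsf{n}^{-1}$ claim in \cref{thm:minimax} requires.
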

\begin{proof}
    Applying the results in \cref{prop:E-Var-O} into \cref{eq:mse-q-intro}, we have 
    \begin{align}
        \textup{MSE}\rbra{\hat{E}_\mathsf{n}, F, \rho} 
        & = \frac{16}{\mathsf{m}} + \rbra*{ \tr\rbra*{\mathcal{O}_{\textup{baby}} \rho^{\otimes T}} - \tr\rbra{\rho^{\alpha}} }^2 \\
        & \leq \frac{16}{\mathsf{m}} + \rbra*{8\delta + 4\varepsilon' + 6 \rbra{2\delta'}^{\floor{\alpha}-1}}^2 \label{eq:use-eq:error-Obaby} \\
        & \leq \frac{32T}{\mathsf{n}} + \rbra*{8\delta + 4\varepsilon' + 6 \rbra{2\delta'}^{\floor{\alpha}-1}}^2. \label{eq:use-mTn}
    \end{align}
    where \cref{eq:use-eq:error-Obaby} uses \cref{eq:error-Obaby}, and \cref{eq:use-mTn} uses the fact that $\frac{1}{\mathsf{m}} \leq \frac{2T}{\mathsf{n}}$. 

    By \cref{def:T}, we have
    \begin{equation} \label{eq:mse-bound-by-eps}
    \begin{aligned}
        \eqref{eq:use-mTn} \leq {}
        \frac{32}{\mathsf{n}} \rbra*{\floor{\alpha} + O\rbra*{\frac{1}{\delta'^2 \delta} \log^2\rbra*{\frac{1}{\varepsilon'}} \log^2\rbra*{\frac{\log\rbra{1/\varepsilon'}}{\delta'\delta}}}} + \rbra*{8\delta + 4\varepsilon' + 6 \rbra{2\delta'}^{\floor{\alpha}-1}}^2.
    \end{aligned}
    \end{equation}
    Taking
    \begin{equation}
        \delta' = \rbra*{\frac{\log^4\rbra{\mathsf{n}}}{\mathsf{n}}}^{\frac{1}{3\floor{\alpha}-1}}, \quad \delta = \varepsilon' = \rbra*{\frac{\log^4\rbra{\mathsf{n}}}{\mathsf{n}}}^{\frac{\floor{\alpha}-1}{3\floor{\alpha}-1}}, 
    \end{equation}
    \cref{eq:mse-bound-by-eps} gives 
    \begin{equation}
        \textup{MSE}\rbra{\hat{E}_\mathsf{n}, F, \rho} \leq \frac{32\alpha}{\mathsf{n}} + \Theta\rbra*{ \rbra*{\frac{\log^4\rbra{\mathsf{n}}}{\mathsf{n}}}^{\frac{2\floor{\alpha}-2}{3\floor{\alpha}-1}} }.
    \end{equation}
    To complete the proof, \cref{eq:mse-upper-FP} can be obtained similarly. 
\end{proof}

To conclude this section, we provide a proof of \cref{prop:E-Var-O} below. 
\begin{proof}[Proof of \cref{prop:E-Var-O}]
    By \cref{eq:E-Obaby}, the expectation of $\hat{E}_{\mathsf{n}}\rbra{\rho}$ is given as
    \begin{align}
        \E \sbra*{\hat{E}_{\mathsf{n}}\rbra{\rho}} 
        & = \frac{1}{\mathsf{m}} \sum_{i=1}^{\mathsf{m}}\E\sbra*{\hat{E}_{\textup{baby}, T}^{\rbra{i}}\rbra{\rho}} \\
        & = \E\sbra*{\hat{E}_{\textup{baby}, T}\rbra{\rho}} = \tr\rbra*{\mathcal{O}_{\textup{baby}}\rho^{\otimes T}}.
    \end{align}
    
    For the variance, we first note that each $\hat{E}_{\textup{baby}, T}^{\rbra{i}}\rbra{\rho}$ is independent and identical. 
    Therefore,
    \begin{align}
        \Var\sbra*{\hat{E}_{\mathsf{n}}\rbra{\rho}} 
        & = \frac{1}{\mathsf{m}^2} \sum_{i=1}^{\mathsf{m}} \Var\sbra*{\hat{E}_{\textup{baby}, T}^{\rbra{i}}\rbra{\rho}} \\
        & = \frac{1}{\mathsf{m}} \Var\sbra*{\hat{E}_{\textup{baby}, T}\rbra{\rho}}. \label{eq:Var-eq}
    \end{align}
    The variance of $\hat{E}_{\textup{baby}, T}\rbra{\rho}$ can be bounded by
    \begin{align}
        \Var\sbra*{\hat{E}_{\textup{baby}, T}\rbra{\rho}}
        & \leq \E\sbra*{\rbra*{\hat{E}_{\textup{baby}, T}\rbra{\rho}}^2} \\
        & = \sum_{m} m^2 \Pr\sbra*{ \hat{E}_{\textup{baby}, T}\rbra{\rho} = m } \\
        & \leq 2^{2\rbra{1 + \cbra{\alpha}}} \label{eq:var-leq} \\
        & \leq 16, \label{eq:var-leq-16}
    \end{align}
    where \cref{eq:var-leq} is because $m$ ranges over $\cbra{\pm 2^{1 + \cbra{\alpha}}}$ in \cref{eq:m-range}.
    Therefore, combining \cref{eq:Var-eq,eq:var-leq-16}, we have
    \begin{equation}
        \Var\sbra*{\hat{E}_{\mathsf{n}}\rbra{\rho}} \leq \frac{16}{\mathsf{m}}.
    \end{equation}
\end{proof}

\section{Lower Bounds}

In this section, we show a lower bound on the minimax MSE risk for estimating $\mathrm{F}_\alpha\rbra{P}$, as stated below. 

\begin{theorem} [Minimax MSE lower bound for estimating $\mathrm{F}_\alpha\rbra{P}$]  \label{thm:lower}
    For any real number $\alpha \geq 4$ and any integer $\mathsf{n} \geq 1$, 
    \begin{equation}
    \inf_{\hat{E}_\mathsf{n}} \sup_{P \in \mathcal{M}_S} \textup{MSE}\rbra{\hat{E}_\mathsf{n}, \mathrm{F}_\alpha,P} \geq \frac{1}{4e^3} \cdot \min\cbra*{\frac{\alpha-1}{4\mathsf{n}}, 1}.
    \end{equation}
\end{theorem}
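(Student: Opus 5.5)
The plan is a two-point (Le Cam) argument built on the hard pair from the introduction. For a parameter $\varepsilon \in (0,1]$ to be chosen, take $P^\pm$ on two symbols with
\begin{equation}
p^{\pm}_1 = 1 - \frac{1}{\alpha} \pm \frac{\varepsilon}{\alpha}, \qquad p^{\pm}_2 = \frac{1}{\alpha} \mp \frac{\varepsilon}{\alpha}.
\end{equation}
Both are embedded in $\mathcal{M}_S$ by padding with zeros, which is possible since $S \geq 2$. For any estimator $\hat{E}_\mathsf{n}$, write $\Delta \coloneqq \abs{\mathrm{F}_\alpha(P^+) - \mathrm{F}_\alpha(P^-)}$. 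The standard reduction gives
\begin{equation}
\max_{\pm} \textup{MSE}\rbra{\hat{E}_\mathsf{n}, \mathrm{F}_\alpha, P^\pm} \geq \frac{\Delta^2}{4} \cdot \frac{1 - d_{\mathrm{TV}}\rbra{(P^+)^{\otimes\mathsf{n}}, (P^-)^{\otimes\mathsf{n}}}}{2}.
\end{equation}
To derive it, note that on the event that the estimate lands closer to the wrong value, the squared error is at least $\Delta^2/4$. The optimal test errs with total probability $1 - d_{\mathrm{TV}}$, and averaging over the two hypotheses gives the bound.

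The next step bounds the total variation distance via Hellinger tensorization. We have $1 - d_{\mathrm{TV}} \geq \frac{1}{2}\rbra{1 - d_{\mathrm{H}}^2}^{2\mathsf{n}}$ for product measures, using the Bhattacharyya coefficient $BC^{\mathsf{n}}$ and $1 - TV \geq \frac{1}{2}BC^2$. Alternatively one can cite \cite{BY02}. It then suffices to show $d_{\mathrm{H}}^2(P^+,P^-) \leq \frac{\varepsilon^2}{\alpha - 1}$ up to a constant. Coordinate $2$ contributes $\frac{1}{2}\rbra{\sqrt{(1+\varepsilon)/\alpha} - \sqrt{(1-\varepsilon)/\alpha}}^2 \leq \varepsilon^2/\alpha$, since $\sqrt{1+\varepsilon} - \sqrt{1-\varepsilon} \leq \sqrt{2}\varepsilon$. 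Coordinate $1$ contributes even less, of order $\varepsilon^2/(\alpha(\alpha-1))$. Choosing
\begin{equation}
\varepsilon^2 = \min\cbra*{\frac{\alpha-1}{4\mathsf{n}}, 1}
\end{equation}
makes $\mathsf{n}\, d_{\mathrm{H}}^2$ an absolute constant, so $(1 - d_{\mathrm{H}}^2)^{2\mathsf{n}} \gtrsim e^{-O(1)}$. With the constants tracked, this produces the $e^{-3}$-type factor in the statement.

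The main obstacle is the non-asymptotic bound $\Delta \geq c\,\varepsilon$ (\cref{lemma:F-alpha-diff-lb}), with $c$ an explicit constant, valid for every real $\alpha \geq 4$ and $\varepsilon \leq 1$. I would write $\mathrm{F}_\alpha(P^\pm) = g(\pm\varepsilon)$ with
\begin{equation}
g(t) \coloneqq \rbra*{1 - \frac{1-t}{\alpha}}^\alpha + \rbra*{\frac{1-t}{\alpha}}^\alpha,
\end{equation}
and lower bound $g(\varepsilon) - g(-\varepsilon) = \int_{-\varepsilon}^{\varepsilon} g'(t)\,dt$. The first term of $g'$ is $\rbra{1 - \frac{1-t}{\alpha}}^{\alpha-1} \geq \rbra{1 - \frac{2}{\alpha}}^{\alpha-1}$, which for $\alpha \geq 4$ is at least roughly $e^{-2}\cdot$const. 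This uses the inequality $\ln(1-x) \geq -x - x^2$ for $x \leq 1/2$, and this is where a real-$\alpha$ inequality replaces the limit argument of \cite{CWYZ25}. The second term is $-\rbra{\frac{1-t}{\alpha}}^{\alpha-1} \geq -(2/\alpha)^{\alpha-1}$, which is negligible for $\alpha \geq 4$. Hence $\Delta \geq 2\varepsilon \cdot c_0$ with $c_0$ of order $e^{-2}$ minus a small correction. Plugging $\Delta^2/4 \gtrsim e^{-3}\varepsilon^2$ and the Hellinger bound into the Le Cam inequality gives
\begin{equation}
\frac{1}{4e^3}\min\cbra*{\frac{\alpha-1}{4\mathsf{n}}, 1}.
\end{equation}
I expect that hitting exactly this constant will require some care in how the factors $e^{-2}$ and the Hellinger factor are split, but the rate is clear.
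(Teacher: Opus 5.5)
\textbf{There is a genuine gap: your lower bound on $\Delta$ is too weak to reach the explicit constant $\frac{1}{4e^3}$, and this is not a matter of bookkeeping.} The rest of your framework is correct. The two-point inequality $\max_\pm \textup{MSE} \ge \frac{\Delta^2}{8}(1-d_{\mathrm{TV}})$ holds, as do the tensorization $1-d_{\mathrm{TV}}\ge\frac12(1-d_{\mathrm{H}}^2)^{2\mathsf n}$ and $d_{\mathrm H}^2(P^+,P^-)\le\varepsilon^2/(\alpha-1)$. With $\varepsilon^2=\min\{\frac{\alpha-1}{4\mathsf n},1\}$ these give $1-d_{\mathrm{TV}}\ge\frac12(1-\frac{1}{4\mathsf n})^{2\mathsf n}\ge\frac{9}{32}$.

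The problem is your uniform bound $c_0=(1-\frac2\alpha)^{\alpha-1}-(\frac2\alpha)^{\alpha-1}$ on $g'$. It is strictly below $e^{-2}$ for every $\alpha\ge4$. So from $\Delta\ge 2c_0\varepsilon$ you only get $\Delta^2/4\ge c_0^2\varepsilon^2$ with $c_0^2<e^{-4}$, not the $e^{-3}\varepsilon^2$ you wrote. Since $(1-d_{\mathrm{TV}})\le 1$, the best conclusion your Le Cam bound can then give is $\frac{c_0^2}{2}\varepsilon^2<\frac{1}{2e^4}\varepsilon^2<\frac{1}{4e^3}\varepsilon^2$, even if $d_{\mathrm{TV}}$ were zero.

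The second term is also not negligible. At $\alpha=4$, $(\frac2\alpha)^{\alpha-1}=(1-\frac2\alpha)^{\alpha-1}=\frac18$, so $c_0=0$. The endpoint $t=-1$ is the uniform distribution on two symbols, which is a critical point of $g$. You are saved only because $\mathsf n\ge1$ forces $\varepsilon<1$ when $\alpha<5$, and then the constant is far smaller still.

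The missing idea is to compare with the derivative at the midpoint rather than at the worst point. Write $\Delta=\int_0^{\varepsilon}(g'(t)+g'(-t))\,dt$. Then $g'(t)+g'(-t)=G(\frac t\alpha)+G(-\frac t\alpha)$, where $G(x)=(p+x)^{\alpha-1}-(q+x)^{\alpha-1}$, $p=1-\frac1\alpha$, and $q=\frac1\alpha$. For $\alpha\ge3$ and $\abs{x}\le q$,
\[
G''(x)=(\alpha-1)(\alpha-2)[(p+x)^{\alpha-3}-(q+x)^{\alpha-3}]\ge0 .
\]
So $G$ is convex there, which gives $g'(t)+g'(-t)\ge2g'(0)$ for $\abs{t}\le1$, and hence
\[
\Delta\ \ge\ 2\varepsilon\,g'(0)\ =\ 2\varepsilon\Bigl(\bigl(1-\tfrac1\alpha\bigr)^{\alpha-1}-\alpha^{1-\alpha}\Bigr)\ \ge\ 2\varepsilon\Bigl(\bigl(1-\tfrac1\alpha\bigr)^{\alpha-1}-\alpha^{-3}\Bigr)\ \ge\ \frac{2\varepsilon}{e}.
\]
The last inequality, for real $\alpha\ge4$, still needs its own monotonicity argument. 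Together these steps are precisely \cref{lemma:neq-pqx,lemma:lb-alpha-e,lemma:F-alpha-diff-lb}. Plugging $\Delta\ge 2\varepsilon/e$ into your inequality gives
\[
\textup{MSE}\ge\frac{\varepsilon^2}{2e^2}(1-d_{\mathrm{TV}})\ge\frac{9}{64e^2}\varepsilon^2\ge\frac{1}{4e^3}\varepsilon^2,
\]
which is the statement.

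With this repair, your route differs from the paper's only in the reduction step. The paper proves a sample lower bound from Bar-Yossef's Hellinger testing bound (\cref{thm:hellinger,thm:sample-lb}) and converts it to an MSE bound by Markov's inequality with $\delta=\frac{1}{4e}$. Your direct two-point argument is more self-contained and gives a slightly better constant. The decisive input in both is $\Delta\ge2\varepsilon/e$.
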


This is achieved by a lower bound on the sample complexity of estimating $\mathrm{F}_\alpha\rbra{P}$ given in \cref{sec:sample-lower}. 
Then, the proof of \cref{thm:lower} is given in \cref{sec:mse-lower}. 

\subsection{Sample complexity lower bounds} \label{sec:sample-lower}

We first prove a lower bound on the sample complexity of estimating $\mathrm{F}_\alpha\rbra{P}$. 

\begin{theorem}[Sample lower bound for estimating $\mathrm{F}_\alpha\rbra{P}$] \label{thm:sample-lb}
    For $\alpha \geq 4$, $\varepsilon \in \rbra{0, \frac{1}{e}}$, and $\delta \in (0, \frac{1}{4}]$, any algorithm that estimates $\mathrm{F}_\alpha\rbra{P}$ for an unknown discrete distribution $P$ to within additive error $\varepsilon$ with success probability $\geq 1 - \delta$ requires more than 
    \begin{equation}
        \frac{\alpha-1}{4e^2\varepsilon^2} \ln\rbra*{\frac{1}{4\delta}}
    \end{equation}
    samples drawn from the unknown distribution. 
\end{theorem}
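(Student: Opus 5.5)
The plan is a two-point reduction to binary hypothesis testing, using the instance sketched in the introduction. Take $S=2$ (embedding into larger alphabets by zero padding) and, for a parameter $\eta>0$ to be fixed in terms of $\varepsilon$, set
\begin{equation}
    p^{\pm}_1 = 1 - \frac{1}{\alpha} \pm \frac{\eta}{\alpha}, \qquad p^{\pm}_2 = \frac{1}{\alpha} \mp \frac{\eta}{\alpha}.
\end{equation}
Suppose an algorithm uses $\mathsf{n}$ samples and estimates $\mathrm{F}_\alpha\rbra{P}$ to within $\varepsilon$ with probability $\geq 1-\delta$. If $\mathrm{F}_\alpha\rbra{P^+} - \mathrm{F}_\alpha\rbra{P^-} > 2\varepsilon$, then thresholding the output at the midpoint distinguishes $P^{+\otimes \mathsf{n}}$ from $P^{-\otimes \mathsf{n}}$ with error $\leq \delta$ under both hypotheses. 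The Bretagnolle--Huber style bound (the hypothesis testing lower bound of \cite{BY02}) gives that the sum of the two error probabilities is at least $\frac{1}{2}\exp\rbra{-\mathsf{n}\, D}$, where $D$ is either $\mathrm{KL}\rbra{P^+\Vert P^-}$ or the quantity $-\ln\rbra{1-d_{\textup{H}}^2}$ coming from the Hellinger affinity, which tensorizes exactly. Hence $2\delta \geq \frac{1}{2}e^{-\mathsf{n}D}$, i.e.\ $\mathsf{n} \geq \frac{1}{D}\ln\rbra{\frac{1}{4\delta}}$. The goal then reduces to showing $D \leq \frac{4e^2\varepsilon^2}{\alpha-1}$ for a suitable $\eta$.

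For the divergence I would bound $\mathrm{KL}\rbra{P^+\Vert P^-}\leq \chi^2\rbra{P^+\Vert P^-} = \sum_i \frac{\rbra{p_i^+-p_i^-}^2}{p_i^-}$. With $\rbra{p_i^+-p_i^-}^2 = 4\eta^2/\alpha^2$ and $p_2^- = \rbra{1+\eta}/\alpha \geq 1/\alpha$, $p_1^- \geq 1-2/\alpha \geq 1/2$ for $\alpha\geq4$, this gives $\chi^2 \leq \frac{4\eta^2}{\alpha}\rbra{1+\frac{2}{\alpha}} \lesssim \eta^2/\alpha$. The Hellinger route gives the same order.

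The main obstacle is the non-asymptotic lower bound on the gap $\mathrm{F}_\alpha\rbra{P^+}-\mathrm{F}_\alpha\rbra{P^-}$ as a function of $\eta$, valid for all real $\alpha\geq4$ (this is \cref{lemma:F-alpha-diff-lb}). The dominant contribution comes from the first coordinate: by the mean value theorem,
\begin{equation}
    \rbra{p_1^+}^\alpha-\rbra{p_1^-}^\alpha = \frac{2\eta}{\alpha}\cdot\alpha\,\xi^{\alpha-1} = 2\eta\,\xi^{\alpha-1}
\end{equation}
for some $\xi\geq 1-\frac{1+\eta}{\alpha}$. Then $\xi^{\alpha-1}\geq\rbra{1-\frac{1+\eta}{\alpha}}^{\alpha-1}\geq e^{-c}$ with $c$ close to $1+\eta$, using $\ln\rbra{1-x}\geq -x-x^2$ for small $x$, which is where $\alpha\geq4$ and $\eta<1$ enter. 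The second coordinate contributes a negative amount of size at most $\rbra{\frac{1+\eta}{\alpha}}^\alpha$, which is negligible but must be absorbed explicitly. The target is a clean bound of the form $\mathrm{F}_\alpha\rbra{P^+}-\mathrm{F}_\alpha\rbra{P^-}\geq 2\eta e^{-2}$, perhaps with a slightly different constant.

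Finally, choose $\eta = e^{2}\varepsilon$ (plus a hair, to get strict inequality), which is $<e<\ldots$; to keep $\eta<1$ one may need $\varepsilon<1/e^2$, or else adjust constants. The hypothesis $\varepsilon<1/e$ suggests the author's inequalities allow $\eta$ up to about $e\varepsilon<1$, so the exact exponent of $e$ will have to be tuned there. Combining the pieces, $D\leq 4\eta^2/\rbra{\alpha-1}\approx 4e^2\varepsilon^2/\rbra{\alpha-1}$, which yields the claimed bound $\mathsf{n}>\frac{\alpha-1}{4e^2\varepsilon^2}\ln\rbra{\frac{1}{4\delta}}$. The strict inequality comes from the strict gap condition.
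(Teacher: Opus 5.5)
Your reduction matches the paper's. With $\eta=\alpha\gamma$ it is the same two-point instance, and Bretagnolle--Huber with $\mathrm{KL}\le\chi^2$ is an acceptable substitute for the Hellinger bound of \cite{BY02}. The genuine gap is the lower bound on $\mathrm{F}_\alpha(P^+)-\mathrm{F}_\alpha(P^-)$: as written, your constants do not give the theorem.

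With your bound $D\le\chi^2(P^+\Vert P^-)\approx 4\eta^2/(\alpha-1)$, reaching $D\le 4e^2\varepsilon^2/(\alpha-1)$ essentially forces $\eta\le e\varepsilon$. So you need $\mathrm{F}_\alpha(P^+)-\mathrm{F}_\alpha(P^-)\ge\frac{2}{e}\eta$ for every $\eta\in(0,1)$ and every real $\alpha\ge4$. This is \cref{lemma:F-alpha-diff-lb}, and its constant is sharp as $\alpha\to\infty$, $\eta\to0$.

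With your target $2\eta e^{-2}$ you must take $\eta>e^2\varepsilon$. That yields only $\mathsf{n}\ge\frac{\alpha-1}{4e^4\varepsilon^2}\ln\frac{1}{4\delta}$, and only for $\varepsilon<e^{-2}$. Your closing step ``$4\eta^2/(\alpha-1)\approx 4e^2\varepsilon^2/(\alpha-1)$'' is inconsistent with $\eta=e^2\varepsilon$.

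Your mean-value route does not even reach $2\eta e^{-2}$. Evaluating $\xi^{\alpha-1}$ at the worst endpoint $1-\frac{1+\eta}{\alpha}$ loses a factor of roughly $e^{-\eta}$; at $\alpha=4$, $\eta\to1$ it gives $2\eta/8<2\eta e^{-2}$. Bounding the second coordinate's loss by $(\frac{1+\eta}{\alpha})^\alpha$ is also unusable for small $\varepsilon$. That quantity does not vanish as $\eta\to0$, so it cannot be absorbed into a bound linear in $\eta$.

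The missing idea is convexity rather than the mean value theorem. Take $p=1-\frac1\alpha$, $q=\frac1\alpha$, and $f(x)=(p+x)^\alpha+(q-x)^\alpha-(p-x)^\alpha-(q+x)^\alpha-2\alpha(p^{\alpha-1}-q^{\alpha-1})x$. Then $f'(x)=\alpha\bigl(g(x)+g(-x)-2g(0)\bigr)$ with $g(t)=(p+t)^{\alpha-1}-(q+t)^{\alpha-1}$. Since $g''(t)=(\alpha-1)(\alpha-2)\bigl((p+t)^{\alpha-3}-(q+t)^{\alpha-3}\bigr)\ge0$ on $|t|\le q$, $f$ is nondecreasing on $[0,q]$ with $f(0)=0$. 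Hence the gap is at least $2\eta\bigl((1-\frac1\alpha)^{\alpha-1}-\alpha^{-(\alpha-1)}\bigr)$.

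This centres the estimate at $p$, so there is no $e^{-\eta}$ loss, and it treats the small coordinate jointly. Handling that coordinate separately is too lossy. For $\eta$ near $1$ at $\alpha=4$, the centred first-coordinate bound minus the mean-value bound $2\eta(2/\alpha)^{\alpha-1}$ gives only $2\eta\cdot\frac{19}{64}<\frac{2}{e}\eta$. The calculus fact $(1-\frac1\alpha)^{\alpha-1}-\alpha^{-3}\ge\frac1e$ for $\alpha\ge4$ then gives $\frac{2}{e}\eta$. The choice $\eta=e\varepsilon<1$ is exactly where $\varepsilon<\frac1e$ enters.

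On the divergence side:
\begin{enumerate}
  \item Your displayed bound $\frac{4\eta^2}{\alpha}(1+\frac2\alpha)$ exceeds $\frac{4\eta^2}{\alpha-1}$, so it does not justify the $D\le 4\eta^2/(\alpha-1)$ you use later.
  \item Computing exactly gives $\chi^2(P^+\Vert P^-)=\frac{4\eta^2}{(1+\eta)(\alpha-1-\eta)}<\frac{4\eta^2}{\alpha-1}$ for $0<\eta<\alpha-2$. This also supplies the strict inequality; enlarging $\eta$ by a hair pushes the wrong way.
  \item The Hellinger-affinity version of the testing bound has per-sample exponent $-2\ln(1-d_{\mathrm{H}}^2)$, not $-\ln(1-d_{\mathrm{H}}^2)$.
\end{enumerate}
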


To this end, we need the following sample lower bound for hypothesis testing. 

\begin{theorem} [Sample lower bound for hypothesis testing, {\cite[Theorem 4.7]{BY02}}] \label{thm:hellinger}
    Suppose that $P$ and $Q$ are two discrete distributions such that $d_{\mathrm{H}}^2\rbra{P, Q} \leq \frac{1}{2}$. 
    Then, for $\delta \in (0, \frac{1}{4}]$, determining whether an unknown discrete distribution is $P$ or $Q$ (promised that it is in either case) with success probability $\geq 1 - \delta$ requires more than
    \begin{equation}
        \frac{1}{4d_{\mathrm{H}}^2\rbra{P, Q}} \ln\rbra*{\frac{1}{4\delta}}
    \end{equation}
    samples drawn from the unknown distribution. 
\end{theorem}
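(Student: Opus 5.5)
The plan is the classical Le Cam argument, with the Bhattacharyya coefficient as the bridge between the Hellinger distance and the testing error; tensorization handles the $n$ samples. Write $h \coloneqq d_{\mathrm{H}}^2\rbra{P, Q} \in (0, \frac{1}{2}]$. Recall that the Bhattacharyya coefficient $\mathrm{BC}\rbra{P,Q} \coloneqq \sum_i \sqrt{p_i q_i}$ satisfies $\mathrm{BC}\rbra{P,Q} = 1 - h$. It is multiplicative under products, so with $n$ i.i.d.\ samples
\begin{equation}
    \mathrm{BC}\rbra{P^{\otimes n},Q^{\otimes n}} = \rbra{1-h}^n.
\end{equation}

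Next I would lower bound the error of an arbitrary (possibly randomized) test $\phi$ that uses $n$ samples. If $\phi$ succeeds with probability $\geq 1-\delta$ under both hypotheses, then the two error probabilities sum to at most $2\delta$. For any test this sum is at least $\sum_x \min\cbra{P^{\otimes n}(x), Q^{\otimes n}(x)}$, since randomization only averages over deterministic rejection regions. I would then invoke Le Cam's inequality
\begin{equation}
    \sum_x \min\cbra{a_x,b_x} \geq \frac{1}{2}\rbra*{\sum_x \sqrt{a_x b_x}}^2,
\end{equation}
valid for probability vectors $a, b$. It follows from Cauchy--Schwarz: write $\sqrt{a_xb_x} = \sqrt{\min\cbra{a_x,b_x}}\sqrt{\max\cbra{a_x,b_x}}$ and use $\sum_x \max\cbra{a_x,b_x} \leq 2$. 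Combining these gives
\begin{equation}
    2\delta \geq \frac{1}{2}\rbra{1-h}^{2n}, \qquad\text{i.e.,}\qquad 2n\ln\rbra*{\frac{1}{1-h}} \geq \ln\rbra*{\frac{1}{4\delta}}.
\end{equation}

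Finally, I would convert $\ln\rbra{1/(1-h)}$ to $h$. The claim is $-\ln\rbra{1-h} < 2h$ for $h \in (0, \frac{1}{2}]$. To see it, note that $g(h) = 2h + \ln\rbra{1-h}$ satisfies $g(0)=0$, $g'(h) = 2 - \frac{1}{1-h} \geq 0$ on $[0,\frac12]$ with equality only at $h = \frac{1}{2}$, so $g$ is strictly increasing there. This yields $4nh > \ln\rbra{1/(4\delta)}$, i.e., $n > \frac{1}{4h}\ln\rbra{\frac{1}{4\delta}}$. The hypothesis $\delta \leq \frac{1}{4}$ makes the right-hand side nonnegative, and $h > 0$ is needed to divide; if $h=0$ then $P = Q$ and no test succeeds with $\delta < \frac12$, so the bound is vacuous or trivially true.

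The only step needing care is the Le Cam/Cauchy--Schwarz inequality together with its application to randomized tests. The remaining work, namely tensorization and the elementary logarithm estimate that delivers the strict inequality ``more than'', is routine.
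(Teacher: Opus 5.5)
Your proof is correct and is the standard argument behind the cited result; the paper gives no proof of its own, only the reference to Bar-Yossef's thesis. Tensorizing the Bhattacharyya coefficient, applying Le Cam's bound $\sum_x \min\{a_x,b_x\} \geq \frac{1}{2}\bigl(\sum_x\sqrt{a_xb_x}\bigr)^2$, and using $1-h \geq e^{-2h}$ on $(0,\frac{1}{2}]$ recovers exactly the constant $\frac{1}{4d_{\mathrm{H}}^2(P,Q)}\ln\frac{1}{4\delta}$. One tiny nit: for $n=0$ and $\delta=\frac{1}{4}$ your logarithm estimate does not give the strict inequality. That case is excluded directly, though, since with no samples the two success probabilities sum to $1$; equivalently, $\sum_x\min\{a_x,b_x\}=1$ there.
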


Now we give a proof of \cref{thm:sample-lb} below.

\begin{proof}[Proof of \cref{thm:sample-lb}]
    We use the hard instance in \cite[Theorem 4.3]{CWYZ25}. 
    Consider the two discrete distributions $P^{\pm}$ with alphabet size $S = 2$ such that
    \begin{equation}
    p^{\pm}_1 = 1 - \frac{1}{\alpha} \pm \gamma, \qquad p^{\pm}_2 = \frac{1}{\alpha} \mp \gamma,
    \end{equation}
    where $\gamma \in \rbra{0, \frac{1}{\alpha}}$ is to be determined. 
    On the one hand, the Hellinger distance between $P^{\pm}$ can be bounded by (using the same argument in the proof of \cite[Theorem 5.3]{Wan25})
    \begin{equation}
    d_{\mathrm{H}}\rbra{P^+, P^-} \leq \frac{\alpha\gamma}{\sqrt{\alpha-1}}.
    \end{equation}
    On the other hand, the difference between $\mathrm{F}_\alpha\rbra{P^\pm}$ is
    \begin{equation}
    \mathrm{F}_\alpha\rbra{P^+} - \mathrm{F}_\alpha\rbra{P^-} = \rbra*{1-\frac{1}{\alpha}+\gamma}^\alpha - \rbra*{1-\frac{1}{\alpha}-\gamma}^\alpha + \rbra*{\frac{1}{\alpha}-\gamma}^\alpha - \rbra*{\frac{1}{\alpha}+\gamma}^\alpha.
    \end{equation}
    It can be shown (later in \cref{lemma:F-alpha-diff-lb}) that for any real number $\alpha \geq 4$, we have
    \begin{equation}
    \mathrm{F}_\alpha\rbra{P^+} - \mathrm{F}_\alpha\rbra{P^-} \geq \frac{2}{e}\alpha\gamma.
    \end{equation}
    Taking $\gamma = e\varepsilon/\alpha$, any estimator for $\mathrm{F}_\alpha\rbra{P}$ to additive error $\varepsilon$ with success probability $\geq 1-\delta$ can be used to determine whether an unknown distribution is $P^+$ or $P^-$ with the same success probability. 
    By \cref{thm:hellinger}, this requires more than 
    \begin{equation}
    \frac{1}{4d_{\mathrm{H}}^2\rbra{P^+, P^-}} \ln\rbra*{\frac{1}{4\delta}} \geq \frac{\alpha-1}{4e^2\varepsilon^2} \ln\rbra*{\frac{1}{4\delta}}
    \end{equation}
    samples drawn from the unknown distribution $P$. 
\end{proof}

To conclude this section, we will give a proof of the following lemma. 

\begin{lemma} \label{lemma:F-alpha-diff-lb}
    For $\alpha \geq 4$ and $\gamma \in \rbra{0, \frac{1}{\alpha}}$, 
    \begin{equation}
    \rbra*{1-\frac{1}{\alpha}+\gamma}^\alpha - \rbra*{1-\frac{1}{\alpha}-\gamma}^\alpha + \rbra*{\frac{1}{\alpha}-\gamma}^\alpha - \rbra*{\frac{1}{\alpha}+\gamma}^\alpha \geq \frac{2}{e}\alpha\gamma.
    \end{equation}
\end{lemma}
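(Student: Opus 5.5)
The plan is to write the left-hand side as $g(\gamma) \coloneqq h(a+\gamma) - h(a-\gamma) - \rbra{h(b+\gamma) - h(b-\gamma)}$, where $h(x) = x^\alpha$, $a = 1-\frac{1}{\alpha}$ and $b = \frac{1}{\alpha}$. Each difference can be handled with the mean value theorem, or equivalently by writing it as an integral of $h'$. Concretely,
\begin{equation}
    h(a+\gamma) - h(a-\gamma) = \alpha \int_{-\gamma}^{\gamma} (a+t)^{\alpha-1}\,\mathrm{d}t,
\end{equation}
and similarly for $b$. Since $\gamma < \frac{1}{\alpha}$, the integration variable satisfies $a+t \geq 1-\frac{2}{\alpha}$ and $b+t \leq \frac{2}{\alpha}$.

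For the main term, I will lower bound $(a+t)^{\alpha-1} \geq \rbra{1-\frac{2}{\alpha}}^{\alpha-1}$. This gives
\begin{equation}
    h(a+\gamma)-h(a-\gamma) \geq 2\alpha\gamma\rbra{1-\tfrac{2}{\alpha}}^{\alpha-1}.
\end{equation}
For the subtracted term, I will upper bound it by $2\alpha\gamma\rbra{\frac{2}{\alpha}}^{\alpha-1}$. The target then reduces to the scalar inequality
\begin{equation}
    \rbra*{1-\tfrac{2}{\alpha}}^{\alpha-1} - \rbra*{\tfrac{2}{\alpha}}^{\alpha-1} \geq \tfrac{1}{e}, \qquad \alpha \geq 4.
\end{equation}

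I expect this scalar inequality to be the main obstacle, and it looks problematic. The expression $\rbra{1-\frac{2}{\alpha}}^{\alpha-1}$ tends to $e^{-2}$, not $e^{-1}$, so this crude bound loses too much. I will therefore refine the bound on the main term. Using convexity of $x^{\alpha-1}$ (note $\alpha-1 \geq 3$), Jensen's inequality over $t$ uniform on $[-\gamma,\gamma]$ gives
\begin{equation}
    \frac{1}{2\gamma}\int_{-\gamma}^{\gamma}(a+t)^{\alpha-1}\,\mathrm{d}t \geq a^{\alpha-1} = \rbra*{1-\tfrac{1}{\alpha}}^{\alpha-1} \geq \tfrac{1}{e}.
\end{equation}
The last step is the standard fact that $\rbra{1-\frac{1}{\alpha}}^{\alpha-1}$ decreases to $e^{-1}$, which follows from $(\alpha-1)\ln\rbra{1-\frac{1}{\alpha}} \geq -1$, itself a consequence of $\ln(1-x) \geq -\frac{x}{1-x}$. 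This yields $h(a+\gamma)-h(a-\gamma) \geq \frac{2}{e}\alpha\gamma$ with no slack left over.

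That lack of slack is where the difficulty concentrates: the $b$-term must be absorbed. It is positive, of order $\alpha\gamma(2/\alpha)^{\alpha-1}$. To beat it, I need a strict margin in the main term. I will obtain one from the quadratic term in the Jensen gap, or more simply from the fact that $\rbra{1-\frac{1}{\alpha}}^{\alpha-1} - e^{-1} \geq \frac{c}{\alpha}$ for some explicit constant $c$. A second-order expansion, $(\alpha-1)\ln(1-1/\alpha) = -1 + \frac{1}{2\alpha} + O(\alpha^{-2})$, suggests $c \approx \frac{1}{2e}$. This margin dominates $\rbra{\frac{2}{\alpha}}^{\alpha-1} \leq \frac{8}{\alpha^3}$ for $\alpha \geq 4$.

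To make this rigorous for every real $\alpha \geq 4$, I will use the inequality $\ln(1-x) \geq -x - \frac{x^2}{2} - \frac{x^3}{3(1-x)}$. From it I will check directly that $\rbra{1-\frac{1}{\alpha}}^{\alpha-1} - \rbra{\frac{2}{\alpha}}^{\alpha-1} \geq e^{-1}$, handling the small range $\alpha \in [4, \alpha_0]$ numerically if necessary.
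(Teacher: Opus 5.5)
There is a genuine gap: the scalar inequality your argument reduces to, $\bigl(1-\frac{1}{\alpha}\bigr)^{\alpha-1} - \bigl(\frac{2}{\alpha}\bigr)^{\alpha-1} \geq e^{-1}$, is false near the left endpoint. At $\alpha = 4$ it reads $\frac{27}{64} - \frac{1}{8} = \frac{19}{64} \approx 0.297$, well below $e^{-1} \approx 0.368$. It stays false up to about $\alpha \approx 4.66$, so no numerical check on $[4,\alpha_0]$ can rescue it.

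Your asymptotic heuristic also fails there. A margin of order $\frac{1}{2e\alpha}$ beats $\frac{8}{\alpha^3}$ only once $\alpha^2 \geq 16e$, i.e.\ roughly $\alpha \geq 6.6$. At $\alpha=4$ the true margin $\frac{27}{64}-e^{-1}\approx 0.054$ is less than half of $(\frac{2}{4})^3 = 0.125$.

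The loss comes from bounding the $b$-term by $2\alpha\gamma(\frac{2}{\alpha})^{\alpha-1}$, i.e.\ charging the whole interval $[-\gamma,\gamma]$ at the maximal derivative $\alpha(b+\gamma)^{\alpha-1}$. Since $(b+t)^{\alpha-1}$ is concentrated near $t=\gamma$, this overestimates the $b$-term by a factor of order $\alpha$ when $\gamma$ is close to $b$. At $\alpha = 4$ the exact $b$-term is $\frac{\gamma}{8} + 2\gamma^3$, which equals $\frac{\gamma}{4}$ at $\gamma = \frac14$, whereas your bound is $\gamma$. At the same time you discard the Jensen gain in the $a$-term (here $8a\gamma^3 = 6\gamma^3$), which is exactly what should pay for the convex growth of the $b$-term.

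The missing idea, which is how the paper proceeds, is to couple the two terms rather than bound them separately. Write the left-hand side as $\alpha\int_{-\gamma}^{\gamma} g(t)\,\mathrm{d}t$ with $g(t) = (a+t)^{\alpha-1} - (b+t)^{\alpha-1}$. For $|t| \leq b$ and $\alpha \geq 3$ one has $g''(t) = (\alpha-1)(\alpha-2)\bigl((a+t)^{\alpha-3} - (b+t)^{\alpha-3}\bigr) \geq 0$ because $a \geq b$. So $g$ is convex, $g(t)+g(-t) \geq 2g(0)$, and the left-hand side is at least $2\alpha\gamma\bigl(a^{\alpha-1} - b^{\alpha-1}\bigr)$. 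The $b$-contribution is now only $(\frac{1}{\alpha})^{\alpha-1} \leq \alpha^{-3}$. The inequality $\bigl(1-\frac1\alpha\bigr)^{\alpha-1} - \alpha^{-3} \geq e^{-1}$ does hold for all $\alpha \geq 4$: it is about $0.406$ at $\alpha = 4$, and the paper shows this function is decreasing with limit $e^{-1}$.

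If you prefer to keep your decoupled structure, a chord bound also suffices. The function $\phi(\gamma) = (b+\gamma)^\alpha - (b-\gamma)^\alpha$ is convex on $[0,b]$ with $\phi(0) = 0$. Hence $\phi(\gamma) \leq \frac{\gamma}{b}(2b)^{\alpha} = 2\gamma(\frac{2}{\alpha})^{\alpha-1}$, a factor $\alpha$ better than your bound. The resulting requirement is $\bigl(1-\frac1\alpha\bigr)^{\alpha-1} - \frac{1}{\alpha}(\frac{2}{\alpha})^{\alpha-1} \geq e^{-1}$. It follows from $(\alpha-1)\ln(1-\frac1\alpha) \geq -1 + \frac{1}{2\alpha}$, which gives a margin of at least $\frac{1}{2e\alpha}$, together with $(\frac{2}{\alpha})^{\alpha-1} \leq \frac18 < \frac{1}{2e}$.
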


To this end, we list \cref{lemma:neq-pqx,lemma:neq-lnx,lemma:lb-alpha-e} in the following together with their proofs. 

\begin{lemma} \label{lemma:neq-pqx}
    For $\beta \geq 3$ and $p \geq q \geq x \geq 0$, we have
    \begin{equation}
    \rbra{p+x}^\beta + \rbra{q-x}^\beta - \rbra{p-x}^\beta - \rbra{q+x}^\beta \geq 2\beta\rbra{p^{\beta-1}-q^{\beta-1}}x.
    \end{equation}
\end{lemma}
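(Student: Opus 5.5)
We treat the left-hand side as a function of $x$ and compare derivatives. Let
\begin{equation*}
g\rbra{x} \coloneqq \rbra{p+x}^\beta + \rbra{q-x}^\beta - \rbra{p-x}^\beta - \rbra{q+x}^\beta - 2\beta\rbra{p^{\beta-1}-q^{\beta-1}}x
\end{equation*}
on $x \in \sbra{0, q}$, so that $g\rbra{0} = 0$; it then suffices to show $g'\rbra{x} \geq 0$ on this interval. All bases $p\pm x$ and $q\pm x$ are nonnegative because $x \leq q \leq p$, so the real powers are well defined and differentiable; at the endpoint $x = q$ with $q - x = 0$ this is still fine since $\beta \geq 3 > 1$.

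Differentiating gives
\begin{equation*}
g'\rbra{x} = \beta\sbra*{ \rbra{p+x}^{\beta-1} + \rbra{p-x}^{\beta-1} - \rbra{q+x}^{\beta-1} - \rbra{q-x}^{\beta-1} } - 2\beta\rbra{p^{\beta-1}-q^{\beta-1}}.
\end{equation*}
Define $h\rbra{t} \coloneqq \rbra{t+x}^{\beta-1} + \rbra{t-x}^{\beta-1} - 2t^{\beta-1}$ for $t \geq x$. Then $g'\rbra{x} = \beta\rbra{h\rbra{p} - h\rbra{q}}$, and the claim reduces to showing that $h$ is nondecreasing in $t$ on $[x, \infty)$, since $p \geq q \geq x$.

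The derivative is $h'\rbra{t} = \rbra{\beta-1}\sbra{\rbra{t+x}^{\beta-2} + \rbra{t-x}^{\beta-2} - 2t^{\beta-2}}$. Because $\beta \geq 3$, the function $u \mapsto u^{\beta-2}$ is convex on $[0,\infty)$, as its exponent is at least $1$. Midpoint convexity at $t = \frac{\rbra{t+x} + \rbra{t-x}}{2}$ therefore yields $h'\rbra{t} \geq 0$. Hence $h\rbra{p} \geq h\rbra{q}$, so $g' \geq 0$ on $\sbra{0, q}$. Integrating from $0$ gives $g\rbra{x} \geq 0$, which is the lemma.

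The main point needing care is the hypothesis $\beta \geq 3$. It is exactly what makes $u^{\beta-2}$ convex, so for non-integer $\beta$ no binomial expansion is needed. The only other subtlety is the domain, namely $t - x \geq 0$ for all $t \in \sbra{q, p}$, which is guaranteed by $q \geq x$.
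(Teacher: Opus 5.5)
Your proof is correct and follows essentially the same route as the paper's. Both arguments show that the same auxiliary function has nonnegative derivative on $[0,q]$ by recognizing $f'(x)/\beta$ as a mixed second difference of $u \mapsto u^{\beta-1}$. The only difference is the order in which the two differences are taken: the paper writes it as $g(x)+g(-x)-2g(0)$ with $g(y)=(p+y)^{\beta-1}-(q+y)^{\beta-1}$ and proves $g$ convex using the monotonicity of $u^{\beta-3}$, whereas you write it as $h(p)-h(q)$ and prove $h$ nondecreasing using the convexity of $u^{\beta-2}$.
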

\begin{proof}
    Let
    \begin{equation}
    f\rbra{x} \coloneqq \rbra{p+x}^\beta + \rbra{q-x}^\beta - \rbra{p-x}^\beta - \rbra{q+x}^\beta - 2\beta\rbra{p^{\beta-1}-q^{\beta-1}}x.
    \end{equation}
    Then, we only have to show that $f\rbra{x} \geq 0$ for $x \in \sbra{0, q}$. 
    To this end, note that
    \begin{align}
    f'\rbra{x} 
    & = \beta \rbra*{ \rbra{p+x}^{\beta-1} - \rbra{q-x}^{\beta-1} + \rbra{p-x}^{\beta-1} - \rbra{q+x}^{\beta-1} - 2\rbra{p^{\beta-1}-q^{\beta-1}} } \\
    & = \beta \rbra*{ g\rbra{x} + g\rbra{-x} - 2\rbra{p^{\beta-1}-q^{\beta-1}} },
    \end{align}
    where
    \begin{equation}
    g\rbra{x} \coloneqq \rbra{p+x}^{\beta-1} - \rbra{q+x}^{\beta-1}. 
    \end{equation}
    Note that
\begin{align}
    g'\rbra{x} & = \rbra{\beta-1} \rbra*{ \rbra{p+x}^{\beta-2} - \rbra{q+x}^{\beta-2} }, \\
    g''\rbra{x} & = \rbra{\beta-1}\rbra{\beta-2}\rbra*{\rbra{p+x}^{\beta-3}-\rbra{q+x}^{\beta-3}}.
\end{align}
When $\beta \geq 3$, we have $\rbra{p+x}^{\beta-3} \geq \rbra{q+x}^{\beta-3}$ for all $p \geq q \geq 0$ and $\abs{x} \leq q$. 
Therefore, $g''\rbra{x} \geq 0$ when $\beta \geq 3$, $p \geq q \geq 0$, and $\abs{x} \leq q$. 
This means that $g\rbra{x}$ is convex for $\abs{x} \leq q$.
Therefore,
\begin{equation}
    \frac{g\rbra{x} + g\rbra{-x}}{2} \geq g\rbra{0} = p^{\beta-1} - q^{\beta-1},
\end{equation}
which gives $f'\rbra{x} \geq 0$ for $x \in \sbra{0, q}$. 
Therefore, for any $x \in \sbra{0, q}$, we have $f\rbra{x} \geq f\rbra{0} = 0$, which yields the proof. 
\end{proof}

\begin{lemma} \label{lemma:neq-lnx}
    For $x > 1$, 
    \begin{equation}
    1 + x \ln\rbra*{1 - \frac{1}{x}} \leq -\frac{1}{2x}.
    \end{equation}
\end{lemma}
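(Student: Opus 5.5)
We need to show that $1 + x\ln(1-1/x) \le -\frac{1}{2x}$ for $x>1$. The plan is to substitute $t = 1/x \in (0,1)$ and use the Taylor series of the logarithm. With this substitution, $x\ln(1-1/x) = \frac{1}{t}\ln(1-t)$, so the claim becomes
\begin{equation*}
    1 + \frac{\ln(1-t)}{t} \leq -\frac{t}{2}, \qquad t \in (0,1).
\end{equation*}
Multiplying through by $t>0$, this is equivalent to $\ln(1-t) \le -t - \frac{t^2}{2}$.

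For $t \in (0,1)$ the series $\ln(1-t) = -\sum_{j\ge 1} t^j/j$ converges absolutely. Every term is negative, so
\begin{equation*}
    \ln(1-t) = -t - \frac{t^2}{2} - \sum_{j \geq 3} \frac{t^j}{j} \leq -t - \frac{t^2}{2}.
\end{equation*}
Dividing by $t$ and adding $1$ gives $1 + \frac{\ln(1-t)}{t} \le -\frac{t}{2}$. Substituting back $t=1/x$ yields the stated inequality.

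If one prefers to avoid series, define $h(t) = \ln(1-t) + t + t^2/2$ on $[0,1)$. Then $h(0)=0$ and
\begin{equation*}
    h'(t) = -\frac{1}{1-t} + 1 + t = \frac{-t^2}{1-t} \le 0,
\end{equation*}
so $h(t) \le 0$ on $[0,1)$, which is the same inequality. There is no real obstacle here. The only points needing care are the sign when multiplying by $t>0$ and the fact that $x>1$ guarantees $t \in (0,1)$, so the logarithm is defined.
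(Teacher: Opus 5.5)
Your proof is correct and is essentially the paper's: it also substitutes $t = 1/x$ and shows $g(t) = t^2 + 2t + 2\ln(1-t) \le 0$ via $g(0)=0$ and $g'(t) = -2t^2/(1-t) \le 0$, i.e., exactly your function $h$ scaled by $2$. Your Taylor-series argument is an equally valid substitute for the monotonicity step.
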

\begin{proof}
    Let $t = \frac{1}{x} \in \rbra{0, 1}$. 
    Then, we only have to show that for $t \in \rbra{0, 1}$, 
    \begin{equation}
    g\rbra{t} = t^2 + 2t + 2\ln\rbra{1-t} \leq 0. 
    \end{equation}
    Note that for $t \in [0, 1)$, 
    \begin{equation}
    g'\rbra{t} = -\frac{2t^2}{1-t} \leq 0.
    \end{equation}
    Therefore, $g\rbra{t}$ is non-increasing for $t \in [0, 1)$, which gives $g\rbra{t} \leq g\rbra{0} = 0$ for $t \in [0, 1)$.
\end{proof}

\begin{lemma} \label{lemma:lb-alpha-e}
    For $\alpha \geq 4$, 
    \begin{equation}
    \rbra*{1-\frac{1}{\alpha}}^{\alpha-1} - \rbra*{\frac{1}{\alpha}}^{3} \geq \frac{1}{e}. 
    \end{equation}
\end{lemma}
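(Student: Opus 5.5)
We need to show that for $\alpha \geq 4$, $(1-1/\alpha)^{\alpha-1} - \alpha^{-3} \geq 1/e$. The plan is to bound the first term from below by $e^{-1}$ times a correction factor that exceeds $1$ by an amount of order $1/\alpha$. Since $\alpha^{-3}$ is much smaller than that correction for $\alpha \geq 4$, the inequality will follow.

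\textbf{Lower bound on the first term.} Write
\[
(1-1/\alpha)^{\alpha-1} = \exp\bigl((\alpha-1)\ln(1-1/\alpha)\bigr) = \exp\bigl(-1 - \bigl(1+\alpha\ln(1-1/\alpha)\bigr) - \ln(1-1/\alpha)\bigr).
\]
Use the expansion $-\ln(1-t) = t + t^2/2 + t^3/3 + \cdots$ with $t = 1/\alpha$. This gives
\[
(\alpha-1)\ln(1-t) = -1 + t/2 + t^2/6 + t^3/12 + \cdots .
\]
The coefficient of $t^k$ for $k \geq 1$ is $\frac{1}{k} - \frac{1}{k+1} = \frac{1}{k(k+1)} > 0$. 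Keeping only the $k=1$ term then yields
\[
(\alpha-1)\ln(1-1/\alpha) \geq -1 + \frac{1}{2\alpha},
\]
and hence
\[
(1-1/\alpha)^{\alpha-1} \geq e^{-1}e^{1/(2\alpha)} \geq e^{-1}\Bigl(1+\frac{1}{2\alpha}\Bigr).
\]
This can also be derived via \cref{lemma:neq-lnx}, which gives $1+\alpha\ln(1-1/\alpha) \leq -1/(2\alpha)$, combined with $-\ln(1-1/\alpha) \geq 1/\alpha$. Together these show the exponent is at least $-1 + 1/(2\alpha) + \ldots$, with all terms of the correct sign.

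\textbf{Handling the cubic term.} It now suffices to check $e^{-1}/(2\alpha) \geq \alpha^{-3}$, which is equivalent to $\alpha^2 \geq 2e$. This holds for $\alpha \geq 4$, since $16 > 2e \approx 5.44$.

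\textbf{Main obstacle.} The only delicate point is justifying the sign of the series coefficients, or equivalently the step $e^{x} \geq 1+x$ combined with the exponent bound. The power series argument is clean because every coefficient $\frac{1}{k(k+1)}$ is positive and the series converges for $t = 1/\alpha \leq 1/4$. Otherwise this is a routine estimate with substantial slack.
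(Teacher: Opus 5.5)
Your main argument is correct, and it takes a different route from the paper. The paper sets $f(x) = (1-1/x)^{x-1} - x^{-3}$ and computes $f'(x)$. It then combines \cref{lemma:neq-lnx} with the monotonicity of $(1-1/x)^{x-1}x^2$ on $[4,\infty)$ (value $27/4$ at $x=4$) to get $f'(x) \le -\frac{3}{8x^4} < 0$, and concludes $f(x) \ge \lim_{x\to\infty} f(x) = 1/e$. You instead prove an explicit quantitative bound. The identity $(\alpha-1)\ln(1-1/\alpha) = -1 + \sum_{k\ge 1} \frac{\alpha^{-k}}{k(k+1)}$ has all correction terms positive, so it gives $(1-1/\alpha)^{\alpha-1} \ge e^{-1}e^{1/(2\alpha)} \ge e^{-1}\bigl(1+\frac{1}{2\alpha}\bigr)$. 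Everything then reduces to $\alpha^2 \ge 2e$. Your route is shorter and needs neither the derivative computation, the auxiliary monotonicity claim, nor \cref{lemma:neq-lnx}. It also establishes the inequality for all $\alpha \ge \sqrt{2e}$, not just $\alpha \ge 4$. The paper's route additionally shows that the left-hand side decreases to $1/e$, which is not needed for the application.

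You should delete two side remarks, though. Your first display has a sign error; the correct identity is $(\alpha-1)\ln(1-1/\alpha) = -1 + \bigl(1+\alpha\ln(1-1/\alpha)\bigr) - \ln(1-1/\alpha)$. With the correct sign, the alternative derivation ``via \cref{lemma:neq-lnx}'' fails. That lemma gives an \emph{upper} bound $1+\alpha\ln(1-1/\alpha) \le -\frac{1}{2\alpha}$ on a term that enters the exponent with a plus sign. Combined with the lower bound $-\ln(1-1/\alpha) \ge 1/\alpha$, the two inequalities point in opposite directions and give no lower bound on the exponent. Your power-series argument never uses that display or that remark, so the proof is complete without them.
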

\begin{proof}
    Let
    \begin{equation}
    f\rbra{x} = \rbra*{1-\frac{1}{x}}^{x-1} - \rbra*{\frac{1}{x}}^{3}.
    \end{equation}
    We only have to show that for any $x \geq 4$, $f\rbra{x} \geq 1/e$. 
    To this end, note that
    \begin{equation}
    f'\rbra{x} = \frac{1}{x^4}\rbra*{3 + \rbra*{1-\frac{1}{x}}^{x-1}x^3\rbra*{1 + x \ln\rbra*{1-\frac{1}{x}}}}.
    \end{equation}
    Note that by \cref{lemma:neq-lnx}, for $x > 1$, 
    \begin{equation}
    1 + x \ln\rbra*{1 - \frac{1}{x}} \leq -\frac{1}{2x}.
    \end{equation}
    Then, for $x > 1$, 
    \begin{align}
        f'\rbra{x} 
        & \leq \frac{1}{x^4}\rbra*{3 + \rbra*{1-\frac{1}{x}}^{x-1} x^3 \cdot \rbra*{-\frac{1}{2x}}} \\
        & = \frac{1}{x^4}\rbra*{3 - \frac{1}{2} \rbra*{1-\frac{1}{x}}^{x-1} x^2}.
    \end{align}
    Note that $\rbra{1-\frac{1}{x}}^{x-1} x^2$ is increasing for $x \geq 4$ and thus for $x \geq 4$, 
    \begin{equation}
        \rbra*{1-\frac{1}{x}}^{x-1} x^2 \geq \left. \rbra*{1-\frac{1}{x}}^{x-1} x^2 \right|_{x=4} = \frac{27}{4}.
    \end{equation}
    Therefore, for $x \geq 4$, 
    \begin{align}
        f'\rbra{x} \leq \frac{1}{x^4}\rbra*{3 - \frac{1}{2} \cdot \frac{27}{4}}
        = -\frac{3}{8x^4} < 0,
    \end{align}
    meaning that $f\rbra{x}$ is decreasing for $x \geq 4$. 
    Therefore, for any $x \geq 4$, we have
    \begin{equation}
    f\rbra{x} \geq \lim_{x \to +\infty} f\rbra{x} = \frac{1}{e}. 
    \end{equation}
\end{proof}

Now we are ready to prove \cref{lemma:F-alpha-diff-lb}. 

\begin{proof}[Proof of \cref{lemma:F-alpha-diff-lb}]
    By \cref{lemma:neq-pqx} with $p \coloneq 1 - \frac{1}{\alpha}$, $q \coloneq \frac{1}{\alpha}$, $\beta \coloneq \alpha$, and $x \coloneq \gamma$, we have
    \begin{equation}
    \rbra*{1-\frac{1}{\alpha}+\gamma}^\alpha - \rbra*{1-\frac{1}{\alpha}-\gamma}^\alpha + \rbra*{\frac{1}{\alpha}-\gamma}^\alpha - \rbra*{\frac{1}{\alpha}+\gamma}^\alpha \geq 2\alpha\rbra*{\rbra*{1-\frac{1}{\alpha}}^{\alpha-1} - \rbra*{\frac{1}{\alpha}}^{\alpha-1}}\gamma.
    \end{equation}
    To complete the proof, by \cref{lemma:lb-alpha-e}, we have for $\alpha \geq 4$, 
    \begin{equation}
    \rbra*{1-\frac{1}{\alpha}}^{\alpha-1} - \rbra*{\frac{1}{\alpha}}^{\alpha-1} \geq \rbra*{1-\frac{1}{\alpha}}^{\alpha-1} - \rbra*{\frac{1}{\alpha}}^{3} \geq \frac{1}{e}.
    \end{equation}
\end{proof}

\subsection{MSE lower bounds} \label{sec:mse-lower}

Now we are going to prove \cref{thm:lower}. 

\begin{proof}[Proof of \cref{thm:lower}]
    Let 
    \begin{align}
        L_\alpha\rbra{\mathsf{n}} 
        & \coloneqq \inf_{\hat{E}_\mathsf{n}} \sup_{P \in \mathcal{M}_S} \textup{MSE}\rbra{\hat{E}_\mathsf{n}, \mathrm{F}_\alpha,P} \\
        & = \inf_{\hat{E}_\mathsf{n}} \sup_{P \in \mathcal{M}_S} \E\sbra*{\rbra*{\hat{E}_\mathsf{n}\rbra{P} - \mathrm{F}_\alpha\rbra{P}}^2}.
    \end{align}
    Let $\delta \in (0, \frac{1}{4}]$ and $\varepsilon \in \rbra{0, \frac{1}{e}}$ to be determined. 
    Suppose that 
    \begin{equation}
    \mathsf{n} \leq \frac{\alpha-1}{4e^2\varepsilon^2} \ln\rbra*{\frac{1}{4\delta}}.
    \end{equation}
    By \cref{thm:sample-lb}, every estimator $\hat E_{\mathsf{n}}$ fails with probability greater than $\delta$ on a probability distribution $P^*$, i.e., 
    \begin{equation}
    \Pr\sbra*{ \abs*{\hat E_{\mathsf{n}}\rbra{P^*} - \mathrm{F}_\alpha\rbra{P^*}} \geq \varepsilon } > \delta.
    \end{equation}
    Markov's inequality gives 
    \begin{equation}
    \Pr\sbra*{ \abs*{\hat E_{\mathsf{n}}\rbra{P^*} - \mathrm{F}_\alpha\rbra{P^*}} \geq \varepsilon } \leq \frac{1}{\varepsilon^2} \mathbb{E}\sbra*{\rbra*{\hat E_{\mathsf{n}}\rbra{P^*} - \mathrm{F}_\alpha\rbra{P^*}}^2},
    \end{equation}
    and hence 
    \[
    \sup_{P \in \mathcal{M}_S} \E\sbra*{\rbra*{\hat{E}_\mathsf{n}\rbra{P} - \mathrm{F}_\alpha\rbra{P}}^2} \geq \mathbb{E}\sbra*{\rbra*{\hat E_{\mathsf{n}}\rbra{P^*} - \mathrm{F}_\alpha\rbra{P^*}}^2} > \delta \varepsilon^2.
    \]
    Taking the infimum over all estimators $\hat{E}_\mathsf{n}$ gives 
    \begin{equation}
    L_\alpha\rbra{\mathsf{n}} \geq \delta\varepsilon^2.
    \end{equation}

    Now we consider the following two cases.

    \paragraph{Case 1: $\mathsf{n} > \frac{\alpha-1}{4}$.}
    Taking 
    \begin{equation}
    \delta = \frac{1}{4e} \in \left(0, \frac{1}{4}\right], \qquad \varepsilon = \frac{1}{e}\sqrt{\frac{\alpha-1}{4\mathsf{n}}} \in \rbra*{0, \frac{1}{e}},
    \end{equation}
    we have
    \begin{equation}
    L_\alpha\rbra{\mathsf{n}} \geq \delta \varepsilon^2 = \frac{\alpha-1}{16e^3\mathsf{n}}.
    \end{equation}

    \paragraph{Case 2: $1 \leq \mathsf{n} \leq \frac{\alpha-1}{4}$.} 
    Taking $\delta = \frac{1}{4e} \in (0, \frac{1}{4}]$, we have 
    \begin{equation}
    L_\alpha\rbra{\mathsf{n}} \geq \delta \varepsilon^2 = \frac{\varepsilon^2}{4e}.
    \end{equation}
    As $\varepsilon \in \rbra{0, \frac{1}{e}}$ can be arbitrarily close to $\frac{1}{e}$, we have 
    \begin{equation}
        L_\alpha\rbra{\mathsf{n}} \geq \frac{1}{16e^3}.
    \end{equation}

    Combining the two cases, we have
    \begin{equation}
        L_\alpha\rbra{\mathsf{n}} \geq \frac{1}{4e^3} \cdot \min\cbra*{\frac{\alpha-1}{4\mathsf{n}}, 1},
    \end{equation}
    which completes the proof. 
\end{proof}

\section*{Acknowledgment}

The author thanks Kean Chen for helpful discussions, and anonymous reviewers for pointing out the related work \cite{DdW11}.

\addcontentsline{toc}{section}{References}

\bibliographystyle{alphaurl}
\bibliography{main}

@PREAMBLE{"\DeclareRobustCommand{\dutchPrefix}[2]{#2}"}

@PREAMBLE{"\providecommand{\dutchPrefix}[2]{#2}"}

@PREAMBLE{"\renewcommand{\dutchPrefix}[2]{#2}"}

@article{AJL09,
    author = {Aharonov, Dorit and Jones, Vaughan and Landau, Zeph},
    title = {A polynomial quantum algorithm for approximating the {Jones} polynomial},
    journal = {Algorithmica},
    volume = {55},
    number = {3},
    pages = {395--421},
    doi = {10.1007/s00453-008-9168-0},
    year = {2009}
}

@article{EAO+02,
    author = {Ekert, Artur K. and Alves, Carolina Moura and Oi, Daniel K. L. and Horodecki, Micha{\l} and Horodecki, Pawe{\l} and Kwek, L. C.},
    title = {Direct estimations of linear and nonlinear functionals of a quantum state},
    journal = {Physical Review Letters},
    volume = {88},
    number = {21},
    pages = {217901},
    doi = {10.1103/PhysRevLett.88.217901},
    year = {2002}
}

@article{HMO+21,
    author = {Huggins, William J. and McArdle, Sam and O'Brien, Thomas E. and Lee, Joonho and Rubin, Nicholas C. and Boixo, Sergio and Whaley, K. Birgitta and Babbush, Ryan and McClean, Jarrod R.},
    title = {Virtual distillation for quantum error mitigation},
    journal = {Physical Review X},
    volume = {11},
    number = {4},
    pages = {041036},
    doi = {10.1103/PhysRevX.11.041036},
    year = {2021}
}

@article{JVHW17,
    author = {Jiao, Jiantao and Venkat, Kartik and Han, Yanjun and Weissman, Tsachy},
    title = {Maximum likelihood estimation of functionals of discrete distributions},
    journal = {IEEE Transactions on Information Theory},
    volume = {63},
    number = {10},
    pages = {6774--6798},
    doi = {10.1109/TIT.2017.2733537},
    year = {2017}
}

@phdthesis{Gil19,
    type = {{PhD} Thesis},
    title = {Quantum Singular Value Transformation \& Its Algorithmic Applications},
    author = {Gily\'{e}n, Andr\'{a}s},
    school = {University of Amsterdam},
    url = {https://pure.uva.nl/ws/files/35292358/Thesis.pdf},
    year = {2019},
}

@inproceedings{GSLW19,
    author = {Gily\'{e}n, Andr\'{a}s and Su, Yuan and Low, Guang Hao and Wiebe, Nathan},
    title = {Quantum singular value transformation and beyond: exponential improvements for quantum matrix arithmetics},
    booktitle = {Proceedings of the 51st Annual ACM SIGACT Symposium on Theory of Computing},
    pages = {193-204},
    doi = {10.1145/3313276.3316366},
    year = {2019}
}

@article{WZ25,
    author = {Wang, Qisheng and Zhang, Zhicheng},
    title = {Time-efficient quantum entropy estimator via samplizer},
    journal = {IEEE Transactions on Information Theory},
    volume = {71},
    number = {12},
    pages = {9569--9599},
    doi = {10.1109/TIT.2025.3576137},
    year = {2025},
}

@article{LW25,
    author = {Liu, Yupan and Wang, Qisheng},
    title = {On estimating the trace of quantum state powers},
    journal = {IEEE Transactions on Information Theory},
    volume = {},
    number = {},
    pages = {},
    doi = {10.1137/1.9781611978322.28},
    year = {2026},
}

@inproceedings{CW25,
    author = {Chen, Kean and Wang, Qisheng},
    title = {Improved Sample Upper and Lower Bounds for Trace Estimation of Quantum State Powers},
    booktitle = {Proceedings of the 38th Annual Conference on Learning Theory},
    pages = {1008--1028},
    doi = {},
    url = {https://proceedings.mlr.press/v291/chen25d.html},
    year = {2025}
}

@phdthesis{BY02,
    type = {{PhD} thesis},
    title = {The Complexity of Massive Data Set Computations},
    author = {Bar-Yossef, Ziv},
    school = {University of California, Berkeley},
    url = {https://www.proquest.com/docview/304791145},
    year = {2002},
}

@article{CWYZ25,
    author = {Chen, Kean and Wang, Qisheng and Yu, Zhan and Zhang, Zhicheng},
    title = {Simultaneous estimation of nonlinear functionals of a quantum state},
    journal = {IEEE Transactions on Information Theory},
    volume = {},
    number = {},
    pages = {},
    doi = {10.1109/TIT.2026.3699531},
    year = {2026}
}

@article{Sha48a,
    author = {Shannon, C. E.},
    title = {A mathematical theory of communication},
    journal = {The Bell System Technical Journal},
    volume = {27},
    number = {3},
    pages = {379--423},
    doi = {10.1002/j.1538-7305.1948.tb01338.x},
    year = {1948}
}

@article{Sha48b,
    author = {Shannon, C. E.},
    title = {A mathematical theory of communication},
    journal = {The Bell System Technical Journal},
    volume = {27},
    number = {4},
    pages = {623--656},
    doi = {10.1002/j.1538-7305.1948.tb00917.x},
    year = {1948}
}

@article{Pan03,
    author = {Paninski, Liam},
    title = {Estimation of entropy and mutual information},
    journal = {Neural Computation},
    volume = {15},
    number = {6},
    pages = {1191--1253},
    doi = {10.1162/089976603321780272},
    year = {2003}
}

@article{Pan04,
    author = {Paninski, Liam},
    title = {Estimating entropy on $m$ bins given fewer than $m$ samples},
    journal = {IEEE Transactions on Information Theory},
    volume = {50},
    number = {9},
    pages = {2200--2203},
    doi = {10.1109/TIT.2004.833360},
    year = {2004}
}

@inproceedings{VV11a,
    author = {Valiant, Gregory and Valiant, Paul},
    title = {Estimating the unseen: an n/log(n)-sample estimator for entropy and support size, shown optimal via new {CLTs}},
    booktitle = {Proceedings of the 43rd Annual ACM Symposium on Theory of Computing},
    pages = {685--694},
    doi = {10.1145/1993636.1993727},
    year = {2011}
}

@inproceedings{VV11b,
    author = {Valiant, Gregory and Valiant, Paul},
    title = {The power of linear estimators},
    booktitle = {Proceedings of the 52nd IEEE Annual Symposium on Foundations of Computer Science},
    pages = {403-412},
    doi = {10.1109/FOCS.2011.81},
    year = {2011}
}

@article{VV17,
    author = {Valiant, Gregory and Valiant, Paul},
    title = {Estimating the unseen: improved estimators for entropy and other properties},
    journal = {Journal of the ACM},
    volume = {64},
    number = {6},
    pages = {37:1--37:41},
    doi = {10.1145/3125643},
    year = {2017}
}

@article{JVHW15,
    author = {Jiao, Jiantao and Venkat, Kartik and Han, Yanjun and Weissman, Tsachy},
    title = {Minimax estimation of functionals of discrete distributions},
    journal = {IEEE Transactions on Information Theory},
    volume = {61},
    number = {5},
    pages = {2835--2885},
    doi = {10.1109/TIT.2015.2412945},
    year = {2015}
}

@article{WY16,
    author = {Wu, Yihong and Yang, Pengkun},
    title = {Minimax rates of entropy estimation on large alphabets via best polynomial approximation},
    journal = {IEEE Transactions on Information Theory},
    volume = {62},
    number = {6},
    pages = {3702--3720},
    doi = {10.1109/TIT.2016.2548468},
    year = {2016}
}

@inproceedings{OS17,
    author = {Obremski, Maciej and Skorski, Maciej},
    title = {Renyi entropy estimation revisited},
    booktitle = {Approximation, Randomization, and Combinatorial Optimization. Algorithms and Techniques (APPROX/RANDOM 2017)},
    pages = {20:1-20:15},
    doi = {10.4230/LIPIcs.APPROX-RANDOM.2017.20},
    year = {2017}
}

@article{AOST17,
    author = {Acharya, Jayadev and Orlitsky, Alon and Suresh, Ananda Theertha and Tyagi, Himanshu},
    title = {Estimating {Renyi} entropy of discrete distributions},
    journal = {IEEE Transactions on Information Theory},
    volume = {63},
    number = {1},
    pages = {38--56},
    doi = {10.1109/TIT.2016.2620435},
    year = {2017}
}

@inproceedings{Ren61,
    author = {R{\'{e}}nyi, Alfr{\'{e}}d},
    title = {On measures of entropy and information},
    booktitle = {Proceedings of the Fourth Berkeley Symposium on Mathematics, Statistics and Probability},
    pages = {547--562},
    doi = {},
    url = {https://static.renyi.hu/renyi_cikkek/1961_on_measures_of_entropy_and_information.pdf},
    year = {1961}
}

@article{Tsa88,
    author = {Tsallis, Constantino},
    title = {Possible generalization of {Boltzmann-Gibbs} statistics},
    journal = {Journal of Statistical Physics},
    volume = {52},
    number = {},
    pages = {479--487},
    doi = {10.1007/BF01016429},
    year = {1988}
}

@article{Hil73,
    author = {Hill, M. O.},
    title = {Diversity and evenness: a unifying notation and its consequences},
    journal = {Ecology},
    volume = {54},
    number = {2},
    pages = {427--432},
    doi = {10.2307/1934352},
    year = {1973}
}

@article{AMS99,
    author = {Alon, Noga and Matias, Yossi and Szegedy, Mario},
    title = {The space complexity of approximating the frequency moments},
    journal = {Journal of Computer and System Sciences},
    volume = {58},
    number = {1},
    pages = {137--147},
    doi = {10.1006/jcss.1997.1545},
    year = {1999}
}

@book{Gin12,
    author = {Gini, Corrado},
    publisher = {Tipografia di Paolo Cuppin},
    title = {Variabilit{\`a} e Mutabilit{\`a}: contributo allo studio delle distribuzioni e delle relazioni statistiche},
    year = {1912},
    url = {https://www.byterfly.eu/islandora/object/librib:680892}
}

@book{BFOS84,
  title = {Classification and Regression Trees}, 
  author={Breiman, Leo and Friedman, Jerome and Olshen, R. A. and Stone, Charles J.},
  year = {1984},
  publisher = {Chapman and Hall/CRC},
  doi={10.1201/9781315139470}
}

@article{AK01,
    author = {Antos, Andr{\'{a}}s and Kontoyiannis, Ioannis},
    title = {Convergence properties of functional estimates for discrete distributions},
    journal = {Random Structures \& Algorithms},
    volume = {19},
    number = {3--4},
    pages = {163--193},
    doi = {10.1002/rsa.10019},
    year = {2001}
}

@inproceedings{BYKS01,
    author = {Bar-Yossef, Ziv and Kumar, Ravi and Sivakumar, D.},
    title = {Sampling algorithms: lower bounds and applications},
    booktitle = {Proceedings of the 33rd Annual ACM Symposium on Theory of Computing},
    pages = {266--275},
    doi = {10.1145/380752.380810},
    year = {2001}
}

@inproceedings{IW05,
    author = {Indyk, Piotr and Woodruff, David},
    title = {Optimal approximations of the frequency moments of data streams},
    booktitle = {Proceedings of the 37th Annual ACM Symposium on Theory of Computing},
    pages = {202--208},
    doi = {10.1145/1060590.1060621},
    year = {2005}
}

@inproceedings{BGKS06,
    author = {Bhuvanagiri, Lakshminath and Ganguly, Sumit and Kesh, Deepanjan and Saha, Chandan},
    title = {Simpler algorithm for estimating frequency moments of data streams},
    booktitle = {Proceedings of the 17th Annual ACM-SIAM Symposium on Discrete Algorithm},
    pages = {708--713},
    doi = {},
    url = {https://dl.acm.org/doi/abs/10.5555/1109557.1109634},
    year = {2006}
}

@inproceedings{GC07,
    author = {Ganguly, Sumit and Cormode, Graham},
    title = {On estimating frequency moments of data streams},
    booktitle = {Approximation, Randomization, and Combinatorial Optimization. Algorithms and Techniques (APPROX 2007, RANDOM 2007)},
    pages = {479--493},
    doi = {10.1007/978-3-540-74208-1_35},
    year = {2007}
}

@article{CJ15,
    author = {Chao, Anne and Jost, Lou},
    title = {Estimating diversity and entropy profiles via discovery rates of new species},
    journal = {Methodsin Ecology and Evolution},
    volume = {6},
    number = {8},
    pages = {873--882},
    doi = {10.1111/2041-210X.12349},
    year = {2015}
}

@article{Fur06,
    author = {Furuichi, Shigeru},
    title = {Information theoretical properties of {Tsallis} entropies},
    journal = {Journal of Mathematical Physics},
    volume = {47},
    number = {2},
    pages = {023302},
    doi = {10.1063/1.2165744},
    year = {2006}
}

@misc{Wan25,
    author = {Wang, Qisheng},
    title = {Information-theoretic lower bounds for approximating monomials via optimal quantum {Tsallis} entropy estimation},
    howpublished = {ArXiv preprint},
    eprint = {2509.03496},
    year = {2025}
}

@article{OW21,
    author = {O'Donnell, Ryan and Wright, John},
    title = {Quantum spectrum testing},
    journal = {Communications in Mathematical Physics},
    volume = {387},
    number = {1},
    pages = {1--75},
    doi = {10.1007/s00220-021-04180-1},
    year = {2021}
}

@article{AISW20,
    author = {Acharya, Jayadev and Issa, Ibrahim and Shende, Nirmal V. and Wagner, Aaron B.},
    title = {Estimating Quantum Entropy},
    journal = {IEEE Journal on Selected Areas in Information Theory},
    volume = {1},
    number = {2},
    pages = {454--468},
    doi = {10.1109/JSAIT.2020.3015235},
    year = {2020}
}

@article{BCWdW01,
    author = {Buhrman, Harry and Cleve, Richard and Watrous, John and de Wolf, Ronald},
    title = {Quantum Fingerprinting},
    journal = {Physical Review Letters},
    volume = {87},
    number = {16},
    pages = {167902},
    doi = {10.1103/PhysRevLett.87.167902},
    year = {2001}
}

@article{WZ25b,
    author = {Wang, Qisheng and Zhang, Zhicheng},
    title = {Quantum lower bounds by sample-to-query lifting},
    journal = {SIAM Journal on Computing},
    volume = {54},
    number = {5},
    pages = {1294--1334},
    doi = {10.1137/24M1638616},
    year = {2025},
}

@inproceedings{AKN98,
    author = {Aharonov, Dorit and Kitaev, Alexei and Nisan, Noam},
    title = {Quantum circuits with mixed states},
    booktitle = {Proceedings of the 13th Annual ACM Symposium on Theory of Computing},
    pages = {20--30},
    doi = {10.1145/276698.276708},
    year = {1998}
}

@misc{GP22,
    title  = {Improved quantum algorithms for fidelity estimation},
    author = {Gily{\'e}n, Andr{\'a}s and Poremba, Alexander},
    note   = {ArXiv preprint},
    eprint = {2203.15993},
    year   = {2022}
}

@inproceedings{CHW07,
    author = {Childs, Andrew M. and Harrow, Aram W. and Wocjan, Pawel},
    title = {Weak {Fourier-Schur} sampling, the hidden subgroup problem, and the quantum collision problem},
    booktitle = {Proceedings of the 24th Annual Symposium on Theoretical Aspects of Computer Science},
    pages = {598--609},
    doi = {10.1007/978-3-540-70918-3_51},
    year = {2007}
}

@inproceedings{OW17,
    author = {O'Donnell, Ryan and Wright, John},
    title = {Efficient quantum tomography {II}},
    booktitle = {Proceedings of the 49th Annual ACM Symposium on Theory of Computing},
    pages = {962--974},
    doi = {10.1145/3055399.3055454},
    year = {2017}
}

@article{BHH11,
    author = {Bravyi, Sergey and Harrow, Aram W. and Hassidim, Avinatan},
    title = {Quantum algorithms for testing properties of distributions},
    journal = {IEEE Transactions on Information Theory},
    volume = {57},
    number = {6},
    pages = {3971--3981},
    doi = {10.1109/TIT.2011.2134250},
    year = {2011}
}

@inproceedings{GL20,
    author = {Gily{\'e}n, Andr{\'a}s and Li, Tongyang},
    title = {Distributional property testing in a quantum world},
    booktitle = {Proceedings of the 11th Innovations in Theoretical Computer Science Conference},
    pages = {25:1--25:19},
    doi = {10.4230/LIPIcs.ITCS.2020.25},
    year = {2020},
}

@inproceedings{CFMdW10,
    author = {Chakraborty, Sourav and Fischer, Eldar and Matsliah, Arie and {\dutchPrefix{Wolf}{d}}e Wolf, Ronald},
    title = {New results on quantum property testing},
    booktitle  = {Proceedings of the 30th IARCS Annual Conference on Foundations of Software Technology and Theoretical Computer Science},
    pages = {145--156},
    year = {2010},
    doi = {10.4230/LIPICS.FSTTCS.2010.145},
}

@article{Mon15,
    author = {Montanaro, Ashley},
    journal = {Proceedings of the Royal Society A},
    title = {Quantum speedup of {Monte} {Carlo} methods},
    volume = {471},
    number = {2181},
    pages = {20150301},
    doi = {10.1098/rspa.2015.0301},
    year = {2015}
}

@article{LWL24,
  title = {Succinct quantum testers for closeness and $k$-wise uniformity of probability distributions},
  author = {Luo, Jingquan and Wang, Qisheng and Li, Lvzhou},
  journal = {IEEE Transactions on Information Theory},
  volume = {70},
  number = {7},
  pages = {5092--5103},
  year = {2024},
  doi = {10.1109/TIT.2024.3393756},
}

@article{LW19,
    author = {Li, Tongyang and Wu, Xiaodi},
    title = {Quantum query complexity of entropy estimation},
    journal = {IEEE Transactions on Information Theory},
    volume = {65},
    number = {5},
    pages = {2899--2921},
    doi = {10.1109/TIT.2018.2883306},
    year = {2019}
}

@article{WZL24,
    author = {Wang, Xinzhao and Zhang, Shengyu and Li, Tongyang},
    title = {A quantum algorithm framework for discrete probability distributions with applications to {R\'{e}nyi} entropy estimation},
    journal = {IEEE Transactions on Information Theory},
    volume = {70},
    number = {5},
    pages = {3399--3426},
    doi = {10.1109/TIT.2024.3382037},
    year = {2024}
}

@misc{SJ25,
    title  = {Near optimal quantum algorithm for estimating {Shannon} entropy},
    author = {Shin, Myeongjin and Jeong, Kabgyun},
    note   = {ArXiv preprint},
    eprint = {2509.07452},
    year   = {2025}
}

@article{BDKR05,
    author = {Batu, Tugkan and Dasgupta, Sanjoy and Kumar, Ravi and Rubinfeld, Ronitt},
    title = {The complexity of approximating the entropy},
    journal = {SIAM Journal on Computing},
    volume = {35},
    number = {1},
    pages = {132--150},
    doi = {10.1137/S0097539702403645},
    year = {2005}
}

@inproceedings{CKO25,
    author = {Canonne, Cl{\'e}ment L. and Kothari, Robin and O'Donnell, Ryan},
    title = {Uniformity testing when you have the source code},
    booktitle = {Proceedings of the 20th Conference on the Theory of Quantum Computation, Communication and Cryptography},
    pages = {7:1-7:20},
    year = {2025},
    doi = {10.4230/LIPIcs.TQC.2025.7},
}

@article{BFR+13,
    author = {Batu, Tu{\u{g}}kan and Fortnow, Lance and Rubinfeld, Ronitt and Smith, Warren D. and White, Patrick},
    title = {Testing closeness of discrete distributions},
    journal = {Journal of the ACM},
    volume = {60},
    number = {1},
    pages = {4:1--4:25},
    doi = {10.1145/2432622.2432626},
    year = {2013}
}

@inproceedings{CDVV14,
    author = {Chan, Siu-On and Diakonikolas, Ilias and Valiant, Paul and Valiant, Gregory},
    title = {Optimal algorithms for testing closeness of discrete distributions},
    booktitle = {Proceedings of the 2014 Annual ACM-SIAM Symposium on Discrete Algorithms},
    pages = {1193--1203},
    year = {2014},
    doi = {10.1137/1.9781611973402.88},
}

@article{JHW18,
    author = {Jiao, Jiantao and Han, Yanjun and Weissman, Tsachy},
    title = {Minimax estimation of the {$L_1$} distance},
    journal = {IEEE Transactions on Information Theory},
    volume = {64},
    number = {10},
    pages = {6672--6706},
    doi = {10.1109/TIT.2018.2846245},
    year = {2018}
}

@inproceedings{DK16,
    author = {Diakonikolas, Ilias and Kane, Daniel M.},
    title = {A new approach for testing properties of discrete distributions},
    booktitle = {Proceedings of the 57th IEEE Annual Symposium on Foundations of Computer Science},
    pages = {685--694},
    year = {2016},
    doi = {10.1109/FOCS.2016.78},
}

@inproceedings{BFF+01,
    author = {Batu, Tu{\u{g}}kan and Fischer, Eldar and Fortnow, Lance and Kumar, Ravi and Rubinfeld, Ronitt and White, Patrick},
    title = {Testing random variables for independence and identity},
    booktitle = {Proceedings of the 42nd IEEE Symposium on Foundations of Computer Science},
    pages = {442--451},
    year = {2001},
    doi = {10.1109/SFCS.2001.959920},
}

@article{Pan08,
    author = {Paninski, Liam},
    title = {A coincidence-based test for uniformity given very sparsely sampled discrete data},
    journal = {IEEE Transactions on Information Theory},
    volume = {54},
    number = {10},
    pages = {4750--4755},
    doi = {10.1109/TIT.2008.928987},
    year = {2008}
}

@article{Hel09,
    author = {Hellinger, E.},
    title = {Neue begr{\"u}ndung der Theorie quadratischer formen von unendlichvielen ver{\"a}nderlichen},
    journal = {Journal f{\"u}r die reine und angewandte Mathematik},
    volume = {1909},
    number = {136},
    pages = {210--271},
    doi = {10.1515/crll.1909.136.210},
    year = {1909}
}

@article{LMR14,
	author = {Lloyd, Seth and Mohseni, Masoud and Rebentrost, Patrick},
	journal = {Nature Physics},
	title = {Quantum principal component analysis},
	volume = {10},
	number = {9},
	pages = {631-633},
	doi = {10.1038/nphys3029},
	year = {2014}
}

@article{KLL+17,
    author = {Kimmel, Shelby and Lin, Cedric Yen-Yu and Low, Guang Hao and Ozols, Maris and Yoder, Theodore J.},
    title = {Hamiltonian simulation with optimal sample complexity},
    journal = {npj Quantum Information},
    volume = {3},
    number = {1},
    pages = {1--7},
    doi = {10.1038/s41534-017-0013-7},
    year = {2017}
}

@article{GKP+25,
	author = {Go, Byeongseon and Kwon, Hyukjoon and Park, Siheon and Patel, Dhrumil and Wilde, Mark M.},
	title = {Sample-based {Hamiltonian} and {Lindbladian} simulation: Non-asymptotic analysis of sample complexity},
	journal = {Quantum Science and Technology},
    doi = {10.1088/2058-9565/ae075b},
	year = {2025},
}

@article{WZ24,
  title = {Fast quantum algorithms for trace distance estimation},
  author = {Wang, Qisheng and Zhang, Zhicheng},
  journal = {IEEE Transactions on Information Theory},
  volume = {70},
  number = {4},
  pages = {2720--2733},
  year = {2024},
  doi = {10.1109/TIT.2023.3321121},
}

@misc{Bau11,
    author = {Baumgartner, Bernhard},
    title = {An inequality for the trace of matrix products, using absolute values},
    howpublished = {ArXiv e-prints},
    eprint = {1106.6189},
    year = {2011}
}

@book{NC10,
    author = {Nielsen, Michael A. and Chuang, Isaac L.},
    title = {Quantum Computation and Quantum Information},
    publisher = {Cambridge University Press},
    doi = {10.1017/CBO9780511976667},
    year = {2010}
}

@misc{BMW16,
    author = {Bavarian, Mohammad and Mehraban, Saeed and Wright, John},
    title = {Learning entropy},
    howpublished = {A manuscript on von Neumann entropy estimation, private communication},
    year = {2016}
}

@article{SH21,
    author = {Subramanian, Sathyawageeswar and Hsieh, Min-Hsiu},
    title = {Quantum algorithm for estimating $\alpha$-Renyi entropies of quantum states},
    journal = {Physical Review A},
    volume = {104},
    number = {2},
    pages = {022428},
    doi = {10.1103/PhysRevA.104.022428},
    year = {2021}
}

@misc{GHS21,
    author = {Gur, Tom and Hsieh, Min-Hsiu and Subramanian, Sathyawageeswar},
    title = {Sublinear quantum algorithms for estimating von {Neumann} entropy},
    howpublished = {ArXiv e-prints},
    eprint = {2111.11139},
    year = {2021}
}

@misc{CLW20,
    author = {Chowdhury, Anirban N. and Low, Guang Hao and Wiebe, Nathan},
    title = {A variational quantum algorithm for preparing quantum {Gibbs} states},
    howpublished = {ArXiv preprints},
    eprint = {2002.00055},
    year = {2020}
}

@article{HJW20,
  title = {Minimax estimation of divergences between discrete distributions},
  author = {Han, Yanjun and Jiao, Jiantao and Weissman, Tsachy},
  journal = {IEEE Journal on Selected Areas in Information Theory},
  volume = {1},
  number = {3},
  pages = {814--823},
  year = {2020},
  doi = {10.1109/JSAIT.2020.3041036},
}

@article{BZLV18,
  title = {Estimation of {KL} divergence: Optimal minimax rate},
  author = {Bu, Yuheng and Zou, Shaofeng and Liang, Yingbin and Veeravalli, Venugopal V.},
  journal = {IEEE Transactions on Information Theory},
  volume = {64},
  number = {4},
  pages = {2648--2674},
  year = {2018},
  doi = {10.1109/TIT.2018.2805844},
}

@article{WZW23,
  title = {Quantum algorithms for estimating quantum entropies},
  author = {Wang, Youle and Zhao, Benchi and Wang, Xin},
  journal = {Physical Review Applied},
  volume = {19},
  number = {4},
  pages = {044041},
  year = {2023},
  doi = {10.1103/PhysRevApplied.19.044041},
}

@article{WZC+23,
    author = {Wang, Qisheng and Zhang, Zhicheng and Chen, Kean and Guan, Ji and Fang, Wang and Liu, Junyi and Ying, Mingsheng},
    title = {Quantum Algorithm for Fidelity Estimation},
    journal = {IEEE Transactions on Information Theory},
    volume = {69},
    number = {1},
    pages = {273--282},
    doi = {10.1109/TIT.2022.3203985},
    year = {2023}
}

@article{WZYW23,
    author = {Wang, Youle and Zhang, Lei and Yu, Zhan and Wang, Xin},
    title = {Quantum phase processing and its applications in estimating phase and entropies},
    journal = {Physical Review A},
    volume = {108},
    number = {6},
    pages = {062413},
    doi = {10.1103/PhysRevA.108.062413},
    year = {2023}
}

@article{WGL+24,
    author = {Wang, Qisheng and Guan, Ji and Liu, Junyi and Zhang, Zhicheng and Ying, Mingsheng},
    title = {New Quantum Algorithms for Computing Quantum Entropies and Distances},
    journal = {IEEE Transactions on Information Theory},
    volume = {70},
    number = {8},
    pages = {5653--5680},
    doi = {10.1109/TIT.2024.3399014},
    year = {2024}
}

@misc{UNWT25,
    title={Quantum algorithms for {Uhlmann} transformation},
    author={Utsumi, Takeru and Nakata, Yoshifumi and Wang, Qisheng and Takagi, Ryuji},
    howpublished={ArXiv preprints},
    eprint={2509.03619},
    year={2025}
}

@article{Hay25,
    title = {Measuring quantum relative entropy with finite-size effect},
    author = {Hayashi, Masahito},
    journal = {Quantum},
    volume = {9},
    number = {},
    pages = {1725},
    year = {2025},
    doi = {10.22331/q-2025-05-05-1725}
}

@inproceedings{LW25b,
    author = {Liu, Yupan and Wang, Qisheng},
    title = {On estimating the quantum $\ell_\alpha$ distance},
    booktitle = {Proceedings of the 33rd Annual European Symposium on Algorithms},
    pages = {106:1-106:19},
    doi = {10.4230/LIPIcs.ESA.2025.106},
    year = {2025}
}

@article{Wan24,
    author = {Wang, Qisheng},
    title = {Optimal trace distance and fidelity estimations for pure quantum states},
    journal = {IEEE Transactions on Information Theory},
    volume = {70},
    number = {12},
    pages = {8791--8805},
    doi = {10.1109/TIT.2024.3447915},
    year = {2024}
}

@inproceedings{WZ24b,
    author = {Wang, Qisheng and Zhang, Zhicheng},
    title = {Sample-optimal quantum estimators for pure-state trace distance and fidelity via samplizer},
    booktitle = {Proceedings of the 53rd International Colloquium on Automata, Languages, and Programming},
    pages = {154:1-154:21},
    doi = {10.4230/LIPIcs.ICALP.2026.154},
    year = {2026}
}

@misc{BGW25,
    author = {Bao, Jinge and Gao, Minbo and Wang, Qisheng},
    title = {On estimating the quantum {Tsallis} relative entropy},
    eprint = {2510.00752},
    howpublished = {ArXiv preprints},
    year = {2025}
}

@inproceedings{BOW19,
    author = {B{\u{a}}descu, Costin and O'Donnell, Ryan and Wright, John},
    title = {Quantum state certification},
    booktitle = {Proceedings of the 51st Annual ACM SIGACT Symposium on Theory of Computing},
    pages = {503--514},
    doi = {10.1145/3313276.3316344},
    year = {2019}
}

@inproceedings{OW26,
    author = {O’Donnell, Ryan and Wadhwa, Chirag},
    title = {Instance-optimal quantum state certitfication with entangled measurements},
    booktitle = {Proceedings of the 58th ACM Symposium on Theory of Computing},
    pages = {398--409},
    doi = {10.1145/3798129.3800759},
    year = {2026}
}

@article{vN27,
author = {von Neumann, J.},
journal = {Nachrichten von der Gesellschaft der Wissenschaften zu Göttingen, Mathematisch-Physikalische Klasse},
pages = {273-291},
title = {Thermodynamik quantenmechanischer Gesamtheiten},
url = {http://eudml.org/doc/59231},
volume = {1927},
year = {1927},
}

@article{GL95,
    author = {Gill, Richard D. and Levit, Boris Y.},
    title = {Applications of the van {Trees} inequality: A {Bayesian} {Cram{\'{e}}r-Rao} bound},
    journal = {Bernoulli},
    volume = {1},
    number = {1--2},
    pages = {59--79},
    doi = {10.2307/3318681},
    year = {1995}
}

@incollection{MdW16,
    author = {Montanaro, Ashley and de Wolf, Ronald},
    title = {A survey of quantum property testing},
    year = {2016},
    publisher = {University of Chicago},
    booktitle = {Theory of Computing Library},
    series = {Graduate Surveys},
    number = {7},
    doi = {10.4086/toc.gs.2016.007},
    pages = {1--81},
}

@incollection{DdW11,
    author = {Drucker, Andrew and de Wolf, Ronald},
    title = {Quantum proofs for classical theorems},
    year = {2011},
    publisher = {University of Chicago},
    booktitle = {Theory of Computing Library},
    series = {Graduate Surveys},
    number = {2},
    doi = {10.4086/toc.gs.2011.002},
    pages = {1--54},
}

@article{TD16,
    author = {Toranzo, I. V. and Dehesa, J. S.},
    title = {{R{\'{e}}nyi}, {Shannon} and {Tsallis} entropies of {Rydberg} hydrogenic systems},
    journal = {Europhysics Letters},
    volume = {113},
    number = {4},
    pages = {48003},
    doi = {10.1209/0295-5075/113/48003},
    year = {2016}
}

@article{TPCD16,
    author = {Toranzo, I. V. and Puertas-Centeno, D. and Dehesa, J. S.},
    title = {Entropic properties of {$D$}-dimensional {Rydberg} systems},
    journal = {Physica A},
    volume = {462},
    number = {},
    pages = {1197--1206},
    doi = {10.1016/j.physa.2016.06.144},
    year = {2016}
}

@article{QKW24,
  title={Multivariate trace estimation in constant quantum depth},
  author={Quek, Yihui and Kaur, Eneet and Wilde, Mark M.},
  journal={Quantum},
  volume={8},
  pages={1220},
  year={2024},
  doi = {10.22331/Q-2024-01-10-1220}
}

@article{WD20,
  title={Calculating {R{\'{e}}nyi} entropies with neural autoregressive quantum states},
  author={Wang, Zhaoyou and Davis, Emily J.},
  journal={Physical Review A},
  volume={102},
  number={6},
  pages={062413},
  year={2020},
  doi = {10.1103/PhysRevA.102.062413}
}

\appendix

\section{Proof of \texorpdfstring{\cref{lemma:tr-L-rho-k}}{Lemma \ref{lemma:tr-L-rho-k}}} \label{sec:proof-of-lemma-tr-L-rho-k}

For readability, we use $\begin{matrix}
    A \\ B
\end{matrix}$ to denote $A \otimes B$. 
Then, \cref{eq:shift-L} can be restated as follows.
\begin{equation} \label{eq:restate-shift-L}
\tr\rbra*{ \begin{matrix}
    L_{\mathsf{A}_1} \\ I_{\mathsf{A}_2} \\ \vdots \\ I_{\mathsf{A}_k}
\end{matrix} \cdot \textup{Shift}_k \cdot \begin{matrix}
    \rho_{\mathsf{A}_1} \\ \rho_{\mathsf{A}_2} \\ \vdots \\ \rho_{\mathsf{A}_k}
\end{matrix} } = \tr\rbra*{L \rho^k}.
\end{equation}

The proof of \cref{lemma:tr-L-rho-k} is given as follows.

\begin{proof}[Proof of \cref{lemma:tr-L-rho-k}]
By the definition of trace, we expand the left hand side of \cref{eq:restate-shift-L} as follows.
\begin{equation} \label{eq:expand-tr}
    \tr\rbra*{ \begin{matrix}
    L_{\mathsf{A}_1} \\ I_{\mathsf{A}_2} \\ \vdots \\ I_{\mathsf{A}_k}
\end{matrix} \cdot \textup{Shift}_k \cdot \begin{matrix}
    \rho_{\mathsf{A}_1} \\ \rho_{\mathsf{A}_2} \\ \vdots \\ \rho_{\mathsf{A}_k}
\end{matrix} }
    =
    \sum_{i_1, i_2, \dots, i_k} \begin{matrix}
    \bra{i_1}_{\mathsf{A}_1} \\ \bra{i_2}_{\mathsf{A}_2} \\ \vdots \\ \bra{i_k}_{\mathsf{A}_k}
\end{matrix} \cdot \begin{matrix}
    L_{\mathsf{A}_1} \\ I_{\mathsf{A}_2} \\ \vdots \\ I_{\mathsf{A}_k}
\end{matrix} \cdot \textup{Shift}_k \cdot \begin{matrix}
    \rho_{\mathsf{A}_1} \\ \rho_{\mathsf{A}_2} \\ \vdots \\ \rho_{\mathsf{A}_k}
\end{matrix} \cdot \begin{matrix}
    \ket{i_1}_{\mathsf{A}_1} \\ \ket{i_2}_{\mathsf{A}_2} \\ \vdots \\ \ket{i_k}_{\mathsf{A}_k}
\end{matrix},
\end{equation}
where $i_1, i_2, \dots, i_k$ range over the computational basis of $\mathsf{A}_1, \mathsf{A}_2, \dots, \mathsf{A}_k$, respectively. 
Note that $\textup{Shift}_k$ can be described in the following form:
\begin{equation} \label{eq:def-shift-k}
    \textup{Shift}_k = \sum_{j_1, j_2, \dots, j_k} \ketbra{j_2, \dots, j_k, j_1}{j_1, j_2, \dots, j_k} = \sum_{j_1, j_2, \dots, j_k} \begin{matrix}
        \ketbra{j_2}{j_1}_{\mathsf{A}_1} \\ \ketbra{j_3}{j_2}_{\mathsf{A}_2} \\ \vdots \\ \ketbra{j_1}{j_k}_{\mathsf{A}_k}
    \end{matrix},
\end{equation}
where $j_1, j_2, \dots, j_k$ range over the computational basis of $\mathsf{A}_1, \mathsf{A}_2, \dots, \mathsf{A}_k$, respectively. 
Taking \cref{eq:def-shift-k} into \cref{eq:expand-tr}, we have
\begin{align} 
    \eqref{eq:expand-tr}
    = \sum_{\substack{i_1, i_2, \dots, i_k \\ j_1, j_2, \dots, j_k}} \begin{matrix}
    \bra{i_1}_{\mathsf{A}_1} \\ \bra{i_2}_{\mathsf{A}_2} \\ \vdots \\ \bra{i_k}_{\mathsf{A}_k}
\end{matrix} \cdot \begin{matrix}
    L_{\mathsf{A}_1} \\ I_{\mathsf{A}_2} \\ \vdots \\ I_{\mathsf{A}_k}
\end{matrix} \cdot \begin{matrix}
        \ketbra{j_2}{j_1}_{\mathsf{A}_1} \\ \ketbra{j_3}{j_2}_{\mathsf{A}_2} \\ \vdots \\ \ketbra{j_1}{j_k}_{\mathsf{A}_k}
    \end{matrix} \cdot \begin{matrix}
    \rho_{\mathsf{A}_1} \\ \rho_{\mathsf{A}_2} \\ \vdots \\ \rho_{\mathsf{A}_k}
\end{matrix} \cdot \begin{matrix}
    \ket{i_1}_{\mathsf{A}_1} \\ \ket{i_2}_{\mathsf{A}_2} \\ \vdots \\ \ket{i_k}_{\mathsf{A}_k}
\end{matrix} 
    = \sum_{\substack{i_1, i_2, \dots, i_k \\ j_1, j_2, \dots, j_k}} \begin{matrix}
    \bra{i_1} L \ket{j_2}_{\mathsf{A}_1} \\ \braket{i_2}{j_3}_{\mathsf{A}_2} \\ \vdots \\ \braket{i_k}{j_1}_{\mathsf{A}_k}
\end{matrix} \cdot \begin{matrix}
    \bra{j_1}\rho\ket{i_1}_{\mathsf{A}_1} \\ \bra{j_2}\rho\ket{i_2}_{\mathsf{A}_2} \\ \vdots \\ \bra{j_k}\rho\ket{i_k}_{\mathsf{A}_k}
\end{matrix}. \label{eq:full-expand}
\end{align}
In \cref{eq:full-expand}, we only have to consider the non-zero terms in the summation, in which case $i_2 = j_3$, $i_3 = j_4$, \dots, $i_k = j_1$. 
Then, 
\begin{align}
    \eqref{eq:full-expand}
    & = \sum_{\substack{i_1, i_2, \dots, i_k \\ j_2}} \bra{i_1} L \ket{j_2} \cdot \bra{i_k}\rho\ket{i_1} \cdot \rbra[\Big]{ \bra{j_2}\rho\ket{i_2} \cdot \bra{i_2}\rho\ket{i_3} \cdot \dots \cdot \bra{i_{k-1}}\rho\ket{i_k} } \\
    & = \sum_{\substack{i_1, i_2, \dots, i_k \\ j_2}} \bra{i_1} L \ket{j_2} \cdot \rbra[\Big]{ \bra{j_2}\rho\ket{i_2} \cdot \bra{i_2}\rho\ket{i_3} \cdot \dots \cdot \bra{i_{k-1}}\rho\ket{i_k} } \cdot \bra{i_k}\rho \ket{i_1} \\
    & = \sum_{\substack{i_1, i_2, \dots, i_k \\ j_2}} \bra{i_1} L \cdot \ketbra{j_2}{j_2} \cdot \rho \cdot \ketbra{i_2}{i_2} \cdot \rho \cdot \ketbra{i_3}{i_3} \cdot \dots \cdot \ketbra{i_{k-1}}{i_{k-1}} \cdot \rho \cdot \ketbra{i_k}{i_k} \cdot \rho \cdot \ket{i_1} \\
    & = \sum_{i_1} \bra{i_1} L \cdot \sum_{j_2}\ketbra{j_2}{j_2} \cdot \rho \cdot \underbrace{\sum_{i_2}\ketbra{i_2}{i_2} \cdot \rho \cdot \sum_{i_3}\ketbra{i_3}{i_3} \cdot \dots \cdot \sum_{i_{k-1}} \ketbra{i_{k-1}}{i_{k-1}} \cdot \rho \cdot \sum_{i_k}\ketbra{i_k}{i_k} \cdot \rho}_{k-1} \cdot \ket{i_1} \\
    & = \sum_{i_1} \bra{i_1} L \cdot \underbrace{I \cdot \rho \cdot I \cdot \rho \cdot I \cdots \cdot I \cdot \rho \cdot I \cdot \rho}_{k} {} \cdot \ket{i_1} \\
    & = \sum_{i_1} \bra{i_1} L \rho^{k} \ket{i_1} \\
    & = \tr\rbra*{L\rho^k}. 
\end{align}

\end{proof}

\end{document}